\def\eqref#1{equation~\ref{#1}}
\def\1{\bm{1}}
\DeclareMathAlphabet{\mathsfit}{\encodingdefault}{\sfdefault}{m}{sl}
\SetMathAlphabet{\mathsfit}{bold}{\encodingdefault}{\sfdefault}{bx}{n}
\newcommand{\E}{\mathbb{E}}
\newcommand{\R}{\mathbb{R}}
\newtheorem{theorem}{Theorem}
\newtheorem{definition}{Definition}
\newtheorem{lemma}{Lemma}
\newtheorem{corollary}{Corollary}
\newcommand{\veps}{\varepsilon}
\newcommand{\etal}{{\it et al. }}
\newtheorem{fact}{Fact}
\newcommand{\pr}{\mathbf{Pr}}
\newcommand{\D}{\mathcal{D}}
\newtheorem*{theorem-non}{Theorem}
\newcommand{\nl}{\bot}
\newcommand{\qikpp}{QI-$k$-means++ }
\newcommand{\kmc}{$\mathsf{K}$-$\mathsf{MC}^2 $ }
\newcommand{\afkmc}{$\mathsf{AFK}$-$\mathsf{MC}^2$ }
\newcommand{\rej}{$\mathsf{RejectionSampling}$ }
\newcommand{\Com}{\mathbb{C}}
\title{Quantum (Inspired) $D^2$-Sampling \\ with Applications}
\author{%
  Poojan Chetan Shah and Ragesh Jaiswal\\
  Department of Computer Science and Engineering\\
Indian Institute of Technology Delhi\\
  \texttt{\{cs1221594, rjaiswal\}@cse.iitd.ac.in}
}
\begin{document}

\maketitle

\begin{abstract}
$D^2$-sampling is a fundamental component of sampling-based clustering algorithms such as $k$-means++. 
Given a dataset  $V \subset \mathbb{R}^d$ with $N$ points and a center set $C \subset \mathbb{R}^d$, $D^2$-sampling refers to picking a point from $V$ where the sampling probability of a point is proportional to its squared distance from the nearest center in $C$.
The popular $k$-means++ algorithm is simply a $k$-round $D^2$-sampling process, which runs in $O(Nkd)$ time and gives $O(\log{k})$-approximation in expectation for the $k$-means problem.
In this work, we give a quantum algorithm for (approximate) $D^2$-sampling in the QRAM model that results in a quantum implementation of $k$-means++ with a running time $\tilde{O}(\zeta^2 k^2)$. 
Here $\zeta$ is the aspect ratio ({\it i.e., largest to smallest interpoint distance}) and $\tilde{O}$ hides polylogarithmic factors in $N, d, k$.
It can be shown through a robust approximation analysis of $k$-means++ that the quantum version preserves its $O(\log{k})$ approximation guarantee.
Further, we show that our quantum algorithm for $D^2$-sampling can be {\em dequantized} using the {\em sample-query access} model of ~\cite{tang-thesis}. This results in a fast quantum-inspired classical implementation of $k$-means++, which we call {\em QI-$k$-means++}, with a running time $O(Nd) + \tilde{O}(\zeta^2k^2d)$, where the $O(Nd)$ term is for setting up the sample-query access data structure.
Experimental investigations show promising results for QI-$k$-means++ on large datasets with bounded aspect ratio.
Finally, we use our quantum $D^2$-sampling with the known $ D^2$-sampling-based classical approximation scheme 
to obtain the first quantum approximation scheme for the $k$-means problem with polylogarithmic running time dependence on $N$.
\end{abstract}

\section{Introduction}
Data clustering and the $k$-means problem, in particular, have many applications in data processing. 
The $k$-means problem is defined as: given a set of $N$ points $V = \{v_1, ..., v_N\} \subset \mathbb{R}^d$, and a positive integer $k$, find a set $C \subset \mathbb{R}^d$ of $k$ centers such that the cost function, $$\Phi(V, C) \equiv \sum_{v \in V} \min_{c \in C} D^2(v, c),$$ is minimised. Here, $D(v, c) \equiv \norm{v - c}$ is the Euclidean distance between points $v$ and $c$.
Partitioning the points based on the closest center in the center set $C$ gives a natural clustering of the data points.
Due to its applications in data processing, a lot of work has been done in designing algorithms from theoretical and practical standpoints.
The $k$-means problem is known to be $\mathsf{NP}$-hard, so it is unlikely to have a polynomial time algorithm. 
Much research has been done on designing polynomial time {\em approximation} algorithms for the $k$-means problem. 
However, the algorithm used in practice to solve $k$-means instances is a heuristic, popularly known as the $k$-means algorithm ({\it not to be confused with the $k$-means problem}).
This heuristic, also known as Lloyd's iterations \cite{lloyds}, iteratively improves the solution in several rounds. 
The heuristic starts with an arbitrarily chosen set of $k$ centers. In every iteration, it (i) partitions the points based on the nearest center and (ii) updates the center set to the centroids of the $k$ partitions.
In the classical computational model, it is easy to see that every Lloyd iteration costs $O(Nkd)$ time.
This hill-climbing approach may get stuck at a local minimum or take a huge amount of time to converge, and hence, it does not give provable guarantees on the quality of the final solution or the running time.
In practice, Lloyd's iterations are usually preceded by the $k$-means++ algorithm \cite{av07}, a fast sampling-based approach for picking the initial $k$ centers that also gives an approximation guarantee.
So, Lloyd's iterations, preceded by the $k$-means++ algorithm, give the best of both worlds, theory and practice. 
Hence, it is unsurprising that a lot of work has been done on these two algorithms. 
The work ranges from efficiency improvements in specific settings to implementations in distributed and parallel models.
With the quantum computing revolution imminent, it is natural to talk about quantum versions of these algorithms and quantum algorithms for the $k$-means problem in general.

Early work on the $k$-means problem within the quantum setting involved efficiency gains from quantizing Lloyd's iterations.  
In particular, \cite{ABG} gave an $O(\frac{N^{3/2}}{\sqrt{k}})$ time algorithm for executing a single Lloyd's iteration for the Metric $k$-median clustering problem that is similar to the $k$-means problem.
This was using the quantum minimum finding algorithm of \cite{DH99}. 
Using quantum distance estimation techniques assuming quantum data access, \cite{LMR} gave an $O(kN\log{d})$ time algorithm for the execution of a single Lloyd's iteration for the $k$-means problem. 
More recently, \cite{kllp19} gave an approximate quantization of the $k$-means++ method and Lloyd's iteration assuming {\em QRAM data structure} \cite{kp17} access to the data.
Interestingly, the running time has only a polylogarithmic dependence on the size $N$ of the dataset.
The algorithm uses quantum linear algebra procedures, and hence, there is dependence on certain parameters that appear in such procedures, such as the condition number $\kappa(V)$.
Since Lloyd's iterations do not give an approximation guarantee, its quantum version is also a heuristic without a provable approximation guarantee. 
A quantum implementation of $k$-means++ was also given in \cite{kllp19}. 
Note that $k$-means++ gives an approximation guarantee of $O(\log{k})$ in expectation for the $k$-means problem.
However, this approximation guarantee does not immediately extend to the quantum implementation of $k$-means++ since the quantum procedure introduces {\em $D^2$sampling} errors, and one must show that the approximation analysis is robust against those errors. This was indeed missing from \cite{kllp19}.
$D^2$-sampling is a fundamental component of $k$-means++. 
Given the dataset $V$ and a center set $C$, $D^2$-sampling refers to picking a point from $V$ such that the sampling probability is proportional to its squared distance from the nearest center in $C$. Starting with $C=\{\}$ and $D^2$-sampling and updating $C$ in $k$ rounds is precisely the $k$-means++ algorithm. 
In a recent line of work~\cite{bers20,noisy-kmpp23}, a robust version of $k$-means++ has been analyzed where the $D^2$-sampling probability has a multiplicative $(1\pm\veps)$ error. This is called {\em noisy $k$-means++}, and it is now known \cite{noisy-kmpp23} that this algorithm gives $O(\log{k})$ approximation. This shows that $k$-means++ is indeed robust against small relative $D^2$-sampling errors.
This further means that the quantum implementation of $k$-means++ in \cite{kllp19} gives an approximation guarantee of $O(\log{k})$.

In this work, we design quantum and quantum-inspired classical algorithms for $D^2$-sampling-based clustering algorithms.
We start with $k$-means++.
{\em Quantizing}\footnote{Quantizing is a term that is used to refer to designing a quantum version of an algorithm with a significant efficiency advantage over the corresponding classical algorithm.} $k$-means++ reduces to quantizing the $D^2$-sampling step. 
Most of the ideas for quantizing $D^2$-sampling such as {\em coherent amplitude estimation}, {\em median estimation}, {\em distance estimation}, {\em minimum finding}, etc., have already been developed in previous works on quantum unsupervised learning (see~\cite{wiebe} for examples of several such tools). In particular, \cite{kllp19} used these tools to give a quantum implementation of $D^2$-sampling. We provide a similar quantum algorithm and then combine this with the known $O(\log{k})$ approximation guarantee for noisy $k$-means++ to state the following result.

\begin{theorem}\label{thm:quantum-kmpp}
There is a quantum implementation of $k$-means++ that runs in time $\tilde{O}(\zeta^2 k^2)$  and gives an $O(\log{k})$ factor approximate solution for the $k$-means problem with a probability of at least $0.99$. Here, $\tilde{O}$ hides $\log^2{(Nd)}$ and $\log^2{(kd)}$ terms.\footnote{The output of $k$-means++ is a subset of data points, which can be stated as a subset of indices. If the description of centers is the expected output, the running time will also include a factor of $d$.}
\end{theorem}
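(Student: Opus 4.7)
The plan is to decouple the quantum implementation from the approximation analysis. The paper has already recalled (via \cite{bers20,noisy-kmpp23}) that \emph{noisy} $k$-means++, in which each round's $D^2$-sampling distribution is distorted multiplicatively by a $(1\pm\veps)$ factor per point, still attains expected cost $O(\log k)\cdot \mathrm{OPT}$. So it suffices to (i) design a quantum subroutine $\mathsf{QD}^2\mathsf{Samp}(C)$ which, given QRAM access to $V$ and a center set $C$ with $|C|\le k$, outputs an index $i$ with $\tilde p_i \in (1\pm\veps)\cdot D^2(v_i,C)/\Phi(V,C)$ with high probability, (ii) iterate it $k$ times starting from $C=\emptyset$, and (iii) apply Markov's inequality to upgrade the expected $O(\log k)$-approximation into an approximation that holds with probability $\ge 0.99$ (at the cost of absorbing a universal constant into the $O(\cdot)$).

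For the subroutine I would assemble known quantum primitives in the QRAM model. Starting from $\tfrac{1}{\sqrt{N}}\sum_i \ket{i}$, load $v_i$ from QRAM into an ancilla, and, coherently across all $i$, use quantum inner-product/distance estimation to produce estimates $\widetilde{D^2}(v_i,c_j)$ for every $c_j\in C$; quantum minimum finding over the $k$ center registers then identifies the nearest center, yielding $\widetilde{D^2}(v_i,C)$ in a register entangled with $\ket{i}$. Coherent amplitude estimation combined with median-of-means boosting converts this register into an amplitude approximately proportional to $\sqrt{D^2(v_i,C)}$, and measurement of the index register returns the desired sample. Choosing the precision of each primitive to be $O(\veps/\log k)$ and each internal failure probability to be $1/\mathrm{poly}(k)$ permits a union bound across the $k$ sampling rounds and all internal calls so that the global success probability remains at least $0.99$.

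The main obstacle, and the source of the $\zeta^2$ factor, is the mismatch between the precision that quantum distance estimation naturally provides and the precision required by noisy $k$-means++. Distance estimation returns an additive error proportional to $\|v_i\|\,\|c_j\|$, whereas the sampling distribution is controlled by the ratio of $D^2(v_i,C)$ to $\Phi(V,C)$, in which the numerator can be as small as the square of the minimum interpoint distance. Translating the additive guarantee into a multiplicative one therefore forces the internal precision parameter $\eta$ to scale like $1/\zeta$. Since $\eta$ enters the cost of amplitude estimation linearly and the cost of distance estimation quadratically, this produces the $\zeta^2$ overhead; absorbing it into the per-round cost and multiplying by the $k$ outer rounds together with the coherent work across up to $k$ centers yields $\tilde O(\zeta^2 k^2)$, with the $\mathrm{polylog}(N,d,k)$ factors coming from QRAM access, amplitude estimation, and the union-bound boosting.
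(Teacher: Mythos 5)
Your high-level plan is the same as the paper's: implement an approximate quantum $D^2$-sampling subroutine via QRAM access, swap test, coherent amplitude estimation, median estimation, and minimum finding; iterate it $k$ times; and then invoke the noisy-$k$-means++ guarantee of \cite{noisy-kmpp23} to conclude that the perturbed sampling distribution still yields an $O(\log k)$-approximation. The Markov step you add to move from "$O(\log k)$ in expectation" to "$O(\log k)$ with probability $\ge 0.99$" is a reasonable explicitation of something the paper handles more loosely (the paper's $0.99$ is really the success probability of all $k$ quantum sampling steps, with the approximation ratio still stated in expectation conditional on that event), so I would call that a small but genuine improvement in precision rather than a gap.

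Where you diverge from the paper, and where I think your argument is incomplete, is in how the rejection sampling is normalized and in where the $\zeta^2$ actually comes from. To prepare the state $\tfrac{1}{\sqrt{N}}\sum_i \ket{i}\left(\beta_i\ket{0}+\sqrt{1-\beta_i^2}\ket{1}\right)$ with a constant acceptance probability you must set $\beta_i \approx \tilde D(v_i,C)/\sqrt{2\,\tilde\Phi(V,C)}$, which requires a $(1\pm\veps)$-accurate estimate of the current clustering cost $\Phi(V,C)$; without that, the only safe normalizer is something like $N\zeta^2$, and then the acceptance probability collapses to $\Theta(1/(N\zeta^2))$, destroying the polylogarithmic dependence on $N$. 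The paper obtains this cost estimate by measuring $m=\Theta\!\left(\tfrac{\zeta^2\ln N}{\veps^2}\right)$ copies of $\ket{\Psi_C}$ and applying a Hoeffding bound to the squared measured distances, each of which lies in $[1,\zeta^2]$; that Hoeffding step is the source of the $\zeta^2$ in the theorem, and each $D^2$-sample in round $t$ then costs $\tilde O(\zeta^2 t)$, summing to $\tilde O(\zeta^2 k^2)$. Your explanation of the $\zeta^2$ — converting an additive distance-estimation error of order $\|v_i\|\|c_j\|$ into a multiplicative one — is also a real concern hidden inside \cite{kllp19}'s lemma, but the paper cites that lemma already in multiplicative form, so as written your accounting does not match the paper's and, more importantly, your sketch omits the cost-estimation subroutine and the rejection step on the ancilla qubit that together make the sampler correct and efficient. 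I would fill in that the subroutine: (a) estimates $\tilde\Phi(V,C)$ to multiplicative accuracy $(1\pm\veps)$ via Hoeffding with $O(\zeta^2\log N/\veps^2)$ measurements of $\ket{\Psi_C}$, (b) uses a controlled rotation with this normalizer to encode $\tilde D(v_i,C)$ as an amplitude, and (c) measures both the ancilla and the index register, accepting only on ancilla outcome $\ket{0}$, which happens with probability $\Theta(1)$ given the accurate normalizer.
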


{\bf Dependence on aspect ratio $\zeta$}: Note that the running time depends on the aspect ratio $\zeta$, which is defined to be the ratio of the maximum to the minimum interpoint distance. For the remaining discussion, it is important to understand that the $k$-means problem remains $\mathsf{NP}$-hard to approximate (beyond a fixed constant) even when restricted to instances where $\zeta$ upper bounded by a small constant. See \cite{acks15} for such an hardness reduction.\footnote{The hardness of approximation argument in \cite{acks15} goes through the vertex cover problem. For a given  graph $G = (V, E)$, the $k$-means instance includes $|E|$ points in a $|V|$-dimensional space. The point corresponding to an edge $(i, j)$ has $1$ in coordinates $i, j$ and $0$ everywhere else. It is clear that $\zeta = \sqrt{2}$ in this case.}

The above result is another addition to the growing area of {\em Quantum Machine Learning (QML)} where the goal is to design quantum algorithms for machine learning problems with large speed-ups over their classical counterparts. 
There was a flurry of activity in QML where significant speedups were shown for problems such as recommendation systems~\cite{kp17}. 
However, it was not clear whether such speedups could entirely be attributed to the quantum data access model (QRAM) upon which these works were built. The works of \cite{tang-2,tang-3,tang-thesis} gave much clarity on this aspect. These works showed that similar speedups (up to some polynomial factors) are achievable in the classical model using a classical counterpart of QRAM, now popularly known as the {\em sample-query (SQ) access data structure}.
In other words, most QML algorithms could be {\em dequantized}. 
Interestingly, by setting up the sample-query access data structure, the dequantized algorithms give rise to {\em quantum-inspired} classical algorithms. 
The interesting property of such a quantum-inspired algorithm is that apart from the linear time spent on setting up the sample-query access data structure, which can be thought of as preprocessing time, the algorithm is extremely fast (otherwise, it would not have given the quantum speedup).
Our second contribution is to dequantize $D^2$-sampling using Tang's SQ model. 
This turns out to be simple since the sample-query access model primarily involves sampling vectors based on squared $\ell_2$ norms, which is how $D^2$-sampling also works. Based on this, we give a quantum-inspired classical implementation of $k$-means++. 
The following theorem formally states our main result on this. 

\begin{theorem}\label{thm:qikmpp}
There is a classical implementation of $k$-means++ ({\em which we call QI-$k$-means++}) that runs in time $O(Nd) + {O}(\zeta^2 k^2 d \log k \log Nd)$.  With probability at least $0.99$, QI-$k$-means++ outputs $k$ centers that give $O(\log{k})$-approximation in expectation for the $k$-means problem.
\end{theorem}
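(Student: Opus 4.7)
The plan is to dequantize the quantum $D^2$-sampling subroutine underlying Theorem~\ref{thm:quantum-kmpp} using Tang's sample–query (SQ) access model \cite{tang-thesis}. I would first instantiate the SQ data structure on $V$ in $O(Nd)$ preprocessing time, after which, in $\tilde O(1)$ time per call, one can (i) sample an index $i$ with probability $\|v_i\|^2/\|V\|_F^2$, (ii) read any entry $(v_i)_\ell$, and (iii) return $\|v_i\|^2$ and $\|V\|_F^2$. Draw the first center $c_1$ uniformly at random from $V$ (trivial $O(1)$), then rebuild an auxiliary SQ structure on the shifted dataset $\{v - c_1 : v \in V\}$ in one further $O(Nd)$ pass. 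Together these account for the additive $O(Nd)$ term.

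For each subsequent round $i \in \{2, \ldots, k\}$ of $k$-means++, given the current center set $C_i$ (which always contains $c_1$), I would produce a $D^2$-sample by classical rejection sampling: draw a candidate $v$ from the SQ-induced proposal $\Pr[v] \propto \|v - c_1\|^2$, compute $D^2(v, C_i) = \min_{c \in C_i}\|v - c\|^2$ exactly in $O(kd)$ time, and accept with probability $D^2(v, C_i)/\|v - c_1\|^2 \in [0,1]$, the upper bound holding because $c_1 \in C_i$. Conditional on acceptance, $v$ has exactly the $D^2$-sampling distribution, so the joint law of the returned $k$ centers coincides with classical $k$-means++ and the $O(\log k)$-approximation of \cite{av07} applies without modification.

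For the runtime, the expected number of rejection trials per round is $\Phi(V, \{c_1\})/\Phi(V, C_i)$, and I would bound this by $O(\zeta^2)$ via $\Phi(V, \{c_1\}) \le N\, d_{\max}^2$ (since $c_1 \in V$) and $\Phi(V, C_i) \ge (N-k)\, d_{\min}^2$ (the at least $N - k$ non-center points lie at squared distance $\ge d_{\min}^2$ from $C_i$). A standard geometric-tail repetition argument boosts the per-round success probability to $1 - 1/(100k)$ with only a $\log k$ factor overhead, giving $\tilde O(\zeta^2 k d)$ per round and $\tilde O(\zeta^2 k^2 d)$ across the $k$ rounds; a union bound over rounds yields overall success probability at least $0.99$.

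\textbf{Main obstacle.} The key technical point is bounding the rejection overhead by $O(\zeta^2)$. A naive proposal $\Pr[v] \propto \|v\|^2$ gives an unbounded acceptance ratio whenever the origin happens to lie near a data point; the shift-to-$c_1$ trick above is what makes the ratio clean and aspect-ratio controlled, and is the essential classical counterpart of the amplitude-estimation precision used in the quantum algorithm. A secondary subtlety, should one wish to further speed up each trial using SQ-based approximate inner products instead of an exact $O(kd)$ distance computation, is that the resulting $(1 \pm \varepsilon)$ multiplicative sampling error forces invoking the noisy $k$-means++ analysis of \cite{noisy-kmpp23} in place of \cite{av07}; this is clean but does not improve the headline $\tilde O(\zeta^2 k^2 d)$ bound stated in Theorem~\ref{thm:qikmpp}.
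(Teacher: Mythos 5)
Your proof is correct, and it takes a genuinely different route from the paper's. The paper works inside Tang's \emph{oversampling-and-query} (OSQ) framework: it builds, for each center $c_j$, an OSQ-accessible overestimate $\tilde u_j(i)=\sqrt{2(\|\tilde V(i,\cdot)\|^2+\|\tilde c_j\|^2)}$ of $\|v_i-c_j\|$ via the triangle inequality (Lemma~\ref{lemma:sq-1}), combines these into an OSQ-accessible overestimate of $\vec w(i)=\min_j\|v_i-c_j\|$ (Lemma~\ref{lemma:sq-2}), and then rejection-samples through Tang's Lemma 4.5, bounding the oversampling factor $\phi$ by $O(\zeta^2)$. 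You bypass that machinery entirely: you build a second SQ structure directly on the shifted vectors $\{v-c_1\}$, so the proposal $\Pr[v]\propto\|v-c_1\|^2$ is an \emph{exact} SQ distribution rather than an oversampled one, and the acceptance ratio $D^2(v,C_i)/\|v-c_1\|^2\le 1$ is automatic from $c_1\in C_i$. Your rejection-overhead bound $\Phi(V,\{c_1\})/\Phi(V,C_i)\le \frac{N}{N-k}\zeta^2$ is the same order as the paper's $\phi\le 8\zeta^2$ but with a cleaner constant. The one thing you pay for is the extra $O(Nd)$ pass to build the shifted data structure; the paper avoids this by expressing its proposal in terms of the already-available $SQ(V)$ and $SQ(c_j)$, though it needs the same origin-shift trick (centering at a data point) to make its $\phi$ bound go through. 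Asymptotically the two are identical, and both conclude via exact $D^2$-sampling (failure mode being ``no sample returned'') plus the Arthur--Vassilvitskii $O(\log k)$ guarantee, with a union bound over the $k$ rounds. Your version is arguably more elementary and self-contained, at the cost of being less of a verbatim ``dequantization'' of the paper's quantum subroutine; the paper's version has the virtue of exercising the general OSQ closure lemmas that are also reused in its approximate (sublinear-in-$d$) variant.

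One small point worth making explicit if you write this up: when $v=c_1$ the acceptance ratio is formally $0/0$, but the proposal assigns probability zero to $c_1$ (its shifted norm is zero), so this case never arises; stating that avoids any ambiguity.
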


Note that the $O(Nd)$ term in the running time is for setting up the SQ data structure, which can be thought of as a preprocessing operation. This setting of the data structure can easily be parallelized on a multi-core processor.
Once the data structure is available, the running time is sublinear in $N$. The linear dependence on the dimension $d$ can also be changed to sublinear at the cost of worsening the dependence on $\zeta$. This is through implementing a small relative error $D^2$-sampling procedure, which gives an implementation of the noisy $k$-means++ , which we know also gives $O(\log{k})$-approximation. Since the aspect ratio $\zeta$ can be efficiently computed in advance \cite{closestpair}; \cite{randomclosestpair}, we can appropriately choose between the algorithms in Theorem~\ref{thm:qikmpp} and Theorem~\ref{thm:noisy-qikmpp}. We state this result formally below. 

\begin{theorem}\label{thm:noisy-qikmpp}
There is a classical implementation of noisy $k$-means++ ({\em which we call QI-noisy-$k$-means++}) that runs in time  $O(Nd) + {O}(\zeta^6 k^2 \log k \log Nd)$.With probability at least $0.99$, QI-noisy-$k$-means++ outputs $k$ centers that give $O(\log{k})$-approximation in expectation for the $k$-means problem.
\end{theorem}

Finally, we design a quantum approximation scheme for the $k$-means problem with polylogarithmic dependence on the size $N$ of the data set.
We do this by quantizing the highly parallel, $D^2$-sampling-based approximation scheme of \cite{bgjk20}. 
An approximation scheme is an algorithm that, in addition to the dataset and $k$, takes an error parameter $\veps>0$ as input and outputs a solution with a cost within $(1+\veps)$ factor of the optimal. 
The $k$-means problem is $\mathsf{NP}$-hard to efficiently approximate beyond an approximation factor of $1.07$~\cite{cak19}.
The tradeoff in obtaining this fine-grained approximation of $(1+\veps)$ is that the running time of our algorithm has an exponential dependence on $k$ and error parameter $\veps$. 
In the classical setting, such algorithms are categorized as Fixed Parameter Approximation Schemes (fpt-AS).
Such $(1+\veps)$-approximation algorithms can have exponential running time dependence on the {\em parameter} (e.g., the number of clusters $k$ in our setting).
The practical motivation for studying Fixed-Parameter Tractability (FPT) for computationally hard problems is that when the parameter is small (e.g., number of clusters $k \sim 5$), the running time is not prohibitively large.
We state our main result as the following theorem, which we will prove in the remainder of the paper.

\begin{theorem}\label{thm:main}
Let $0 < \veps < 1/2$ be the error parameter.
There is a quantum algorithm that, when given QRAM data structure access to a dataset $V \in \mathbb{R}^{N \times d}$, runs in time $\tilde{O} \left(2^{\tilde{O}(\frac{k}{\veps})} d \zeta^{O(1)} \right)$ and outputs a $k$ center set $C \in \mathbb{R}^{k \times d}$ such that with high probability $\Phi(V, C) \leq (1 + \veps) \cdot OPT$. Here, $\zeta$ is the aspect ratio, i.e., the ratio of the maximum to the minimum distance between two given points in $V$.\footnote{The $\tilde{O}$ notation hides logarithmic factors in $N$. In the exponent, it  hides logarithmic factors in $k$ and $1/\veps$.}
\end{theorem}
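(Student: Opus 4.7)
The plan is to quantize the highly parallel, $D^2$-sampling-based fixed-parameter approximation scheme of \cite{bgjk20}. That classical scheme builds a rooted tree of partial center sets of depth $k$: at each level it draws $O(1/\veps)$ new samples via $D^2$-sampling and branches over all $(1/\veps)^{O(1/\veps)}$ ways of combining those samples (with already-picked centers) into the next candidate centroid. One of the resulting $2^{\tilde O(k/\veps)}$ root-to-leaf paths is guaranteed to yield a $(1+\veps)$-approximate $k$-center set with constant probability. My first step would be to show that this existence guarantee is robust to relative error in each $D^2$-sampling call: by setting the per-call multiplicative error to $\Theta(\veps/k)$ and applying a level-by-level analysis in the spirit of the noisy $k$-means++ argument already invoked for Theorem~\ref{thm:quantum-kmpp}, the cumulative distortion along any single root-to-leaf path can be kept below $(1+\veps)$.

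Next, I would replace every classical $D^2$-sampling invocation by the quantum $D^2$-sampling subroutine underlying Theorem~\ref{thm:quantum-kmpp}, which, given QRAM access to the dataset and to the current center set, returns a sample from a relative-$\veps'$ approximation of the true $D^2$ distribution in time $\tilde O(\zeta^2)$. Because each partial center along a branch is the centroid of an $O(1/\veps)$-size subset of already sampled points, it can be written to an auxiliary QRAM in time $\tilde O(d)$ and used in subsequent sampling steps (the $O(\log N)$ update property of the QRAM data structure of \cite{kp17} is crucial here). Traversing the entire tree therefore enumerates all $2^{\tilde O(k/\veps)}$ candidate $k$-center sets in total time $2^{\tilde O(k/\veps)} \cdot \tilde O(d\, \zeta^2)$.

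The third step is to select the best candidate without paying $\Omega(N)$ per candidate. I would apply a quantum cost-estimation routine that, given QRAM access to $V$ and an explicit $k$-center set $C$, produces a $(1\pm \veps/4)$ multiplicative estimate of $\Phi(V,C)$ in time $\tilde O(k d \zeta^2)$ with only polylogarithmic dependence on $N$. This combines the same primitives used inside $D^2$-sampling, namely per-point quantum distance estimation, minimum-finding over the $k$ centers, and coherent amplitude estimation over the uniform superposition of data indices, boosted by the median-of-means trick to obtain high success probability. A final (quantum or classical) minimum-finding over the $2^{\tilde O(k/\veps)}$ estimates returns the winning center set, and $O(\log(1/\delta))$ independent repetitions of the overall scheme boost the success probability of emitting a $(1+\veps)$-approximate leaf to the claimed ``high probability'' level. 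The total running time matches the target $\tilde O(2^{\tilde O(k/\veps)} d \zeta^2)$.

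The main obstacle I expect is the robustness argument: pushing the relative-error quantum $D^2$-sampling guarantee through the $k$-level branching structure of \cite{bgjk20} requires showing that the per-level errors do not compound in a way that destroys the inductive cost bound used in the original proof, and that the failure probabilities over the $O(k/\veps)$ quantum calls on a single path can be union-bounded cleanly (this motivates the $\veps'=\Theta(\veps/k)$ choice, whose overhead is absorbed into $\tilde O$). A secondary technical point is calibrating the quantum cost estimator's relative error so that it cannot confuse a genuine $(1+\veps)$-optimal leaf with a $(1+2\veps)$-suboptimal one; this is handled by setting the estimation error to a small constant fraction of $\veps$ and paying the corresponding $\mathrm{poly}(1/\veps)$ overhead, which again hides inside the $\tilde O$. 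The $\zeta^2$ factor is inherited from the quantum $D^2$-sampling and cost-estimation primitives, where it compensates for amplitudes that scale with the ratio between the largest and smallest squared interpoint distances.
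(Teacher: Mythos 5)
Your overall plan --- quantize $D^2$-sampling, enumerate a $2^{\tilde O(k/\veps)}$-size list of candidate $k$-center sets, quantum-estimate each cost, and take the minimum --- matches the paper's strategy. However, you misdescribe the structure of the classical scheme of~\cite{bgjk20} in a way that materially changes the robustness argument, which is the technical heart of the proof.

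The algorithm of~\cite{bgjk20} is \emph{not} a depth-$k$ tree that samples $O(1/\veps)$ new points at each level and branches on them. It is a one-shot scheme: first compute a constant-factor pseudoapproximation $C_{init}$ (via $D^2$-sampling $2k$ points), then $D^2$-sample a single multi-set $M$ of $\rho k = \mathrm{poly}(k/\veps)$ points \emph{all with respect to the fixed center set $C_{init}$}, append $\tau k$ copies of each point of $C_{init}$ to $M$, and finally enumerate all $k$-tuples of disjoint size-$\tau$ subsets of $M$, taking the tuple of centroids with least cost. The exponential $2^{\tilde O(k/\veps)}$ comes from the subset enumeration over $M$, not from root-to-leaf branching. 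Because the $D^2$-sampling that produces $M$ is with respect to a fixed $C_{init}$, there is no compounding of sampling error across ``levels,'' and consequently there is no need for a per-call relative error of $\Theta(\veps/k)$. The paper proves (Theorem~\ref{thm:bgjke} and the supporting lemmas, e.g., Lemma~\ref{lem:osample}) that any distance function $\D$ that is $\delta$-close to $D$ for a fixed \emph{constant} $\delta < 1/2$ suffices: the only effect is that the margin $\gamma$ controlling how often a far point gets sampled degrades by the constant factor $(1-\delta)^4$, which is absorbed by increasing $\rho$ by an $O(1)$ factor. Likewise, the pseudoapproximation step tolerates constant relative error (Lemma~\ref{lem:kmppe}). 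Your proposed level-by-level / compounding analysis is thus solving the wrong problem, and your $\veps' = \Theta(\veps/k)$ budget, while it would not break the stated running time (the exponential term dominates), reflects a misunderstanding of why the scheme is robust.

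The remaining pieces of your sketch (quantum $D^2$-sampling with $\tilde O(\zeta^2)$ cost, writing centroids back into QRAM in $\tilde O(d)$, Hoeffding/amplitude-estimation-based cost estimation in time $\tilde O(k\zeta^2)$ per candidate with union bound over the list, skipping Grover since it only saves a constant in the exponent) all track the paper's Lemmas~\ref{lemma:minC},~\ref{lemma:estimate},~\ref{lem:pickC}, and~\ref{lem:sampling}. But the proof is not complete until you actually rework the~\cite{bgjk20} analysis under $\delta$-close distances; the paper does this by carefully reproving the case analysis around $\Phi(C_{init},X_i)$ large vs.\ small and showing that the near/far decomposition and Inaba sampling lemma still go through. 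Gesturing at ``noisy $k$-means++ in the spirit of'' does not supply that argument, because the noisy-$k$-means++ analysis of~\cite{noisy-kmpp23} addresses the iterative seeding step, not the single-batch sampling plus enumeration that drives the $(1+\veps)$ guarantee.
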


We end on the note that the above quantum approximation scheme can also be dequantized in the sample-query-access model of \cite{tang-thesis}.

\subsection{Comparision with Previous Work}

There have been several fast classical implementations of $k$-means++. These implementations are fast under certain assumptions on the data set or achieved through trading efficiency with an approximation guarantee or both. \cite{blhk16a} gave a Monte Carlo Markov Chain (MCMC) based implementation of $k$-means++. However, this algorithm, called $\mathsf{K}$-$\mathsf{MC}^2$, preserves the $O(\log{k})$ approximation guarantee of $k$-means++ only when the dataset satisfies some strong properties that are $\mathsf{NP}$-hard to check. The MCMC-based algorithm was improved by the same authors in \cite{blhk16b}, but the  approximation guarantee of this algorithm had an additive term (in addition to the multiplicative $O(\log{k})$ factor), which could possibly be large.  More recently, a fast implementation based on the multi-tree embedding of the data was given by \cite{alnss20}

\begin{table}[h!]
\centering
\footnotesize

\renewcommand{\arraystretch}{1.6}
\begin{center}

\begin{tabular}{ @{} p{2cm} p{4cm} p{3.5cm} p{3cm} @{} }
\toprule
\textbf{Algorithm} & \centering \textbf{Time Complexity} & \textbf{Approx.  Guarantee} & \textbf{Key Parameters} \\ 
\midrule
QI-$k$-means++ (This Work) & \centering $O(Nd) + O(\zeta^2k^2 d \log k \log Nd)$ & $O(\log k) \Phi^k_{OPT}$ &  $\zeta$ is the aspect ratio  \\ \hline
$\mathsf{K}$-$\mathsf{MC}^2$ \cite{blhk16a} & \centering $O(\gamma k^2 d \log k \beta)$ , $^\dagger O( k^3 d \log^ 2N \log k \beta)$ & $O(\log k) \Phi^k_{OPT}$ &$ \gamma =   4 \frac{ \max_{x \in V} D^2(x,\mu)}{\Phi^k_{OPT}} $ 

$\mu$ is the dataset mean. \\ \hline
$\mathsf{AFK}$-$\mathsf{MC}^2$ \cite{blhk16b} & \centering $O(Nd) + O\left(\frac{1}{\varepsilon}k^2 d \log \frac{k}{\varepsilon}\right)$ , $^\dagger O(Nd) + O\left(k^3 d \log k \right)$ & $O(\log k) \Phi^k_{OPT} + \varepsilon \Phi^1_{OPT}$  & $\veps \in (0,1)$ controls runtime - solution quality tradeoff \\ \hline

$\mathsf{RejectionSampling}$ \cite{alnss20} &  \centering$O(N(d + \log N)\log \zeta d) + O\left(kc^2 d^3 \log \zeta (N \log \zeta)^{O(1/c^2)} \right)$ & $O(c^6 \log k) \Phi^k_{OPT}$ & $\zeta$ is the aspect ratio. $c > 1$ controls runtime - solution quality tradeoff  \\ 
\bottomrule
\end{tabular}
\end{center}
\caption{\centering \footnotesize Comparison of fast implementations of $k$-means++. $\Phi^k_{OPT}$ represents the optimal $k$-means cost of the dataset.  $\dagger$ represents the runtime under the assumptions described in \cite{blhk16a}.}

\label{tab:compare}
\end{table}
Let us now compare our algorithm with \kmc, \afkmc, and \rej. 

\noindent
\textbf{Pre-processing}. \qikpp  takes $O(Nd)$ to setup the SQ data structures. \afkmc and \rej also require similar pre-processing cost for computing the initial distribution of the markov chain and for performing the multi-tree embedding respectively. An advantage of the SQ data structure is that it is very efficient to update, i.e., add or
delete a point which costs $O(\log N)$ time. This is useful in scenarios where new data points get added periodically, and one needs to recluster the data. \\ 
\textbf{Runtime.}  \kmc, \afkmc have runtime which is sublinear in $N$ under certain assumptions.  Their runtimes depend on the parameters $\gamma$ and $\veps$ respectively, which can be thought of as being analogous to $\zeta$. It is important to note that computing an estimate of $\gamma$ involves solving the $k$-means problem itself, while $\zeta$ is efficiently computable. Even under assumptions, \qikpp has a better dependence on $k$ ($k^2$ vs $k^3$).  We see that \qikpp and \rej have a tradeoff between $N$ ($\log N$ vs $N^{O(1/c^2)}$) and $\zeta$ ($\zeta^2$ vs $(\log \zeta)^{1 + O(1/c^2)}$). Moreover, \rej has better dependence on $k$ ($k^2$ vs $k$) but worse dependence on $d$ ($d$ vs $d^3$). 
%

So, in cases where the aspect ratio is not large (e.g., binarized data such as MNIST \cite{deng2012mnist}) and $k$ is reasonably small, our implementation should be expected to run fast and give good solutions. Note that unlike  \rej and \afkmc , our results have no tradeoff between efficiency and approximation ratio. Even though our work is mainly theoretical in nature, we conduct some experimental investigations to see the running time advantage of QI-$k$-means++ over $k$-means++ (the classical implementation).

\subsection{Related Work}
We have already discussed past research works on quantum versions of the $k$-means algorithm (i.e., Lloyd's iterations). 
This includes \cite{ABG}, \cite{LMR}, and \cite{kllp19}. 
All these have been built using various quantum tools and techniques developed for various problems in quantum unsupervised learning, such as coherent amplitude and median estimation, distance estimation, minimum finding, etc. See \cite{wiebe} for examples of several such tools. 
We have also discussed previous works on fast implementations of the $k$-means++ algorithm.
Other directions on quantum $k$-means includes {\em adiabatic} algorithms (e.g., \cite{LMR}) and algorithms using the {\em QAOA} framework (e.g., \cite{otterbach17,FGG14}). However, these are without provable guarantees. 
A recent work of Doriguello et al.~\cite{q-means} improves the results of \cite{kllp19} on Lloyd's iterations and obtains a quantum algorithm with better running time dependence and removes the dependence on the data matrix dependent parameters such as the condition number. 
They also dequantize their algorithm in the sample-query access model of Tang~\cite{tang-thesis}.
However, since Lloyd's iterations do not give any approximation guarantee, their algorithms also do not have a provable approximation guarantee.
In another recent work~\cite{desq}, the quantum algorithm of \cite{kllp19} is used in a supervised setting to construct decision trees.

\section{Quantum (Inspired) $D^2$-Sampling}
In this section, we discuss the key ideas in the design of our quantum algorithm for $D^2$-sampling and its dequantization in the sample-query (SQ) access model of Tang~\cite{tang-thesis}. Let us first look at the main ideas of our quantum algorithm in the following subsection.

\subsection{Quantum $D^2$-sampling}
We will work under the assumption that the minimum distance between two data points is $1$, which can be achieved using scaling. This makes the aspect ratio $\zeta$ simply the maximum distance between two data points.
We will use $i$ for an index into the rows of the data matrix $V \in \mathbb{R}^{N \times d}$, and $j$ for an index into the rows of the center matrix $C \in \mathbb{R}^{m \times d}$. 
We would ideally like to design a quantum algorithm that performs the transformation: 
\begin{equation}\label{eqn:q1}
\ket{i}\ket{j}\ket{0} \rightarrow \ket{i} \ket{j}\ket{{D(v_i, c_j)}}
\end{equation}
The known quantum tools such as {\em swap test} followed by {\em coherent amplitude estimation}, and {\em median estimation} allow one to obtain a state that is an approximation of the ideal state that is shown on the right in (\ref{eqn:q1}). Moreover, this can be done in time that has only polylogarithmic dependence on the input size.
The details are provided in the Appendix. For the current high-level discussion, we will assume that the ideal state can be prepared.
We would like to perform sampling from the $D^2$ distribution over $V$ with respect to the center set $C$. This means that we would like to sample an index $i \in [N]$ with probability:
\begin{equation}\label{eqn:q2}
\pr[i] = \frac{\min_{j \in [m]}{D(v_i, c_j)^2}}{\sum_{i' \in [N]} \min_{j \in [m]}{D(v_{i'}, c_j)^2}}
\end{equation}
So, if we can prepare the state:
\[
\frac{1}{\sqrt{N}} \sum_{i \in [N]} \ket{i}\ket{\min_{j \in [m]}D(v_i, c_j)},
\]
then we can obtain a good estimate of the denominator of the right expression in (\ref{eqn:q2}), which is basically the $k$-means cost with respect to the current center set $C$. This follows from the fact that measuring the second register of the above state gives the distance of a random data point to its closest center in $C$. Repeatedly preparing the state and measuring helps obtain a close estimate of the $k$-means cost. Now that the $k$-means cost (i.e., the denominator of (\ref{eqn:q2})) is known, using a few more standard quantum transformations (controlled rotations), we can pull out the state of the second register as part of the amplitude of an ancilla qubit. In other words, an approximate version of the following quantum state can be prepared: 
\[
\frac{1}{\sqrt{N}} \sum_{i \in [N]} \ket{i} \left( \beta_i \ket{0} + \sqrt{1-|\beta_i|^2} \ket{1}\right),
\]
where $\beta_i = \frac{1}{Z}\min_{j \in [m]} D(v_i, c_j)$ and $Z$ is some appropriate normalization term. So, by measuring the above state and doing rejection sampling (ignoring the measurement when the ancilla qubit is $1$), we obtain a $D^2$-sample. 
Some of the quantum states shown above are 'ideal' quantum states that we have used to convey the main ideas of the quantum algorithm. The real states will be close approximations of these states, and we need to carefully account for all the errors introduced. All the details are given in the Appendix. Finally, note that this also gives a quantum implementation of the $k$-means++ algorithm since the algorithm simply performs $D^2$-sampling in $k$ iterations while updating the center set.

In the following subsection, we discuss how to dequantize this quantum algorithm in the sample-query access model of Tang~\cite{tang-thesis}. This gives a quantum-inspired classical algorithm for $D^2$-sampling.

\subsection{Quantum inspired $D^2$-sampling}
Ewin Tang's thesis~\cite{tang-thesis} nicely categorizes quantum machine learning (QML) algorithms into ones where the quantum advantage (i.e., polylogarithmic running time) is solely because of the quantum access to the data and the ones where it is not (e.g., the HHL algorithm~\cite{hhl}).
Quantum access means that a (normalized) data vector $v \coloneqq (v(1), ..., v(d))^T$ can be loaded onto the quantum workspace in $O(1)$ time as $\ket{v} = v(1) \ket{1} + ... + v(d) \ket{d}$.
One implication of having the quantum state $\ket{v}$ is that $i$ can be sampled with probability $\frac{v(i)^2}{\sum_{j} v(j)^2}$ by simply making a measurement in the standard basis.
One of the key insights in Tang's Thesis~\cite{tang-thesis} is that QML algorithms that benefit solely from the quantum data access can be `dequantized' if we work with an appropriate classical counterpart of the quantum data access that allows {\em sampling access} of the kind described above (in addition to the classical access to the data). 
This is called {\em sample-query access}, or SQ access in short.
Dequantization means that a classical algorithm can simulate the steps of the quantum algorithm using SQ access with similar running time dependence.
The nice property of the SQ access data structures is that it can be constructed classically in linear time. 
Note that linear time means that we lose the quantum advantage. 
However, if we keep this aside and count the SQ access data structure construction as preprocessing, the remaining computation has a similar advantage as that of the quantum algorithm. 
This is a level playing field with the QML algorithm that allows fair comparison, specifically given that setting up quantum access (called QRAM) also takes linear time.
Such classical algorithms obtained by dequantization are called {\em quantum inspired} algorithms.
Our quantum algorithms based on $D^2$-sampling fall into the category of QML algorithms that can be dequantized in Tang's SQ access model. 
This section discusses the quantum-inspired classical algorithm we obtain by dequantizing our quantum algorithm using the SQ access data structure.
Naturally, we start the discussion by looking at the definition of the SQ access data structure.

\begin{definition}[Query access, Definition 1.1 in \cite{tang-thesis}]
For a vector $\vec{v} \in \mathbb{C}^n$, we have query access to $\vec{v}$, denoted by $Q(\vec{v})$, if for all $i \in [n]$, we can query for $\vec{v}(i)$. We use $\mathbf{q}(\vec{v})$ to denote the time (cost) of such a query.
\end{definition}

\begin{definition}[SQ-access to a vector, Definition 1.2 in \cite{tang-thesis}]
For a vector $\vec{v} \in \mathbb{C}^n$, we have sampling and query access to $\vec{v}$, denoted by $SQ(\vec{v})$, if we can:
\begin{enumerate}
\item query for entries in $\vec{v}$ as in $Q(\vec{v})$. The time cost is denoted by $\mathbf{q}(\vec{v})$.
\item obtain independent samples $i \in [n]$ where the probability of sampling $i$ is $\frac{\vec{v}(i)^2}{\norm{\vec{v}}^2}$. This distribution is denoted by $\D_{\vec{v}}$.
The time cost is denoted by $\mathbf{s}(\vec{v})$.
\item query for $\norm{\vec{v}}$. The time cost is denoted by $\mathbf{n}(\vec{v})$.
\end{enumerate}
Let $\mathbf{sq}(\vec{v}) \equiv \max{\{\mathbf{q}(\vec{v}), \mathbf{s}(\vec{v}), \mathbf{n}(\vec{v}) \}}$ denote the time cost of SQ-access.
\end{definition}

\begin{figure}
\centering
\includegraphics[scale=0.1]{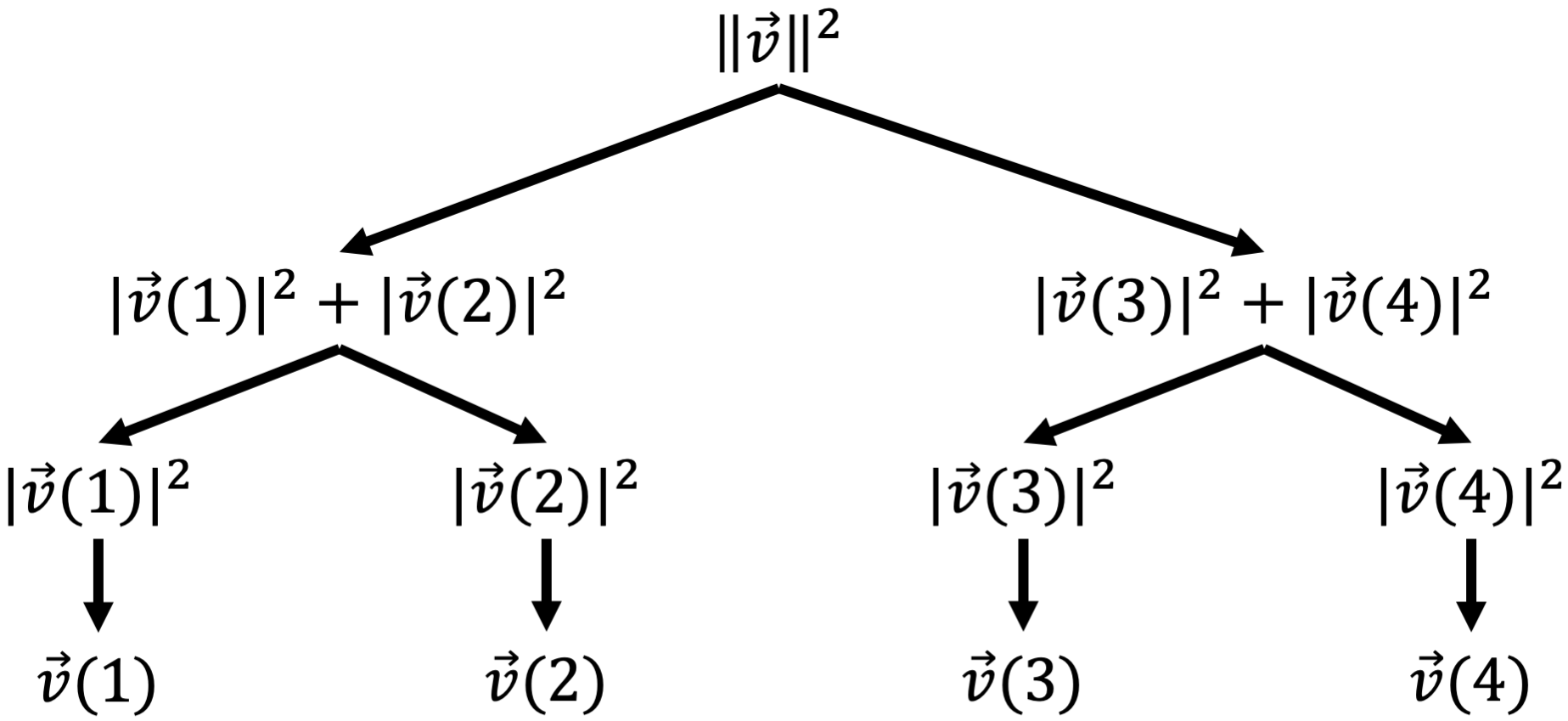}
\caption{A tree data structure to enable sample-query access to an example vector of dimension $n = 4$. Index $i$ can be sampled with probability $\frac{|\vec{v}_i|^2}{\sum_j |\vec{v_j|^2}}$ in $O(\log{n})$ time by traversing down the tree.}
\label{fig:tree}
\end{figure}

A simple tree-based data structure (see Figure~\ref{fig:tree}) supports sample-query access for vectors and matrices. The details can be found in \cite{tang-thesis}. Here, we give a summary of the running time of various operations.

\begin{lemma}[Remark 4.12 in \cite{tang-thesis}]\label{lemma:ds-1}
There is a data structure for storing a vector $\vec{v} \in \mathbb{R}^n$ supporting the following operations: 
\begin{enumerate}
\item Reading and updating an entry of $\vec{v}$ in $O(\log{n})$ time. So, $\mathbf{q}(\vec{v}) = O(\log{n})$.
\item Finding $\norm{\vec{v}}^2$ in $O(1)$ time. So, $\mathbf{n}(\vec{v}) = O(1)$.
\item Sampling from $\D_{\vec{v}}$ in $O(\log{n})$ time. So, $\mathbf{s}(\vec{v}) = O(\log{n})$
\end{enumerate}
It follows from the above that $\mathbf{sq}(\vec{v}) = O(\log{n})$. 
Moreover, the time required to construct the data structure is $O(n)$.
\end{lemma}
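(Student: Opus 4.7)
The plan is to maintain a balanced binary tree (specifically a complete binary tree) over $n$ leaves, padding $n$ up to the next power of two if necessary. Each leaf $i$ stores the entry $\vec{v}(i)$ together with the squared value $\vec{v}(i)^2$, and each internal node stores the sum of the squared leaf values in its subtree. In particular, the root stores $\norm{\vec{v}}^2$. The tree has $O(n)$ nodes and depth $\lceil \log_2 n \rceil$.

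For the $O(n)$ construction time, first allocate the leaf array and fill in $\vec{v}(i)^2$ in $O(n)$ total time, then compute each internal node in a single bottom-up postorder pass, where each internal node costs $O(1)$ because it is just the sum of its two children. This gives a total of $O(n)$ preprocessing work.

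For the three access operations I would argue as follows. To read or update $\vec{v}(i)$, descend to the corresponding leaf along the unique root-to-leaf path of length $O(\log n)$; for an update, write the new $\vec{v}(i)^2$ at the leaf and then propagate the change up the path by adding the delta at each ancestor, giving $\mathbf{q}(\vec{v}) = O(\log n)$. To return $\norm{\vec{v}}^2$, read the value stored at the root in $O(1)$ time, so $\mathbf{n}(\vec{v}) = O(1)$. To draw a sample from $\D_{\vec{v}}$, start at the root and at each internal node with stored weight $W$ and children weights $W_L, W_R$, descend to the left child with probability $W_L / W$ and to the right child with probability $W_R / W$. Each biased coin flip takes $O(1)$ time, and after $\lceil \log_2 n \rceil$ descents we land at a leaf $i$. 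By a straightforward induction on depth, the product of conditional probabilities along the path telescopes to exactly $\vec{v}(i)^2 / \norm{\vec{v}}^2$, so we sample from the correct distribution in $\mathbf{s}(\vec{v}) = O(\log n)$ time.

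The only delicate point is the biased coin flip at each internal node, which in principle requires sampling a Bernoulli with an arbitrary real-valued bias. Under the standard real-RAM convention (which is the same convention used throughout Tang's SQ framework), drawing a fresh $U \sim \mathrm{Unif}[0,1]$ and comparing against $W_L / W$ takes $O(1)$ time, so this is not a real obstacle; if one prefers a bit-model analysis, the standard trick of revealing bits of $U$ lazily also gives an expected $O(\log n)$ bound per sample. Combining the three cost bounds yields $\mathbf{sq}(\vec{v}) = O(\log n)$, which completes the proof.
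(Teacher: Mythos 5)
Your proof is correct and takes the same approach the paper relies on: the paper cites Remark~4.12 of Tang's thesis and points to the tree data structure in Figure~\ref{fig:tree}, which is exactly the complete binary tree with subtree sums of squared entries that you describe. Your detailed account of the $O(n)$ bottom-up construction, the $O(\log n)$ leaf-to-root update path, the $O(1)$ root read for $\norm{\vec{v}}^2$, and the top-down descent with telescoping conditional probabilities for sampling from $\D_{\vec{v}}$ fills in the standard argument that the paper delegates to the cited reference.
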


Let us outline the key ideas in the quantum-inspired classical algorithm in the SQ access model defined above. 
Let $v_1, ..., v_N$ denote the data vectors, i.e., $v_i = V(i, .)$.
Suppose we have set up SQ access for the vectors $v_1, ..., v_N$, $(\norm{v_1}^2, ..., \norm{v_N}^2)^T$ (together denoted by $SQ(V)$), and $m$ centers $c_1, ..., c_m$ (denoted by $SQ(c_1), ..., SQ(c_m)$).
Our goal is to set up SQ access for the vector $w$ defined as:
\begin{equation}\label{eqn:defn-w}
w \coloneqq \left( \min_{j \in [m]} D(v_1, c_j), ..., \min_{j \in [m]} D(v_N, c_j)\right)^T
\end{equation}
The reason is that once we have SQ access, $SQ(w)$, a sampling query returns an index with distribution $\D_{w}$, which is precisely the $D^2$-sampling distribution. 
We enable SQ access for $w$ in a sequence of steps:
\[
(1)\ SQ(V), SQ(c_1), ..., SQ(c_m) \quad  (2)\ SQ(u_1), ..., SQ(u_m) \quad  (3) SQ(w).
\]
Here $u_j$ is the vector defined as:
\begin{equation}\label{eqn:defn-u}
\forall i \in [N], u_j(i) = D(v_i, c_j) = \norm{v_i - c_j},
\end{equation}
the vector of distances of $N$ data points from the $j^{th}$ center. 
The above steps are a gross simplification of the real steps to understand the high-level ideas and draw a parallel to the corresponding quantum steps. 
In actual implementation, we may not be able to enable sample-query access for $w$ but something known as {\em oversampling} and query access, which is a generalization SQ access. 
Much of the technical effort is spent designing these oversampling query accesses. These details are given in the Appendix.

Let us draw a parallel to the quantum algorithm to see that the above SQ-based algorithm is indeed quantum-inspired. The relevant quantum states involved in the $D^2$-sampling algorithm are given below:
\begin{eqnarray*}
(1)\ \forall i, \ket{v_i}, \forall j, \ket{c_j}, \frac{1}{\sqrt{N}}\sum_i \ket{i}\ket{v_i} \quad (2)\ \frac{1}{\sqrt{N}} \sum_i \ket{i}\ket{D(v_i, c_1)}...\ket{D(v_i, c_m)}  \\
(3)\ \frac{1}{\sqrt{N}} \sum_i \ket{i}\ket{\min_{j \in [m]}D(v_i, c_j)} \quad (4)\ \frac{1}{\sqrt{N}} \sum_{i \in [N]} \ket{i} \left( \beta_i \ket{0} + \sqrt{1-|\beta_i|^2} \ket{1}\right)
\end{eqnarray*}
The correspondence between quantum states and SQ accesses is apparent. So, every step in the quantum algorithm can be simulated using oversampling query access, which leads to a quantum-inspired classical algorithm since the SQ access framework is completely classical.
Let us now discuss some key elements of the quantum-inspired classical implementation of $k$-means++ that results from the above quantum-inspired classical $D^2$-sampling.

\subsection{QI-$k$-means++}
If we use the quantum-inspired classical $D^2$-sampling iteratively in $k$ rounds to pick $k$ centers, the resulting algorithm is a quantum-inspired classical implementation of $k$-means++, which we call the QI-$k$-means++ algorithm. This can be seen as a fast implementation of $k$-means++. We give a high-level outline of the algorithm and discuss some of its important properties.
\begin{algorithm}
    \caption{QI-$k$-means++($V, k$)}
    \label{qikmpp}
\begin{algorithmic}[1]
    \STATE Setup sample-query access for dataset $V$
    \STATE $C \gets$ random data point in $V$
    \FOR{$i=2$ to $k$}
         \STATE Use sample-query access for $w$ ({\em which uses sample-query access for $u_1, ..., u_{i-1}$, which in turn uses sample-query access for $V$ and $c_1, ..., c_{i-1}$}) to $D^2$-sample a center $c$. The vectors $w$ and $u_1, ..., u_k$ are as defined in (\ref{eqn:defn-w}) and (\ref{eqn:defn-u}).
         \STATE $C \gets C \cup \{c\}$; 
    \ENDFOR
    \RETURN $C$
\end{algorithmic}
\end{algorithm}

Setting up the sample-query access for the dataset $V \in \mathbb{R}^{N \times d}$ costs $O(Nd)$ time. This can be thought of as the preprocessing time. Setting up the sample-query access is highly parallelizable, as can be seen from the tree-based data structure in Figure~\ref{fig:tree}. 
On a shared memory, multi-processor system with $M$ processing units, the task can be done in $O(Nd/M)$ time. After the preprocessing in step (1), the cost of the remaining $D^2$-sampling steps is $\tilde{O}(\zeta^2 k^2 d)$. 
Here, again, it is possible to parallelize on a multi-processor system since much of the time in $D^2$-sampling is spent on waiting for rejection sampling to succeed. The rejection sampling can be performed independently on multiple processors until one of the processing units succeeds. This cuts down the time to $\tilde{O}(\zeta^2 k^2 d/M)$.
We implement the QI-$k$-means++ algorithm to compare the $k$-means cost and time with the classical implementation of $k$-means++ that has a running time $O(Nkd)$.

\section{A Quantum Approximation Scheme}
We convert the $ D^2$-sampling-based approximation scheme of \cite{bgjk20} to a Quantum version. The approximation scheme is simple and highly parallel, which can be described in a few lines.

\begin{algorithm}
    \caption{Approximation Scheme}
    \label{approx-scheme}
\begin{algorithmic}
    \STATE {\bf Input}: Dataset $V$, integer $k>0$, and error $\veps>0$
    \STATE {\bf Output}: A center set $C'$ with $\Phi(V, C') \leq (1+\veps) OPT$    
    \STATE - ({\em Constant approximation}) Find a center set $C$ that is a constant factor approximate solution. An $(\alpha, \beta)$ {\em pseudo-approximate solution}, for constants $\alpha, \beta$, also works.
    \STATE - ({\em $D^2$-sampling}) Pick a set $T$ of $poly(\frac{k}{\veps})$ points independently  from the dataset using $D^2$-sampling with respect to the center set $C$.
    \STATE - ({\em All subsets}) Out of all $k$-tuples $(S_1, ..., S_k)$ of (multi)subsets  of $T \cup \{\textrm{copies of points in $C$}\}$, each $S_i$ of size $O(\frac{1}{\veps})$, return $(\mu(S_1), ..., \mu(S_k))$ that gives the least $k$-means cost. Here, $\mu(S_i)$ denotes the centroid of points in $S_i$.
\end{algorithmic}
\end{algorithm}

We will discuss the quantization of the above three steps of the approximation scheme of \cite{bgjk20}, thus obtaining a quantum approximation scheme. \footnote{Steps (2) and (3) in the algorithm are within a loop for probability amplification. For simplicity, this loop is skipped in this high-level description.}

{\bf (Constant approximation)}
The first step requires finding a constant factor approximate solution for the $k$-means problem. 
Even though several constant factor approximation algorithms are known, we need one with a quantum counterpart that runs in time that is polylogarithmic in the input size $N$.
One such algorithm is the $k$-means++ seeding algorithm \cite{av07}. 
\cite{kllp19} give an approximate quantum version of $D^2$-sampling.
The approximation guarantee of the $k$-means++ algorithm is $O(\log{k})$ instead of the constant approximation required in the approximation scheme of \cite{bgjk20}.
It is known from the work of \cite{adk09} that if the $D^2$-sampling in $k$-means++ is continued for $2k$ steps instead of stopping after sampling $k$ centers, then we obtain a center set of size $2k$ that is a $(2, O(1))$-pseudo approximate solution. 
This means that this $2k$-size center set has a $k$-means cost that is some constant times the optimal.
Such a pseudo-approximate solution is sufficient for the approximation scheme of \cite{bgjk20} to work. 
We show that the pseudo-approximation guarantee of \cite{adk09} also holds when using the approximate quantum version of the $D^2$-sampling procedure.

{\bf ($D^2$-sampling)} 
The second step of \cite{bgjk20} involves $D^2$-sampling, which we already discussed how to quantize. 
This is no different than the $D^2$-sampling involved in the $k$-means++ algorithm of the previous step. 
The sampling in this step is simpler since the center set $C$, with respect to which the $D^2$-sampling is performed, does not change (as is the case with the $k$-means++ algorithm.)

{\bf (All subsets)} 
Since the number of points sampled in the previous step is $poly(\frac{k}{\veps})$, we need to consider a list of $\left( \frac{k}{\veps}\right)^{\tilde{O}(\frac{k}{\veps})}$ tuples of subsets, each giving a $k$-center set ({\em a tuple $(S_1, ..., S_k)$ defines $(\mu(S_1), ..., \mu(S_k))$}). 
We need to compute the $k$-means cost for every $k$ center set in the list and then pick the one with the least cost.
We give quantization of the above steps.
\footnote{Note that when picking the center set with the least cost, we can get quadratic improvement in the search for the best $k$-center set using quantum search. 
Given that the search space is of size $\left( \frac{k}{\veps}\right)^{\tilde{O}(\frac{k}{\veps})}$, this results only in a constant factor improvement in the exponent. 
So, for simplicity, we leave out the quantum search from the discussion.}

Note that the quantization of the classical steps of \cite{bgjk20} will incur precision errors.
So, we first need to ensure that the approximation guarantee of \cite{bgjk20} is robust against small errors in distance estimates, $D^2$-sampling probabilities, and $k$-means cost estimates.
We must carefully account for errors and ensure that the quantum algorithm retains the $(1+\veps)$ approximation guarantee of the robust version of \cite{bgjk20}. We give the detailed analysis in the Appendix.

\section{Experiments}\label{sec:experiments}
We provide an experimental investigation of the QI-$k$-means++ algorithm. We implemented our code in C++  for both $k$-means++ and QI-$k$-means++ and the results are averaged over $5$ runs. No pre-processing/dimensionality reductions were done on any of the datasets used. The experiments were performed on an AMD Ryzen 9 5900HX 4.68 GHz 8 Core processor (parallelization on multicore was not used).  We must set up the sample and query data structure for the dataset only once to be able to perform seeding for any value of $k$, with each run taking time only polylogarithmic in $N$. Specifically, if we wanted to perform seeding for $q$ different values of $k$, say $\{ k_1,...,k_q \}$, then the total runtime for QI-$k$-means++ would be $O(Nd) + \tilde{O}(\zeta^{2} d \sum_{j \in [q]} k_{j}^{2})$  as compared to $O(Nd\sum_{j \in [q]}k_{j})$ for $k$-means++. This is one reason to consider the ${O}(Nd)$ term as pre-processing.

We study the runtime (including setup time for the sample and query data structure) of QI-k-means++ on two extreme types of datasets to demonstrate its possible use cases. The first is binarized MNIST \cite{salakhutdinov2008quantitative} (70,000 data points, each being a $28 \times 28$ pixel image . Each pixel has a value of either 0 or 1, as opposed to the original MNIST, which had values from 0 to 255) and the second is IRIS \cite{misc_iris_53} (150 data points with four feature values). The plots in Figures~\ref{fig:mnist} and \ref{fig:iris} show the \textit{cumulative runtime} (i.e., the total time for setting up and calculating cluster centers for all values of $k$ up to a certain value).

\textbf{Advantageous regime}: For a dataset with a large number of points and small aspect ratio (for example, binary MNIST),   we find that for one iteration of seeding, our algorithm performs slower but quickly catches up to be significantly faster when we require seeding to be done for multiple values of $k$ (which is generally the case in practice since in the unsupervised setting, the optimal value of $k$ is not known beforehand). This is because most of the time is spent in setting up the data structure, after which the sampling becomes significantly faster. For example, see the left plot for the MNIST dataset. Notice that the cumulative runtime for QI-$k$-means++ is almost constant, while for $k$-means++, it does not scale well due to the multiplicative dependence on $N$.

\textbf{Disadvantageous regime}: For a dataset with a small number of points and a large aspect ratio, we find that the runtime is dominated by the sampling term and due to the quadratic dependence on $k$ and $\zeta$, QI-$k$-means++ may not scale well. For example, see the right plot for the IRIS dataset. 

We give additional experiments on larger datasets and a comparative analysis with $\mathtt{AFKMC^2}$~\cite{blhk16b} in Section~\ref{sec:add-exp} of the Appendix.

\begin{figure}[h] 
    \centering
    \begin{minipage}[b]{0.45\textwidth}
        \centering
        \includegraphics[scale=0.3]{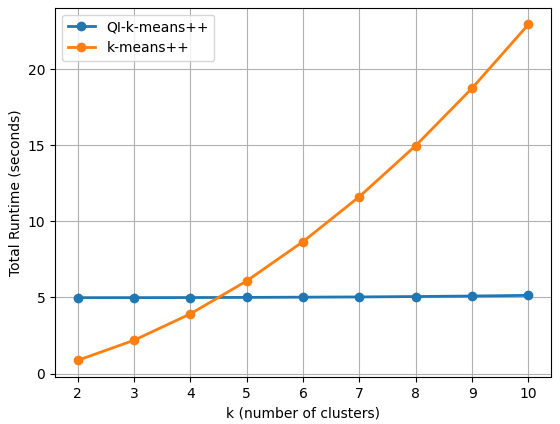}
        \caption{\small Cumulative runtime plot for MNIST}
        \label{fig:mnist}
    \end{minipage}
    \hfill
    \begin{minipage}[b]{0.45\textwidth}
        \centering
        \includegraphics[scale=0.3]{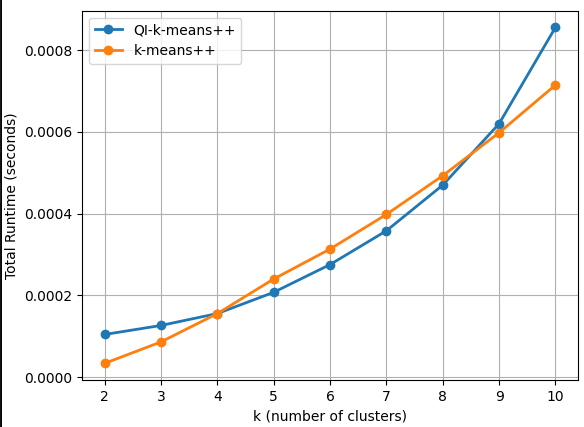}
        \caption{\small Cumulative runtime plot for IRIS}
        \label{fig:iris}
    \end{minipage}
\end{figure}

\begin{table}[h!]
\centering
\begin{tabular}{|c|c|c|c|c|c|c|c|c|c|c|}
\hline
k & 2 & 3 & 4 & 5 & 6 & 7 & 8 & 9 & 10 \\
\hline
QI-k-means++ & 8.43 & 8.13 & 7.81 & 7.43 & 7.29  & 7.09  & 6.79 & 6.78  & 6.66 \\
\hline
k-means++ & 8.35  & 8.10 & 7.74 & 7.38 &  7.44 & 7.17 & 7.12& 6.91 & 6.83  \\
\hline
\end{tabular}
\caption{\small Clustering cost for binarized MNIST (costs are scaled down by a factor of $10^{6}$)}
\label{table:kmeans}
\end{table}

\begin{table}[h!]
\centering
\begin{tabular}{|c|c|c|c|c|c|c|c|c|c|c|}
\hline
k & 2 & 3 & 4 & 5 & 6 & 7 & 8 & 9 & 10 \\
\hline
QI-k-means++ & 538.82 & 253.41 & 112.53 & 96.04 & 83.72 & 64.92  & 58.62 & 56.25 & 49.92  \\
\hline
k-means++ & 773.26 & 227.50  & 115.33  & 93.45  & 83.21 & 60.14 & 57.98  & 53.99 & 50.35  \\
\hline
\end{tabular}
\caption{\small Clustering cost for IRIS (costs are rounded to 2 decimal places)}
\label{table:kmeans1}
\end{table}

\section{Conclusion and Future Work} \label{sec:conclusion}

In this work, we used $D^2$-sampling framework to design quantum algorithms for the $k$-means problem. We also gave their classical counterparts obtained through the dequantization framework of \cite{tang-thesis}. These algorithms have logarithmic dependence on the number of points in the dataset $N$, after appropriate linear time pre-processing for the classical ones. Our algorithms depend on the parameter $\zeta$, which is the aspect ratio of the dataset. 
Our algorithm \qikpp is an addition to the collection of fast implementations of the $k$-means++ seeding. Interestingly, the performance of all of these algorithms depend on certain \textit{aspect ratio like quantities}. Even for a simple dataset where the clusters are located very far apart from each other, the parameter $\zeta$ (for \qikpp and \rej) and the ratio of optimal $1$-means cost to the optimal $k$-means cost  (for \kmc and \afkmc) can become quite large. We think that improving the dependence on these quantities and understanding the tradeoff between them and the data size are interesting open problems.

\section{Acknowledgements}
This work was partially supported by the CSE Research Acceleration Fund of IIT Delhi.

\bibliography{iclr2025_conference}
\bibliographystyle{alpha}

\appendix

\section{Quantum Preliminaries}

In this section , we briefly describe the core notions of quantum computing and refer to \cite{Nielsen_Chuang_2010} for a more formal and complete overview of quantum computing. 

A qubit (\textit{quantum bit}) is represented by a unit vector $\ket{\psi} = \alpha_0\ket{0} + \alpha_1\ket{1}$ where $\alpha_0, \alpha_1 \in \Com$ are complex numbers such that  $|\alpha_0|^2 + |\alpha_1|^2 = 1$ and  $\ket{0} = \begin{pmatrix}
    1 \\ 0
\end{pmatrix}$ and $\ket{1} = \begin{pmatrix}
    0 \\ 1
\end{pmatrix}$ represent the standard basis of the vector space $\mathbb{C}^2(\mathbb{C})$ i.e, the vector space whose elements are pairs of complex numbers with the underlying field being $\Com$.  

Similar to a single qubit, an $n$-qubit system is represented by a unit vector in $(\Com^2(\Com))^{\otimes n}$ i.e, as $\ket{\psi} = \sum_{j =0}^{2^n-1} \alpha_j \ket{j}$ where the \textit{amplitudes} are complex numbers satisfying $\sum_{j = 0}^{2^n-1}|\alpha_j|^2 = 1$.

A quantum algorithm performs computation on qubits by applying \textit{quantum gates} which are represented as unitary matrices. For example, the hadamard gate  $H = \frac{1}{\sqrt{2}}\begin{pmatrix}
    1 & 1 \\ 1 & -1
\end{pmatrix}$ can perform the mapping $\ket{0} \mapsto \frac{1}{\sqrt{2}}(\ket{0} + \ket{1})$ and $\ket{1} \mapsto \frac{1}{\sqrt{2}}(\ket{0} - \ket{1})$.

A quantum state can also be \textit{measured} in the standard basis. For example,  when $\ket{\psi} = \alpha_0\ket{0} + \alpha_1\ket{1}$ is measured, we obtain $0$ with probability $|\alpha_0|^2$ and $1$ with probability $|\alpha_1|^2$. Similarly, when $\ket{\psi} = \sum_{j =0}^{2^n-1} \alpha_j \ket{j}$ is measured , we observe the outcome $j = j_0j_1\dots j_{n-1}$, the binary representation of $j$ with probability $|\alpha_j|^2$.

In the context of QML, the input data is assumed to be provided in the QRAM data structure. We refer to \cite{kp17} for the specific details of QRAM implementation. Any vector $v \in \mathbb{R}^d$ can be encoded as a quantum state using $\lceil \log d \rceil$ qubits as $\ket{v} = \frac{1}{\| v\|}\sum_{j = 0}^{d-1} v_j \ket{j}$.

\section{Quantum $D^2$-sampling (proof of Theorem~\ref{thm:quantum-kmpp})}
We will work under the assumption that the minimum distance between two data points is $1$, which can be achieved using scaling. This makes the aspect ratio $\zeta$ simply the maximum distance between two data points.
We will use $i$ for an index into the rows of the data matrix $V \in \mathbb{R}^{N \times d}$, and $j$ for an index into the rows of the center matrix $C \in \mathbb{R}^{k \times d}$. 
We would ideally like to design a quantum algorithm that performs the transformation: 
\[
\ket{i}\ket{j}\ket{0} \rightarrow \ket{i} \ket{j}\ket{{D(v_i, c_j)}}
\]
Let us call the state on the right $\ket{\Psi_{ideal}}$.
This is an ideal quantum state for us since $\ket{\Psi_{ideal}}$ helps to perform $ D^2$-sampling and to find the $k$-means cost of clustering, which are the main components of the approximation scheme of \cite{bgjk20} that we intend to use.
One caveat is that we will only be able to perform the following transformation (instead of the abovementioned transformation)
\[
\ket{i}\ket{j}\ket{0} \rightarrow \ket{i} \ket{j}\ket{\psi_{i,j}},
\]
where $\ket{\psi_{i,j}}$ is an approximation for $\ket{\D(v_i, c_j)}$ in a sense that we will make precise below.
We will use $\ket{\Psi_{real}}$ to denote the state $\ket{i}\ket{j}\ket{\psi_{i,j}}$.
This state is prepared using tools such as {\em swap test} followed by {\em coherent amplitude estimation}, and {\em median estimation}.
Since these tools and techniques are known from previous works~\cite{wiebe, LMR,kllp19}, we summarise the discussion (see Section 4.1 and 4.2 in \cite{kllp19}) in the following lemma.

\begin{lemma}[\cite{kllp19} and \cite{wiebe}]
Assume for a data matrix $V \in \mathbb{R}^{N \times d}$ and a center set matrix $C \in \mathbb{R}^{t \times d}$ that the following unitaries: (i) $\ket{i}\ket{0} \rightarrow \ket{i}\ket{v_i}$, (ii) $\ket{j}\ket{0} \rightarrow \ket{j}\ket{c_j}$ can be performed in time $T$ and the norms of the vectors are known. For any $\Delta > 0$, there is a quantum algorithm that in time $O \left( \frac{T \log{\frac{1}{\Delta}}}{\veps} \right)$ computes:
\[
\ket{i}\ket{j}\ket{0} \rightarrow \ket{i}\ket{j}\ket{\psi_{i,j}}, 
\]
where $\ket{\psi_{i,j}}$ satisfies the following two conditions for every $i \in [N]$ and $j \in [t]$:
\begin{enumerate}
\item[(i)] $\norm{\ket{\psi_{i,j}} - \ket{0^{\otimes \ell}} \ket{{\tilde{D}(v_i, c_j)}}}  \leq \sqrt{2\Delta}$, and  
\item[(ii)] For every $i, j$, $\tilde{D}(v_i, c_j) \in (1\pm\veps) \cdot D(v_i, c_j)$.
\end{enumerate}
\end{lemma}

In the subsequent discussions, we will use $T$ as the time to access the {\em QRAM data structure} \cite{kp17}, i.e., for the transitions $\ket{i}\ket{0} \rightarrow \ket{i}\ket{v_i}$ and $\ket{j}\ket{0} \rightarrow \ket{j}\ket{c_j}$ as given in the above lemma.
This is known to be $T = O(\log^2{(Nd)})$. Moreover, the time to update each entry in this data structure is also $T = O(\log^2{(Nd)})$.
This is the logarithmic factor that is hidden in the $\tilde{O}$ notation.
In the following subsections, we discuss the utilities of $\ket{\Psi_{real}}$ for the various components of the approximation scheme of \cite{bgjk20}.
During these discussions, it will be easier to see the utility first with the ideal state $\ket{\Psi_{ideal}}$ before the real state $\ket{\Psi_{real}}$ that can actually be prepared.
We will see how $\ket{\Psi_{real}}$ is sufficient within a reasonable error bound.

\subsection{Finding distance to closest center}
Let us see how we can estimate the distance of any point to its closest center in a center set $C$ with $t \leq k$ centers. 
We can use the transformation $\ket{i}\ket{j}\ket{0} \rightarrow \ket{i} \ket{j}\ket{{D(v_i, c_j)}}$ to prepare the following state for any $i$:
\[
\ket{i}\ket{D(v_i, c_1)}\ket{D(v_i, c_2)}...\ket{D(v_i, c_t)}
\]
We can then iteratively compare and swap pairs of registers to prepare the state $\ket{i} \ket{\min_{j \in [t]} D(v_i, c_j)}$. 
If we apply the same procedure to $\ket{i}\ket{\psi_{i,1}}...\ket{\psi_{i,t}}$, then with probability at least $(1-2\Delta)^{t}$, the resulting state will be $\ket{i}\ket{\min_{j \in [t]}{\tilde{D}(v_i, c_j)}}$. So, the contents of the second register will be an estimate of the distance of the $i^{th}$ point to its closest center in the center set $C$. This further means that the following state can be prepared with probability at least $(1- 2\Delta)^{Nt}$:\footnote{The state prepared is actually $\frac{1}{\sqrt{N}} \sum_{i=1}^{N} \ket{i} \left( \alpha \ket{\min_{j \in [t]}{\tilde{D}(v_i, c_j)}} + \beta \ket{G}\right)$ with $|\alpha|^2 \geq (1-2\Delta)^{Nk}$. However, instead of working with this state, subsequent discussions become much simpler if we assume that $\ket{\Psi_{C}}$ is prepared with probability $|\alpha|^2$.}
\[
\ket{\Psi_{C}} \equiv \frac{1}{\sqrt{N}} \sum_{i=1}^{N} \ket{i} \ket{\min_{j \in [t]}{\tilde{D}(v_i, c_j)}}. 
\]
This quantum state can be used to find the approximate clustering cost of the center set $C$, which we discuss in the following subsection. 
However, before we do that, let us summarise the main ideas of this subsection in the following lemma. 

\begin{lemma}\label{lemma:D2minC}
There is a quantum algorithm that, with probability at least $(1-2\Delta)^{Nt}$, prepares the quantum state $\ket{\Psi_{C}}$ in time $O\left( \frac{T t \log{\frac{1}{\Delta}}}{\veps} \right)$.
\end{lemma}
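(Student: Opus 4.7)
The plan is to build $\ket{\Psi_C}$ by (a) preparing a uniform superposition over point indices, (b) coherently computing all $t$ approximate distances using the transformation guaranteed by the preceding lemma, (c) folding those $t$ distance registers into a single minimum register via standard reversible compare-and-swap, and (d) uncomputing the per-center distance registers to leave only $\ket{i}\ket{\min_{j}\tilde D(v_i,c_j)}$.

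Concretely, first I would apply $\log N$ Hadamards to produce $\tfrac{1}{\sqrt N}\sum_i\ket{i}\ket{0}\cdots\ket{0}$ with $t$ blank distance registers and one blank min-register. Then, for each $j=1,\dots,t$ in turn, I would invoke the unitary from the earlier lemma (controlled on $\ket{i}$, with $\ket{j}$ supplied from a small index register) to write $\ket{\psi_{i,j}}$ into the $j$-th block, obtaining
\[
\tfrac{1}{\sqrt N}\sum_{i=1}^{N}\ket{i}\ket{\psi_{i,1}}\ket{\psi_{i,2}}\cdots\ket{\psi_{i,t}}.
\]
Each invocation costs $O\!\left(\tfrac{T\log(1/\Delta)}{\veps}\right)$, so this stage costs $O\!\left(\tfrac{T\,t\log(1/\Delta)}{\veps}\right)$, which will dominate. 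Next, using a reversible minimum circuit (a binary tree of compare-and-swap gates operating on the distance registers into a dedicated min-register), I compute $\min_{j\in[t]}\tilde D(v_i,c_j)$ in superposition using $O(\log t)$ depth and $O(t)$ gates, all negligible next to the distance-estimation cost. Finally, I run the distance-estimation steps in reverse to uncompute the $\ket{\psi_{i,j}}$ blocks, leaving the desired state $\ket{\Psi_C}$ (up to the ancillas handled by the preceding lemma).

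For the success-probability bound, I use the fact that each $\ket{\psi_{i,j}}$ is within Euclidean distance $\sqrt{2\Delta}$ of the ideal $\ket{0^{\otimes \ell}}\ket{\tilde D(v_i,c_j)}$, so the corresponding component on the good subspace has squared amplitude at least $1-2\Delta$. Interpreting the algorithm as acting on the ideal computational-basis states and tracking the good component through the branches labelled by $(i,j)$, the total good-amplitude squared is at least $(1-2\Delta)^{Nt}$, matching the statement (and the footnote's convention of writing the output as $\alpha\ket{\Psi_C}+\beta\ket{G}$ with $|\alpha|^2\geq(1-2\Delta)^{Nt}$ and then treating $\ket{\Psi_C}$ as ``prepared with probability $|\alpha|^2$'').

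The main obstacle is the bookkeeping in step (d) and the error analysis: I have to make sure the uncomputation is exact on the ideal branch and that the compare-and-swap acts correctly on the real $\ket{\psi_{i,j}}$ registers rather than on the ideal $\ket{\tilde D(v_i,c_j)}$ registers, so that all error is attributable to the single $\sqrt{2\Delta}$ gap per $(i,j)$ pair. Once the good subspace is carved out this way, everything else is routine composition of the preceding lemma with a uniform-superposition preparation and a reversible min circuit, giving both the claimed runtime and the claimed $(1-2\Delta)^{Nt}$ preparation probability.
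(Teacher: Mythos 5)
Your proposal is correct and follows essentially the same route as the paper: prepare a uniform superposition over indices, invoke the distance-estimation unitary once per center to produce $\ket{\psi_{i,1}}\cdots\ket{\psi_{i,t}}$, fold these into a single minimum register by reversible compare-and-swap, then account for the $\sqrt{2\Delta}$-per-register deviation from the ideal state to arrive at the $(1-2\Delta)^{Nt}$ bound, with $t$ calls at cost $O(T\log(1/\Delta)/\veps)$ each dominating the runtime. The one place you go beyond the paper's exposition is making the uncomputation of the per-center distance registers explicit; the paper leaves this implicit when it asserts the resulting state is simply $\ket{i}\ket{\min_j \tilde D(v_i,c_j)}$. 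Your framing of the $(1-2\Delta)^{Nt}$ figure also matches the paper's own footnoted convention of treating $\ket{\Psi_C}$ as ``prepared with probability $|\alpha|^2$'' rather than carrying the garbage branch forward, so no discrepancy there.
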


\subsection{$D^2$-sampling}
$D^2$-sampling from the point set $V$ with respect to a center set $C \in \mathbb{R}^{t \times d}$ with $t$ centers, samples $v_i$ with probability proportional to $\min_{j \in [t]}{D^2(v_i, c_j)}$. 
Let us see if we can use our state $\ket{\Psi_{C}} = \frac{1}{\sqrt{N}} \sum_{i=1}^{N} \ket{i} \ket{\min_{j \in [t]}{\tilde{D}(v_i, c_j)}}$ is useful to perform this sampling.
If we can pull out the value of the second register as the amplitude, then the measurement will give us close to $D^2$-sampling. 
This is possible since we have an estimate of the clustering cost from the previous subsection. 
We can use controlled rotations on an ancilla qubit to prepare the state: 
\[
\ket{\Psi_{sample}} \equiv \frac{1}{\sqrt{N}} \sum_{i=1}^{N} \ket{i} \left( \beta_i \ket{0} + \sqrt{1-|\beta_i|^2} \ket{1}\right), 
\]
where $\beta_i = \frac{\min_{j \in [t]}{\tilde{D}(v_i, c_j)}}{2\zeta}$. Note that $\beta_i \leq 1$ since $\zeta$ is the aspect ratio, and by scaling, we assume that $\min_{i, j} D(v_i, v_j) = 1$.\footnote{In case scaling only ensures $\min_{i, j} D(v_i, v_j) \geq 1$ (e.g., vectors with integer coordinates), we can use a bound on $\max_{i, j} D(v_i, v_j)$ as the parameter instead of the aspect ratio.} 
Note that this suggests we must know the aspect ratio (or an upper bound on the ratio), which should be considered an input parameter. This is a standard assumption in parameterised quantum machine learning algorithms such as ours ({\em for example, in the HHL algorithm, a bound on the condition number of the matrix is used as a parameter}).
The probability of measurement of $(i, 0)$ is 
$\frac{\min_{j \in [t]}{\tilde{D}(v_i, c_j)^2}}{4 N \cdot \zeta^2} \geq \frac{1}{8 N \zeta^2}$. 
Since we do rejection sampling, ignoring $(., 1)$'s that are sampled with probability $\leq \left(1-\frac{1}{8 \zeta^2}\right)$, we end up sampling with a distribution where the conditional probability of sampling $i$ is $\frac{\min_{j \in [t]}{\tilde{D}(v_i, c_j)^2}}{\tilde{\Phi}(V, C)} \in (1 \pm \veps)^4 \cdot \frac{\min_{j \in [t]}{D(v_i, c_j)^2}}{\Phi(V, C)}$. 
Moreover, the rejection sampling gives us a usable sample with high probability in $m = O(\zeta^2 \ln{10 N})$ rounds of sampling.
This means that points get sampled with a probability close to the actual $D^2$-sampling probability. As mentioned earlier, this is sufficient for the approximation guarantees of \cite{bgjk20} to hold.
We summarise the observations of this section in the next lemma. 


\begin{lemma}\label{lem:D2sampling}
Let $0 < \delta \leq 1$ and let $m=\zeta^2 \ln{10N}$.
Given a dataset $V \in \mathbb{R}^{N \times d}$ and a center set $C \in \mathbb{R}^{t \times d}$ in the QRAM model, there is a quantum algorithm that runs in time $\tilde{O} \left( \frac{T t m \log{\frac{1}{\Delta}}}{\delta} \right)$ and with probability at least $(1-1/5N) \cdot (1-2\Delta)^{Ntm}$ outputs a sample with 
$\D^2$ distribution such that the distance function $\D$ is $\delta$-close to $D$.
\end{lemma}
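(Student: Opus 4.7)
The plan is to compose the three building blocks developed earlier in this section---the approximate nearest-center-distance state $\ket{\Psi_C}$ from Lemma~\ref{lemma:D2minC}, the cost estimation procedure from Lemma~\ref{lemma:D2estimate}, and the rejection-sampling scheme for $\ket{\Psi_{sample}}$ sketched in the preceding subsection---and then carefully book-keep failure probabilities and running times through the composition.

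First, I would invoke Lemma~\ref{lemma:D2minC} with precision $\veps = \delta$ to prepare one copy of $\ket{\Psi_C}$ in time $O(Tt\log(1/\Delta)/\delta)$ with success probability at least $(1-2\Delta)^{Nt}$. Producing $m=\zeta^2\ln(10N)/\delta^2$ such copies independently, measuring their second registers, squaring and averaging yields, by Lemma~\ref{lemma:D2estimate}, a value $\alpha_m$ that lies in $(1\pm\delta)\tilde{\Phi}(V,C)$ with probability at least $1-1/(5N)$, conditional on all $m$ preparations succeeding. This phase costs $O(Ttm\log(1/\Delta)/\delta)$.

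Next, I would prepare one more copy of $\ket{\Psi_C}$, append an ancilla, and apply a controlled rotation conditioned on the second register (scaled using $\alpha_m$) to rotate the ancilla into $\beta_i\ket{0}+\sqrt{1-\beta_i^2}\ket{1}$ with $\beta_i\propto\min_j\tilde{D}(v_i,c_j)$. Measuring and accepting on ancilla $=0$ yields index $i$ from the distribution $\min_j\tilde{D}(v_i,c_j)^2/\alpha_m$, which is the $\tilde{D}^2$-distribution up to a global $(1\pm\delta)$ factor from the substitution of $\alpha_m$ for $\tilde{\Phi}$. Rejection sampling requires $O(\zeta^2\log N)$ further state preparations to succeed with high probability, and since this is at most $m$ by choice of $m$, the sampling phase is absorbed into the existing time and success-probability bounds.

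Combining by a union bound, all $O(m)$ state preparations succeed jointly with probability at least $(1-2\Delta)^{Ntm}$ and the cost estimate is accurate with probability at least $1-1/(5N)$, yielding the claimed $(1-1/(5N))(1-2\Delta)^{Ntm}$. The main obstacle I anticipate is the normalization choice inside $\beta_i$: to keep $\beta_i\leq 1$ without knowing $\tilde{\Phi}$ in advance, the normalizing factor should be the known diameter $\zeta$ rather than $\sqrt{2\tilde{\Phi}}$ as written in the sketch; this substitution is exactly what produces the $\zeta^2$ rejection overhead, and hence the $\zeta^2$ factor inside $m$ in the final running time.
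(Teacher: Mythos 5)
Your proposal is essentially the same decomposition the paper uses: prepare $\ket{\Psi_C}$ via Lemma~\ref{lemma:D2minC}, estimate $\tilde{\Phi}(V,C)$ by measuring $m$ copies and invoking Lemma~\ref{lemma:D2estimate} with $\veps=\delta$, then prepare $\ket{\Psi_{sample}}$ and rejection-sample; the paper's proof is just a one-line pointer to that preceding discussion. Your time bookkeeping (preparation of the $m$ copies dominating, rejection overhead absorbed since $\zeta^2\log N\le m$) and your probability bookkeeping (joint success of the state preparations times the Hoeffding event) match what the paper intends.

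Where you genuinely improve on the paper's sketch is the normalization observation at the end. As written, $\beta_i=\min_j\tilde{D}(v_i,c_j)/\sqrt{2\tilde{\Phi}(V,C)}$ gives an ancilla-$\ket{0}$ probability of $\sum_i\beta_i^2/N=1/(2N)$, not $1/2$ as the paper asserts, so the stated rejection step would need $\Theta(N)$ repetitions and destroy the polylogarithmic dependence. Your proposed fix---normalize by the known diameter $\zeta$ (recall the data is scaled so the minimum interpoint distance is $1$, hence all distances lie in $[1,\zeta]$ and $\beta_i\le1$ is automatic), which gives acceptance probability $\tilde{\Phi}/(N\zeta^2)=\Omega(1/\zeta^2)$ and hence $O(\zeta^2\log N)$ repetitions---is exactly what makes the $\tilde{O}(Ttm\log(1/\Delta)/\delta)$ bound defensible, and it is consistent with the $\zeta^2$ factor the paper carries in $m$ and in the quantum-inspired variant. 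Two small things to tidy if you write this up: (i) the rejection-sampling state preparations contribute an additional $(1-2\Delta)^{O(Nt\zeta^2\log N)}$ to the success probability, which the lemma statement silently folds into $(1-2\Delta)^{Ntm}$---this is fine only because $\zeta^2\log N\le m$, so say so explicitly; and (ii) after substituting the estimate $\alpha_m$ for $\tilde{\Phi}$ you get a sampling distribution within $(1\pm\delta)^{O(1)}$ of $D^2$, so the "$\D$ is $\delta$-close to $D$" conclusion is really $O(\delta)$-close; pick the internal precision a constant factor smaller to make the statement literal.
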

\begin{proof}
The proof follows from Lemma~\ref{lemma:D2minC} and the preceding discussion.
\end{proof}
The above lemma says that for $\Delta = \tilde{O}(\frac{\veps^2}{Nt\zeta^2})$, we obtain the required $D^2$-sample with probability $\geq (1-1/N)$.
We can now give proof of Theorem~\ref{thm:quantum-kmpp}, assembling the quantum tools for $D^2$-sampling in this section. 
Note that we obtain samples from $\D^2$ distribution such that $\D$ is close to $D$. 
Using our quantum sampling algorithm to pick $k$ centers in $k$ rounds is not the classical $k$-means++ algorithm that requires that the samples be picked with $D^2$ distribution and not something close. The resulting algorithm, however, has been analysed in the literature. If the sampling distribution is within $(1\pm \veps)$ of the $D^2$ distribution, then the algorithm is called the {\em noisy-$k$-means++} algorithm~\cite{bers20, noisy-kmpp23}. We will use the following result in \cite{noisy-kmpp23}.

\begin{theorem}[Theorem 1.1 in \cite{noisy-kmpp23}]
The noisy-$k$-means++ algorithm is $O(\log{k})$ approximate, in expectation.
\end{theorem}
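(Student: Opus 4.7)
The plan is to adapt the classical potential-function analysis of Arthur--Vassilvitskii \cite{av07} for $k$-means$++$, carefully propagating the multiplicative $(1\pm\veps)$ perturbation through each $D^2$-sampling step. First I would recall the two probabilistic lemmas that drive the original argument: (i) the initial uniform sample charged against an arbitrary optimal cluster $A$ contributes at most $2\cdot OPT_A$ in expectation, and (ii) a $D^2$-sample drawn from the current residual distribution, conditioned on landing in a previously uncovered optimal cluster $A$, contributes at most $8\cdot OPT_A$ in expectation. Lemma (i) is untouched by noise since the first center is sampled uniformly. For lemma (ii), the classical proof expands the conditional expectation as a convex combination $\sum_{p\in A}\bigl(D^2(p,C)/\sum_{q\in A}D^2(q,C)\bigr)\cdot \phi(A, C\cup\{p\})$; under noisy sampling every coefficient is perturbed by a factor in $\tfrac{1+\veps}{1-\veps}$, which is bounded for $\veps$ bounded away from $1$, so the inequality survives with a mildly worse constant.

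Next I would chain these per-round estimates into the global $O(\log{k})$ bound via the inductive ``covered versus uncovered clusters'' argument of \cite{av07}. That argument partitions optimal clusters into those already hit by some picked center and those not yet hit, and shows that the expected residual cost on uncovered clusters decreases geometrically with each successful round. Under $(1\pm\veps)$-noisy sampling, the ratio between the sampling mass on uncovered clusters and the total mass is distorted by at most a constant factor, so the geometric decay is preserved and the $O(\log{k})$ overhead from the standard harmonic-sum accumulation is retained, up to a multiplicative factor $\mathrm{poly}(1/(1-\veps))$ absorbed into the big-$O$.

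The main technical obstacle I anticipate is bookkeeping the conditioning properly when the noisy sampler, with slightly inflated probability, lands in an already-covered cluster. In the noise-free analysis such a ``wasted'' pick is benign because the potential cannot increase, but under noise the conditional distributions used in the inductive step are themselves perturbed, and one must confirm that the inductive invariant retains the same form after a wasted round. The cleanest route is to carry a slightly strengthened invariant of the shape $\E[\Phi_t \mid \mathcal{H}_t] \leq \alpha(\veps)\cdot H_{u_t}\cdot OPT + \beta(\veps)\cdot \sum_{A\text{ uncovered}} OPT_A$, with $\alpha,\beta$ depending only on $\veps$ and $H_{u_t}$ the $u_t$-th harmonic number, and to verify the inductive step using the two perturbed lemmas above. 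Telescoping across the $k$ rounds then yields the $O(\log{k})$ approximation in expectation, with the hidden constant depending polynomially on $1/(1-\veps)$.
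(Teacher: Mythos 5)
This statement is cited by the paper as an external result (Theorem~1.1 of \cite{noisy-kmpp23}); the paper does not prove it, so there is no internal proof to compare against. The closest thing the paper does establish is the noisy per-cluster lemma (Lemma~\ref{lem:kmppe}), a $(1\pm\delta)$-robust analogue of Lemma~3.2 of \cite{av07}, and it feeds that only into the \cite{adk09} pseudo-approximation argument, which is a constant-factor statement and is therefore much more forgiving of sampling noise than the $O(\log k)$ bound.

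Your proposal has a genuine gap at precisely the step you flag as the ``main technical obstacle.'' Propagating a per-round $(1\pm\veps)$ multiplicative distortion through the Arthur--Vassilvitskii potential induction, with a strengthened invariant of the form $\E[\Phi_t \mid \mathcal{H}_t] \leq \alpha(\veps)\, H_{u_t}\,\mathrm{OPT} + \beta(\veps) \sum_{A \text{ uncovered}} \mathrm{OPT}_A$, is essentially the strategy of \cite{bers20}, and that strategy is known to yield $O(\log^2 k)$, not $O(\log k)$. The loss is exactly in the wasted-round accounting: under noisy sampling the probability of hitting an already-covered cluster is inflated relative to the exact $D^2$ mass, so the coefficient on the uncovered term is no longer controlled by the clean $\tfrac{u-t}{u}$ factor that makes the harmonic sum telescope. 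That per-round slack is $\Theta(\veps)$ but it accrues over $k$ rounds in a way that the AV invariant does not absorb, which is where the extra $\log k$ factor comes from. Your claim that ``the geometric decay is preserved and the $O(\log k)$ overhead from the standard harmonic-sum accumulation is retained'' is therefore unjustified, and I believe it is false for the invariant you propose.

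The $O(\log k)$ bound of \cite{noisy-kmpp23} requires a qualitatively different argument, not just a perturbed version of the AV induction with $\veps$-dependent constants. If you want to reproduce that theorem rather than cite it, you would need to incorporate the specific structural ideas of that paper (a modified potential / a sharper treatment of the covered-vs-uncovered split that avoids the compounding slack), not merely re-derive the two perturbed per-round lemmas and re-run the telescoping. As written, your sketch proves the weaker $O(\log^2 k)$ statement, which is still enough for some of the downstream claims in this paper but is not the theorem stated.
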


Theorem~\ref{thm:quantum-kmpp}, restated below, follows from the above theorem and Lemma~\ref{lem:D2sampling}.
Note that the running time of the quantum implementation is $\sum_t \tilde{O}(\zeta^2 t) = \tilde{O}(\zeta^2 k^2)$.

\begin{theorem-non}[Restatement of Theorem~\ref{thm:quantum-kmpp}]
There is a quantum implementation of $k$-means++ that runs in time $\tilde{O}(\zeta^2 k^2)$ and gives an $O(\log{k})$ factor approximate solution for the $k$-means problem with a probability of at least $0.99$. Here, $\tilde{O}$ hides $\log^2{(Nd)}$ and $\log^2{(kd)}$ terms.\footnote{The output of $k$-means++ is a subset of data points, which can be stated as a subset of indices. If the description of centers is the expected output, the running time will also include a factor of $d$.}
\end{theorem-non}

\section{Quantum Inspired $D^2$-Sampling}
Ewin Tang's thesis~\cite{tang-thesis} nicely categorizes quantum machine learning (QML) algorithms into ones where the quantum advantage (i.e., polylogarithmic running time) is solely because of the quantum access to the data and the ones where it is not (e.g., the HHL algorithm~\cite{hhl}).
Quantum access means that a (normalized) data vector $\vec{v} \coloneqq (\vec{v}(1), ..., \vec{v}(d))^T$ can be loaded onto the quantum workspace in $O(1)$ time as $\ket{v} = \vec{v}(1) \ket{1} + ... + \vec{v}(d) \ket{d}$.
In this section, we will use the overhead arrow notation for vectors since we would want to distinguish between vectors and their approximations using different overhead symbols.
One implication of having the quantum state $\ket{v}$ is that $i$ can be sampled with probability $\frac{\vec{v}(i)^2}{\sum_{j} \vec{v}(j)^2}$ by simply making a measurement in the standard basis.
One of the key insights in Tang's Thesis~\cite{tang-thesis} is that QML algorithms that benefit solely from the quantum data access can be `dequantized' if we work with an appropriate classical counterpart of the quantum data access that allows {\em sampling access} of the kind described above (in addition to the classical access to the data). 
This is called {\em sample-query access}, or SQ access in short.
Dequantization means that a classical algorithm can simulate the steps of the quantum algorithm using SQ access with similar running time dependence.
The nice property of the SQ access data structures is that it can be constructed classically in linear time. 
Note that linear time means that we lose the quantum advantage. 
However, if we keep this aside and count the SQ access data structure construction as preprocessing, the remaining computation has a similar advantage as that of the quantum algorithm. 
This is a level playing field with the QML algorithm that allows fair comparison, specifically given that setting up quantum access (called QRAM) also takes linear time.
Such classical algorithms obtained by dequantization are called {\em quantum inspired} algorithms.
Our quantum algorithms based on $D^2$-sampling fall into the category of QML algorithms that can be dequantized in Tang's SQ access model. 
This section discusses the quantum-inspired classical algorithm we obtain by dequantizing our quantum algorithm using the SQ access data structure.
Naturally, we start the discussion by looking at the definition of the SQ access data structure.

\begin{definition}[Query access, Definition 1.1 in \cite{tang-thesis}]
For a vector $\vec{v} \in \mathbb{C}^n$, we have $Q(\vec{v})$, query access to $\vec{v}$, if for all $i \in [n]$, we can query for $\vec{v}(i)$. 
The time (cost) of such a query is denoted by $\mathbf{q}(\vec{v})$.
\end{definition}

\begin{definition}[SQ-access to a vector, Definition 1.2 in \cite{tang-thesis}]
For a vector $\vec{v} \in \mathbb{C}^n$, we have $SQ(\vec{v})$, sampling and query access to $\vec{v}$, if we can:
\begin{enumerate}
\item query for entries in $\vec{v}$ as in $Q(\vec{v})$. The time cost is denoted by $\mathbf{q}(\vec{v})$.
\item obtain independent samples $i \in [n]$ where the probability of sampling $i$ is $\frac{\vec{v}(i)^2}{\norm{\vec{v}}^2}$. This distribution is denoted by $\D_{\vec{v}}$.
The time cost is denoted by $\mathbf{s}(\vec{v})$.
\item query for $\norm{\vec{v}}$. The time cost is denoted by $\mathbf{n}(\vec{v})$.
\end{enumerate}
Let $\mathbf{sq}(\vec{v}) \equiv \max{\{\mathbf{q}(\vec{v}), \mathbf{s}(\vec{v}), \mathbf{n}(\vec{v}) \}}$ denote the time cost of SQ-access.
\end{definition}

In quantum computation an $n$-dimensional state vector $\ket{v_0}$ evolves using unitary transformations $\ket{v_1} = U \ket{v_0}$, where $U$ is a $2^n \times 2^n$ unitary matrix. 
To be able to simulate such operations within the SQ-access model, we must make sure that SQ-access is available for intermediate states of the quantum operation. This means that we must (i) define an appropriate SQ-access notion for matrices and (ii) show closure properties for SQ-access. 
Fortunately, this has been shown in Tang's Thesis~\cite{tang-thesis}. 
A more flexible version of SQ-access is needed to show closure properties. This is called {\em oversampling SQ-access}. SQ-access for matrices and oversampling SQ-access is defined below.

\begin{definition}[SQ-access to a matrix]
For a matrix $A \in \mathbb{C}^{m \times n}$, we have $SQ(A)$ if we have $SQ(A(i, .))$ for all $i \in [m]$ (i.e., SQ-access to the row vectors), and $SQ(\vec{a})$, where $\vec{a}$ is the vector of row norms of $A$ (i.e., $\vec{a} = (\norm{A(1, .)}, ..., \norm{A(m, .)})^T$). The complexity of operations on a matrix is defined as: $\mathbf{q}(A) \coloneqq \max{\{\mathbf{q}(A(i, .)), \mathbf{q}(\vec{a})\}}$, $\mathbf{n}(A) \coloneqq \mathbf{n}(\vec{a})$, $\mathbf{s}(A) \coloneqq \max{\{\mathbf{s}(A(i, .)), \mathbf{s}(\vec{a})\}}$, and $\mathbf{sq}(A) \coloneqq \max{\{\mathbf{q}(A), \mathbf{n}(A), \mathbf{s}(A)\}}$
\end{definition}

\begin{definition}[Oversampling and query access: Definition 4.4 in \cite{tang-thesis}]
For $\vec{v} \in \mathbb{C}^{n}$ and $\phi \geq 1$, we have $SQ_{\phi}(\vec{v})$, $\phi$-oversampling and query access to $\vec{v}$ ($\phi$-OSQ access, in short), if we have $Q(\vec{v})$ and $SQ(\tilde{v})$ for $\tilde{v} \in \mathbb{C}^n$ a vector satisfying (i) $\norm{\tilde{v}}^2 = \phi \norm{\vec{v}}^2$ and (ii) $|\tilde{v}(i)|^2 \geq |\vec{v}(i)|^2$ for all $i \in [n]$. Denote $\mathbf{s}_{\phi}(\vec{v}) \coloneqq \mathbf{s}(\tilde{v})$, $\mathbf{q}_{\phi}(\vec{v}) \coloneqq \mathbf{q}(\tilde{v})$, $\mathbf{n}_{\phi}(\vec{v}) \coloneqq \mathbf{n}(\tilde{v})$, and $\mathbf{sq}_{\phi}(\vec{v}) \coloneqq \max{\{\mathbf{s}_{\phi}(\vec{v}), \mathbf{q}_{\phi}(\vec{v}), \mathbf{q}(\vec{v}), \mathbf{n}_{\phi}(\vec{v})\}}$.
\end{definition}

Note that $\D_{\tilde{v}}$ for $\tilde{v}$ can be seen as an oversampling distribution of $\D_{\vec{v}}$ by a factor of $\phi$.
This is because the sampling probability of index $i$ is given by 
$$\D_{\tilde{v}}(i) = \frac{|\tilde{v}(i)|^2}{\norm{\tilde{v}}^2} = \frac{|\tilde{v}(i)|^2}{\phi \norm{\vec{v}}^2} \geq \frac{|\vec{v}(i)|^2}{\phi \norm{\vec{v}}^2} = \frac{\D_{\vec{v}}(i)}{\phi}.$$
$\phi$-OSQ access also allows one to sample from $\D_{\vec{v}}$ and find $\norm{\vec{v}}$ by paying a small time penalty (as a function of $\phi$ and allowed error). This is captured in the following lemma from~\cite{tang-thesis}.

\begin{lemma}[Lemma 4.5 in \cite{tang-thesis}]\label{lemma:sq-0}
Let $0 < \delta \leq 1$ and $0 < \veps \leq 1$.
Suppose we are given $SQ_{\phi}(\vec{v})$.
Then we can sample from $\D_{\vec{v}}$  with proability at least $(1-\delta)$ in time $O\left(\phi \cdot \mathbf{sq}_{\phi}(\vec{v}) \cdot \log{\frac{1}{\delta}}\right)$. We can also estimate $\norm{\vec{v}}$ to within $(1\pm\veps)$ factor with probability at least $(1-\delta)$ in $O\left(\frac{\phi}{\veps^2} \cdot \mathbf{sq}_{\phi}(\vec{v}) \cdot \log{\frac{1}{\delta}}\right)$ time.
\end{lemma}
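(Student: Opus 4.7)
The plan is to prove both claims via a single rejection-sampling reduction: by the oversampling property, the distribution $\D_{\tilde v}$ pointwise dominates $(1/\phi)\,\D_{\vec v}$, so a simple accept/reject filter applied to samples from $\D_{\tilde v}$ outputs exact samples from $\D_{\vec v}$, and the empirical acceptance rate recovers the ratio $\|\vec v\|^2/\|\tilde v\|^2$.

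For the sampling claim, I would repeat the following trial: draw $i \sim \D_{\tilde v}$ at cost $\mathbf{s}_\phi(\vec v)$, query both $\tilde v(i)$ (cost $\mathbf{q}_\phi(\vec v)$) and $\vec v(i)$ (cost $\mathbf{q}(\vec v)$), and accept $i$ with probability $r_i = |\vec v(i)|^2/|\tilde v(i)|^2$, which lies in $[0,1]$ thanks to the pointwise bound $|\tilde v(i)|^2 \geq |\vec v(i)|^2$. A one-line computation gives the per-trial acceptance probability $\sum_i \frac{|\tilde v(i)|^2}{\|\tilde v\|^2}\cdot r_i = \frac{\|\vec v\|^2}{\|\tilde v\|^2} = \frac{1}{\phi}$, and conditioned on acceptance the returned $i$ has law exactly $\D_{\vec v}$. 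To boost the overall success probability to $1-\delta$, it suffices to run $T = \lceil \phi \ln(1/\delta) \rceil$ independent trials and output the first accepted index; the probability of no acceptance is $(1 - 1/\phi)^T \leq e^{-T/\phi} \leq \delta$. Each trial costs $O(\mathbf{sq}_\phi(\vec v))$, giving the claimed $O(\phi \cdot \mathbf{sq}_\phi(\vec v)\cdot \log(1/\delta))$ bound.

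For the norm estimate, I would run $T$ independent trials of the same rejection procedure and set $\hat p = A/T$, where $A$ counts acceptances. Since each trial is Bernoulli with mean $1/\phi$, the quantity $\hat p\cdot \|\tilde v\|^2$ is an unbiased estimator of $\|\vec v\|^2$, and $\|\tilde v\|$ is accessible via $\mathbf{n}_\phi(\vec v)$. The multiplicative Chernoff bound gives $\pr[\,|\hat p - 1/\phi| \geq (\veps/3)(1/\phi)\,] \leq 2\exp(-\Theta(\veps^2 T/\phi))$, so $T = \Theta((\phi/\veps^2)\log(1/\delta))$ trials are enough to make $\hat p$ a $(1\pm\veps/3)$-relative estimate of $1/\phi$ with probability at least $1-\delta$. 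Multiplying by $\|\tilde v\|^2$ and taking a square root, together with the elementary bound $|\sqrt{1+x} - 1| \leq |x|$ for $|x| \leq 1$, turns this into a $(1\pm\veps)$-relative estimate of $\|\vec v\|$, at total cost $O((\phi/\veps^2)\cdot \mathbf{sq}_\phi(\vec v)\cdot \log(1/\delta))$ as claimed.

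The main subtlety is keeping the norm-estimation complexity linear, rather than quadratic, in $\phi$. Because the acceptance probability is only $1/\phi$, a naive additive Hoeffding bound (controlling $\hat p$ to absolute precision $\veps$) would demand $\Theta(1/\veps^2)$ samples with the wrong $\phi$-dependence hidden in the translation to \emph{relative} error on $\|\vec v\|$. Invoking the multiplicative Chernoff form, or equivalently Bernstein's inequality exploiting that a $\mathrm{Bernoulli}(1/\phi)$ variable has variance $\Theta(1/\phi)$, is what yields the tight dependence on $\phi$ in the final sample bound.
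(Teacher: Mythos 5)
Your proof is correct and follows the same rejection-sampling argument as Lemma 4.5 of Tang's thesis, which the paper cites without reproducing the proof. The two parts hang together exactly as you describe: the pointwise bound $|\tilde v(i)|^2 \geq |\vec v(i)|^2$ makes the acceptance ratio $r_i$ a valid probability, the per-trial acceptance probability works out to $\|\vec v\|^2/\|\tilde v\|^2 = 1/\phi$, and the conditional law of an accepted index is $\D_{\vec v}$; then $O(\phi \log(1/\delta))$ trials suffice to accept once with probability $1-\delta$, and your observation about using the multiplicative (relative-error) Chernoff bound rather than an additive Hoeffding bound is precisely what keeps the norm-estimation sample complexity linear rather than quadratic in $\phi$.
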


We can also define oversampling query access to a matrix.

\begin{definition}[Oversampling and query access for a matrix]
For a matrix $A \in \mathbb{C}^{N \times d}$ and $\phi \geq 1$, we have $SQ_{\phi}(A)$, $\phi$-oversampling and query access to $A$ ($\phi$-OSQ access, in short), if we have $Q(A)$ and $SQ(\tilde{A})$ for $\tilde{A} \in \mathbb{C}^{N \times d}$ satisfying $\norm{\tilde{A}}^2_F = \phi \norm{A}^2_F$ and $|\tilde{A}(i, j)|^2 \geq |A(i, j)|^2$ for all $(i, j) \in [N] \times [d]$. The $\phi$-oversampling complexity is given by $\mathbf{sq}_{\phi}(A) = \max{\{\mathbf{q}(A), \mathbf{sq}(\tilde{A})\}}$.
\end{definition}

There is a simple tree-based data structure that supports sample-query access for vectors and matrices. The details can be found in \cite{tang-thesis}. Here, we give a summary of the running time of various operations.

\begin{lemma}[Remark 4.12 in \cite{tang-thesis}]\label{lemma:ds-1}
There is a data structure for storing a vector $\vec{v} \in \mathbb{R}^n$ with supporting the following operations: 
\begin{enumerate}
\item Reading and updating an entry of $\vec{v}$ in $O(\log{n})$ time. So, $\mathbf{q}(\vec{v}) = O(\log{n})$.
\item Finding $\norm{\vec{v}}^2$ in $O(1)$ time. So, $\mathbf{n}(\vec{v}) = O(1)$.
\item Sampling from $\D_{\vec{v}}$ in $O(\log{n})$ time. So, $\mathbf{s}(\vec{v}) = O(\log{n})$
\end{enumerate}
It follows from the above that $\mathbf{sq}(\vec{v}) = O(\log{n})$. Moreover, the time to construct the data structure is $O(n)$.
\end{lemma}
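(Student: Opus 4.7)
The plan is to implement the data structure as a complete (or near-complete) binary tree $\mathcal{T}$ with $n$ leaves, one per coordinate of $\vec{v}$. Leaf $i$ simply stores the scalar $\vec{v}(i)$; each internal node $u$ additionally stores the weight $S(u) = \sum_{i \in \mathrm{leaves}(u)} \vec{v}(i)^2$, where $\mathrm{leaves}(u)$ is the set of leaf descendants of $u$. In particular the root satisfies $S(\mathrm{root}) = \norm{\vec{v}}^2$, so $\mathbf{n}(\vec{v}) = O(1)$ is immediate. The tree has $2n-1$ nodes stored in a flat array, so the leaf for index $i$ and the parent/child of any node can be accessed in $O(1)$ via arithmetic on indices.

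The construction proceeds bottom-up: write the $n$ leaves from the input, then fill level $\ell$ using level $\ell+1$ via $S(u) = S(u_L) + S(u_R)$. Since $\mathcal{T}$ has $O(n)$ nodes and each is touched a constant number of times, the total build time is $O(n)$. A point query $\vec{v}(i)$ reads leaf $i$ in $O(1)$; an update writes the new value at the leaf and then recomputes $S(\cdot)$ at each of the $O(\log n)$ ancestors along the root-to-leaf path, each in $O(1)$. This gives $\mathbf{q}(\vec{v}) = O(\log n)$ for both reads and updates.

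For sampling from $\D_{\vec{v}}$, I would use the standard top-down descent: start at the root, and at each internal node $u$ with children $u_L, u_R$, go left with probability $S(u_L)/S(u)$ and right with probability $S(u_R)/S(u) = 1 - S(u_L)/S(u)$; return the leaf reached. Each step costs $O(1)$ (read two stored weights, one coin flip), and the depth is $O(\log n)$, yielding $\mathbf{s}(\vec{v}) = O(\log n)$. Correctness follows by a short induction on the depth: for any leaf $i$ with root-to-leaf path $\mathrm{root} = u_0, u_1, \ldots, u_{\log n} = i$, the probability the descent reaches $i$ telescopes as
\[
\prod_{t=0}^{\log n - 1} \frac{S(u_{t+1})}{S(u_t)} \;=\; \frac{S(i)}{S(\mathrm{root})} \;=\; \frac{\vec{v}(i)^2}{\norm{\vec{v}}^2} \;=\; \D_{\vec{v}}(i),
\]
which is exactly the desired distribution.

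None of the individual steps is deep; the only thing to be careful about is maintaining the invariant $S(u) = \sum_{i \in \mathrm{leaves}(u)} \vec{v}(i)^2$ under updates. Since an update to coordinate $i$ changes $S(u)$ only for ancestors of leaf $i$, and there are $O(\log n)$ such ancestors, the invariant is restored in $O(\log n)$ time by sweeping up the path. Combining the three bounds yields $\mathbf{sq}(\vec{v}) = \max\{\mathbf{q}(\vec{v}), \mathbf{n}(\vec{v}), \mathbf{s}(\vec{v})\} = O(\log n)$, and construction is $O(n)$, as claimed.
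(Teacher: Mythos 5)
Your proof is correct and follows the same construction the paper relies on (and depicts in Figure~\ref{fig:tree}): a binary tree over the $n$ coordinates storing partial sums of squared entries, with $O(1)$ root access for the norm, $O(\log n)$ root-to-leaf sweeps for updates, and a top-down weighted descent for sampling from $\D_{\vec{v}}$. The paper itself defers to Tang's thesis for the details, so there is nothing further to compare.
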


A similar lemma holds for storing matrices.

\begin{lemma}[Remark 4.12 in \cite{tang-thesis}]\label{lemma:ds-2}
For any matrix $A \in \mathbb{R}^{m \times n}$, let $\vec{A}$ denote the vector of row norms of $A$, i.e., the $i^{th}$ entry of $\vec{A}$, denoted by $\vec{A}(i)$, is $\norm{A(i, .)}$.
There is a data structure for storing a matrix $A \in \mathbb{R}^{m \times n}$ with supporting the following operations: 
\begin{enumerate}
\item Reading and updating an entry of $A$ in $O(\log{mn})$ time. 
\item Finding $\vec{A}(i)$ in $O(\log{m})$ time.
\item Finding $\norm{A}_F^2$ in $O(1)$ time. 
\item Sampling from $\D_{\vec{A}}$ and $\D_{A(i, .)}$ in $O(\log{mn})$ time. 
\end{enumerate}
The time to construct the data structure is $O(mn)$.
\end{lemma}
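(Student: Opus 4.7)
The plan is to build a two-level tree, extending the single-vector construction of Lemma~\ref{lemma:ds-1}. At the bottom level, for each row $i \in [m]$ I would store the entries of $A(i,.)$ in a binary tree with $n$ leaves as in Lemma~\ref{lemma:ds-1}, each internal node holding the sum of squared leaf values below it, so the root of the $i$-th row tree stores $\norm{A(i,.)}^2 = \vec{A}(i)^2$. At the top level I would build an analogous binary tree with $m$ leaves, where the $i$-th leaf stores $\vec{A}(i)^2$ together with a pointer to the root of the $i$-th row tree; the root of the top tree then stores $\sum_{i} \norm{A(i,.)}^2 = \norm{A}_F^2$.

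Given this structure, each claimed operation follows directly. Finding $\norm{A}_F^2$ is one read at the root of the top tree, hence $O(1)$. Finding $\vec{A}(i)$ requires descending to the $i$-th leaf of the top tree and returning the square root of the stored value, which is $O(\log m)$. Sampling from $\D_{\vec{A}}$ traverses the top tree once, at each internal node choosing the left or right subtree with probability proportional to the stored sums of squares; this costs $O(\log m)$. Sampling from $\D_{A(i,.)}$ first locates the $i$-th row tree via the top tree and then samples inside it, for a total of $O(\log m + \log n) = O(\log mn)$. Updating an entry $A(i,j)$ touches exactly one root-to-leaf path in the $i$-th row tree and one such path in the top tree, giving $O(\log n) + O(\log m) = O(\log mn)$. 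Construction is done by initializing each of the $m$ row trees in $O(n)$ time via Lemma~\ref{lemma:ds-1}, collecting the squared row norms from their roots, and then building the top tree over these $m$ values in $O(m)$ time, for a total of $O(mn)$.

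The only step that needs mild care is the update, since a single write to $A(i,j)$ must keep both levels consistent: the change has to bubble up inside the $i$-th row tree, and the new value $\norm{A(i,.)}^2$ that appears at its root must then overwrite the $i$-th leaf of the top tree and be propagated to the top root. Once this two-step update rule is in place, all sample and norm queries remain correct by induction on the operation sequence, and all the stated time bounds follow immediately from applying Lemma~\ref{lemma:ds-1} independently at the row level and the top level.
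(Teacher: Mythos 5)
Your construction is correct and is exactly the standard two-level tree from Tang's thesis (the source the paper cites for this lemma without giving its own proof): per-row binary trees storing partial sums of squares, and a top tree over the squared row norms $\vec{A}(i)^2$. The update rule you flag—propagate the change up the row tree, rewrite the corresponding top-tree leaf, and propagate up the top tree—is the right invariant-maintenance step, and the stated time and construction bounds all follow as you argue.
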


Let us now see how $D^2$-sampling can be done using the above SQ-access data structures. 
Let $V \in \mathbb{R}^{N \times d}$ be the data matrix with the row vectors $\vec{v}_1, ..., \vec{v}_N$ as data points. 
Let $C \in \mathbb{R}^{m \times d}$ be the center set matrix with the centers being the $i$ row vectors $\vec{c}_1, ..., \vec{c}_m$.
Consider $D^2$-sampling a point from $V$ with respect to the center set $C$. 
The probability of sampling the $i^{th}$ point is given by:
\[
\pr[i] = \frac{\min_{t \in [m]} \norm{\vec{v}_i - \vec{c}_t}^2}{\sum_{s \in [N]}\min_{t \in [m]} \norm{\vec{v}_s - \vec{c}_t}^2}
\]

Consider vectors $\vec{w}_1, ..., \vec{w}_m \in \mathbb{R}^{N}$ defined as:
\[
\vec{w}_j \coloneqq (\norm{\vec{v}_1 - \vec{c}_j}, \norm{\vec{v}_2 - \vec{c}_j}, ..., \norm{\vec{v}_N - \vec{c}_j})^T.
\]
The first step towards $D^2$-sampling using the SQ-access data structure for matrices $V$ and $C$ is to check whether SQ-access for vectors $\vec{w}_1, ..., \vec{w}_m$ can be made available. 
In the next lemma, we show that $\phi$-oversampling SQ-access for an appropriately chosen value of $\phi$ is possible.

\begin{lemma}\label{lemma:sq-1}
Let $\alpha$-oversampling SQ access to the data matrix $V \in \mathbb{R}^{N \times d}$ be available as $SQ_{\alpha}(V)$ and let $\beta$ oversampling SQ access be available for a vector $\vec{c} \in \mathbb{R}^d$ as $SQ_{\beta}(\vec{c})$. The we have $\gamma$ oversampling SQ access, $SQ_{\gamma}(\vec{w})$, for vector $\vec{w} \coloneqq (\norm{V(1, .) - \vec{c}}, ..., \norm{V(N, .) - \vec{c}})^T$ with $\gamma = \frac{2}{\norm{\vec{w}}^2} \cdot \left( \alpha \norm{V}^2 + N \beta \norm{\vec{c}}^2\right)$. Moreover, the complexity is related as $\mathbf{sq}_{\gamma}(\vec{w}) \leq d \cdot (\mathbf{sq}_{\alpha}(V) + \mathbf{sq}_{\beta}(\vec{c}))$.
\end{lemma}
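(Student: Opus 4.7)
The plan is to construct an explicit ``witness'' vector $\tilde{w}$ that certifies $\gamma$-oversampling access for $\vec{w}$, using the witnesses $\tilde{V}$ and $\tilde{c}$ that come bundled with $SQ_\alpha(V)$ and $SQ_\beta(\vec{c})$. Concretely, I would define
\[
\tilde{w}(i)^2 \;\coloneqq\; 2\,\|\tilde{V}(i,\cdot)\|^2 \;+\; 2\,\|\tilde{c}\|^2, \qquad i \in [N],
\]
and check the two oversampling conditions. The entrywise domination is where the factor of $2$ comes from: by the elementary inequality $\|a-b\|^2 \leq 2\|a\|^2 + 2\|b\|^2$, we get
\[
|\vec{w}(i)|^2 \;=\; \|V(i,\cdot) - \vec{c}\|^2 \;\leq\; 2\|V(i,\cdot)\|^2 + 2\|\vec{c}\|^2 \;\leq\; 2\|\tilde{V}(i,\cdot)\|^2 + 2\|\tilde{c}\|^2 \;=\; |\tilde{w}(i)|^2,
\]
where the last inequality uses the entrywise domination guarantees of $\tilde{V}$ and $\tilde{c}$. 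Summing over $i$ and using $\|\tilde{V}\|_F^2 = \alpha\|V\|^2$ and $\|\tilde{c}\|^2 = \beta\|\vec{c}\|^2$ gives $\|\tilde{w}\|^2 = 2\alpha\|V\|^2 + 2N\beta\|\vec{c}\|^2$, which matches the claimed $\gamma \|\vec{w}\|^2$.

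Next I would verify that the three primitive operations needed for $SQ_\gamma(\vec{w})$ are implementable within the claimed time. Query access $Q(\vec{w})$ is direct: to evaluate $\vec{w}(i) = \|V(i,\cdot) - \vec{c}\|$, read all $d$ entries of the $i$th row of $V$ and of $\vec{c}$ (each query costing at most $\mathbf{q}(V)$ and $\mathbf{q}(\vec{c})$ respectively), and sum squared differences; this takes $O(d\,(\mathbf{q}(V) + \mathbf{q}(\vec{c})))$ time. For $SQ(\tilde{w})$ itself: querying $\tilde{w}(i)$ needs $\|\tilde{V}(i,\cdot)\|$ and $\|\tilde{c}\|$, costing $\mathbf{n}(\tilde{V}(i,\cdot)) + \mathbf{n}(\tilde{c})$; querying $\|\tilde{w}\|$ follows from $\|\tilde{V}\|_F$ and $\|\tilde{c}\|$, costing $\mathbf{n}(\tilde{V}) + \mathbf{n}(\tilde{c})$.

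The one slightly delicate piece is sampling from $\D_{\tilde{w}}$. Here I would exploit that $\tilde{w}(i)^2$ is a sum of two nonnegative pieces, so $\D_{\tilde{w}}$ is a two-component mixture: with probability $p \coloneqq 2\|\tilde{V}\|_F^2/\|\tilde{w}\|^2$ draw $i$ from $\D_{\vec{a}}$, where $\vec{a}$ is the row-norm vector of $\tilde{V}$ (available through $SQ(\tilde{V})$), and with probability $1-p$ draw $i$ uniformly from $[N]$. The mixing weight $p$ is computable from $\|\tilde{V}\|_F$ and $\|\tilde{c}\|$. This realises $\D_{\tilde{w}}$ exactly and costs $O(\mathbf{s}(\tilde{V}) + 1)$ per sample.

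Collecting these, each of $\mathbf{q}(\tilde{w})$, $\mathbf{n}(\tilde{w})$, $\mathbf{s}(\tilde{w})$, and $\mathbf{q}(\vec{w})$ is bounded by $d \cdot (\mathbf{sq}_\alpha(V) + \mathbf{sq}_\beta(\vec{c}))$, giving the claimed bound on $\mathbf{sq}_\gamma(\vec{w})$. I do not expect any real obstacle here; the content is the choice of $\tilde{w}$ via the parallelogram-type bound, after which everything reduces to bookkeeping of the SQ access primitives on $\tilde{V}$ and $\tilde{c}$. The only thing to be careful about is the mixture-sampling step, since one must use $SQ(\tilde{V})$ (which grants access to the vector of row norms) rather than $SQ(V)$ to sample proportional to $\|\tilde{V}(i,\cdot)\|^2$.
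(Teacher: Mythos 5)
Your proposal is correct and follows essentially the same route as the paper: you pick the identical witness $\tilde{w}(i)^2 = 2\|\tilde{V}(i,\cdot)\|^2 + 2\|\tilde{c}\|^2$, verify the norm equality and entrywise domination (the paper phrases the domination via the triangle inequality followed by $(x+y)^2 \le 2(x^2+y^2)$, you compress this into $\|a-b\|^2 \le 2\|a\|^2+2\|b\|^2$, which is the same inequality), and realise $\D_{\tilde{w}}$ as the same two-component mixture with weight $\|\tilde{V}\|_F^2 / (\|\tilde{V}\|_F^2 + N\|\tilde{c}\|^2)$. Your explicit remark that the sub-sampling step must use the row-norm vector $\vec{a}$ of $\tilde{V}$ (via $SQ(\tilde{V})$, not $SQ(V)$) is slightly more careful than the paper's phrasing, but the underlying argument is the same.
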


\begin{proof}
$SQ_{\alpha}(V)$ means that $Q(V)$ and $SQ(\tilde{V})$ are available for a matrix $\tilde{V}$ such that (i) $\norm{\tilde{V}}^2 = \alpha \norm{V}^2$ and (ii) $|\tilde{V}(i, j)| \geq |V(i, j)|$ for all $(i, j) \in [N] \times [d]$.
Similarly, $SQ_{\beta}(\vec{c})$ means that $Q(\vec{c})$ and $SQ(\tilde{c})$ are available for a vector $\tilde{c}$ such that (i) $\norm{\tilde{c}}^2 = \beta \norm{\vec{c}}^2$ and (ii) $|\tilde{c}(t)| \geq |\vec{c}(t)|$ for all $t \in [d]$.
To show $SQ_{\gamma}(\vec{w})$, we first need to show $Q(\vec{w})$. This is simple since the $j^{th}$ coordinate of $\vec{w}$ can be found using all the coordinates of $V(j, .)$ and $\vec{c}$ which are available since we have $Q(V)$ and $Q(\vec{c})$. 
So, $\mathbf{q}(\vec{w}) = d \cdot (\mathbf{q}(V) + \mathbf{q}(\vec{c}))$. 
We now need to show $SQ(\tilde{w})$ for an appropriately chosen $\tilde{w}$. We consider: 
\[
\tilde{w}(j) = \sqrt{2 \left( \norm{\tilde{V}(j, .)}^2 + \norm{\tilde{c}}^2\right)}.
\]
Before we see why SQ access to $\tilde{w}$ is possible, let us first observe that $SQ(\tilde{w})$ would give a $\gamma$-oversampling SQ access to $\vec{w}$. Towards this, we note that 
\[
\norm{\tilde{w}}^2 = \sum_j |\tilde{w}(j)|^2 = 2 \sum_j \left( \norm{\tilde{V}(j, .)}^2 + \norm{\tilde{c}}^2\right) = \frac{2}{\norm{\vec{w}}^2} \left( \alpha \norm{V}^2 + N\beta \norm{\vec{c}}^2 \right) \cdot \norm{\vec{w}}^2 = \gamma \norm{\vec{w}}^2.
\]
Moreover, we have:
\[
|\vec{w}(j)| = \norm{V(j, .) - \vec{c}} \leq \norm{V(j, .)} + \norm{\vec{c}} \leq \norm{\tilde{V}(j, .)} + \norm{\tilde{c}} \leq \sqrt{2 \left( \norm{\tilde{V}(j, .)}^2 + \norm{\tilde{c}}^2\right)} = |\tilde{w}(j)|.
\]
Finally, we must show $SQ(\tilde{w})$. 
Coordinates of $\tilde{w}$ can be found using $\norm{\tilde{V}(j, .)}$'s and $\norm{\tilde{c}}$ which are available through oversampling access to $V$ and $\vec{c}$. $\norm{\tilde{w}}^2$ is available through $\norm{\tilde{V}}^2$ and $\norm{\tilde{c}}^2$.
Finally, $\D_{\tilde{w}}$ can be sampled by first selecting $\tilde{V}$ with probability $\frac{\norm{\tilde{V}}^2}{\norm{\tilde{V}}^2 + N \norm{\tilde{c}}^2}$ or $\tilde{c}$ with probability $\frac{N \norm{\tilde{c}}^2}{\norm{\tilde{V}}^2 + N \norm{\tilde{c}}^2}$ and then sampling from $\D_{\tilde{v}}$ or $\mathcal{U}_{[N]}$ (uniform distribution over $[N]$), respectively.  Given this, the probability of sampling $i$ works out to be: 
\[
\pr[i] = \frac{\norm{\tilde{V}}^2}{\norm{\tilde{V}}^2 + N \norm{\tilde{c}}^2} \cdot \frac{\norm{\tilde{V}(i, .)}^2}{\norm{\tilde{V}}^2} + \frac{N \norm{\tilde{c}}^2}{\norm{\tilde{V}}^2 + N \norm{\tilde{c}}^2} \cdot \frac{1}{N} = \frac{\tilde{w}(i)^2}{\norm{\tilde{w}}^2} = \D_{\tilde{w}}(i).
\]
From the above discussion, it follows that $\mathbf{sq}_{\gamma}(\vec{w}) \leq d \cdot (\mathbf{sq}_{\alpha}(V) + \mathbf{sq}_{\beta}(\vec{c}))$.
\end{proof}

The previous lemma gives a way to obtain oversampling query access to distances from a single center. 
In $D^2$-sampling, we have a set $C$ of $m$ centers, and we sample a point based on the squared distance from the nearest center. 
To enable this operation in the SQ framework, we need sample and query access to a vector where the $i^{th}$ coordinate is the distance of $i^{th}$ data point to the nearest center in $C$.
In the following lemma, we show how to do this. The proof closely follows Lemma 4.6 of \cite{tang-thesis}.

\begin{lemma}\label{lemma:sq-2}
Let $\vec{u}_1, ..., \vec{u}_m \in (\mathbb{R}^{\geq 0})^N$ and $\phi_1, ..., \phi_m \in \mathbb{R}^{\geq 1}$. 
Let 
$$\vec{w} \coloneqq \left(\min_{i \in [m]}{\vec{u}_i(1)}, \min_{i \in [m]}{\vec{u}_i(2)}, ..., \min_{i \in [m]}{\vec{u}_i(N)} \right)^T.$$
Suppose we have $SQ_{\phi_i}(\vec{u}_i)$ for every $i \in [m]$. 
Then there is a $\phi$-oversampling query access, $SQ_{\phi}(\vec{u})$ for $\phi = \frac{\sum_{i \in [m]} \phi_i \norm{\vec{u}_i}^2}{m \norm{\vec{w}}^2}$. The complexity is give by: $\mathbf{q}(\vec{w}) = \sum_{i \in [m]} \mathbf{q}(\vec{u}_i)$, $\mathbf{q}_{\phi}(\vec{w}) = \sum_{i \in [m]} \mathbf{q}_{\phi_i}(\vec{u}_i)$, $\mathbf{n}_{\phi}(\vec{w}) = \sum_{i \in [m]} \mathbf{n}_{\phi_i}(\vec{u}_i)$. The sampling complexity is $\mathbf{s}_{\phi}(\vec{w}) = \max_{i \in [m]}{\mathbf{s}_{\phi_i}(\vec{u}_i)}$, after a single-time pre-processing cost of $\sum_{i \in [m]} \mathbf{n}_{\phi_i}(\vec{u}_i)$.
\end{lemma}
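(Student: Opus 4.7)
The plan is to build the $\phi$-oversampling access to $\vec{w}$ by constructing a single surrogate vector $\tilde{w}$ whose entries dominate those of $\vec{w}$ coordinatewise and whose norm squared is exactly $\phi \|\vec{w}\|^2$. The natural candidate is to average the squared surrogates from each $\vec{u}_i$: define
\[
\tilde{w}(j) \coloneqq \sqrt{\frac{1}{m}\sum_{i \in [m]} \tilde{u}_i(j)^2},
\]
where $\tilde{u}_i$ is the surrogate vector underlying $SQ_{\phi_i}(\vec{u}_i)$. Since entries are non-negative, $\vec{w}(j)^2 = \min_i \vec{u}_i(j)^2 \leq \frac{1}{m}\sum_i \vec{u}_i(j)^2 \leq \frac{1}{m}\sum_i \tilde{u}_i(j)^2 = \tilde{w}(j)^2$, which gives the required coordinatewise domination. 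Summing over $j$ yields $\|\tilde{w}\|^2 = \frac{1}{m}\sum_i \|\tilde{u}_i\|^2 = \frac{1}{m}\sum_i \phi_i \|\vec{u}_i\|^2$, and therefore $\|\tilde{w}\|^2 = \phi \|\vec{w}\|^2$ for the claimed $\phi$.

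Next I would verify that $SQ(\tilde{w})$ and $Q(\vec{w})$ are both implementable from the given resources. A query for $\vec{w}(j)$ requires querying $\vec{u}_i(j)$ for each $i$ (via $Q(\vec{u}_i)$, which $SQ_{\phi_i}(\vec{u}_i)$ supplies) and taking the minimum, giving $\mathbf{q}(\vec{w}) = \sum_i \mathbf{q}(\vec{u}_i)$. A query for $\tilde{w}(j)$ requires querying $\tilde{u}_i(j)$ for all $i$, giving $\mathbf{q}_\phi(\vec{w}) = \sum_i \mathbf{q}_{\phi_i}(\vec{u}_i)$. The norm $\|\tilde{w}\|^2$ is a rescaled sum of the $\|\tilde{u}_i\|^2$, so it costs $\mathbf{n}_\phi(\vec{w}) = \sum_i \mathbf{n}_{\phi_i}(\vec{u}_i)$.

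The main work is implementing sampling from $\D_{\tilde{w}}$. Here I would use a standard two-stage sampler. Precompute and store the vector $(\|\tilde{u}_1\|^2, \ldots, \|\tilde{u}_m\|^2)$ together with its total (one-time cost $\sum_i \mathbf{n}_{\phi_i}(\vec{u}_i)$), enabling sampling an index $i \in [m]$ with probability $\|\tilde{u}_i\|^2/\sum_{i'}\|\tilde{u}_{i'}\|^2$. Given the sampled $i$, draw $j$ from $\D_{\tilde{u}_i}$ using the oracle $\mathbf{s}_{\phi_i}(\vec{u}_i)$. The overall probability that $j$ is output is
\[
\sum_{i \in [m]} \frac{\|\tilde{u}_i\|^2}{\sum_{i'}\|\tilde{u}_{i'}\|^2}\cdot \frac{\tilde{u}_i(j)^2}{\|\tilde{u}_i\|^2} = \frac{\sum_i \tilde{u}_i(j)^2}{\sum_{i'}\|\tilde{u}_{i'}\|^2} = \frac{\tilde{w}(j)^2}{\|\tilde{w}\|^2},
\]
which is exactly $\D_{\tilde{w}}(j)$. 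After the one-time preprocessing, each sample costs $\max_i \mathbf{s}_{\phi_i}(\vec{u}_i)$, matching the stated sampling complexity.

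The only subtlety I anticipate is making sure the outer sampler over $[m]$ is not counted each time; this is why the preprocessing of the $\|\tilde{u}_i\|^2$ values is separated out, after which the $m$-way discrete sampling is $O(\log m)$ (subsumed by the per-query cost). All other steps are bookkeeping, and putting them together gives the complexity statement exactly as written.
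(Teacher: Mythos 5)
Your proof is correct and follows essentially the same route as the paper's: the same surrogate $\tilde{w}(j) = \sqrt{\frac{1}{m}\sum_i \tilde{u}_i(j)^2}$, the same norm computation giving $\|\tilde{w}\|^2 = \phi\|\vec{w}\|^2$, the same two-stage sampler (pick $i$ with probability proportional to $\|\tilde{u}_i\|^2$, then sample from $\D_{\tilde{u}_i}$), and the same accounting of query/norm/sampling costs. Your domination argument ($\min \le \text{average}$ applied directly to the squared entries) is a minor reshuffling of the paper's chain of inequalities but amounts to the same observation.
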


\begin{proof}
We need to show $Q(\vec{w})$ and $SQ(\tilde{w})$ for an appropriate $\tilde{w}$. We can respond to each coordinate of $\vec{w}$ by querying $\vec{u}_i$'s and taking the minimum. This gives $\mathbf{q}(\vec{w}) = \sum_{i \in [m]} \mathbf{q}(\vec{u}_i)$. 
Let $\tilde{u}_1, ..., \tilde{u}_m$ be the vectors corresponding to the oversampling access $SQ_{\phi_1}(\vec{u}_1), ..., SQ_{\phi_m}(\vec{u}_m)$.
For oversampling query access, consider:
\[
\tilde{w}(i) = \sqrt{\frac{1}{m} \cdot \sum_{j \in [m]} |\tilde{u}_j(i)|^2}.
\]
Each coordinate of $\tilde{w}$ can be found by querying $\tilde{u}_i$'s and computing the above expression. So, we have $Q(\tilde{w})$ and $\mathbf{q}_{\phi}(\vec{w}) = \sum_{i \in [m]} \mathbf{q}_{\phi_i}(\vec{u}_i)$. The norm of $\tilde{w}$ is given by: 
\[
\norm{\tilde{w}}^2 = \frac{1}{m} \sum_{i \in [N]} \sum_{j \in [m]} |\tilde{u}_j(i)|^2 = \frac{1}{m} \sum_{j \in [m]} \norm{\tilde{u}_j}^2 = \frac{1}{m} \sum_{j \in [m]} \phi_j \norm{\vec{u}_j}^2 = \left(\frac{\sum_{j \in [m]} \phi_j \norm{\vec{u}_j}^2}{m \norm{\vec{w}}^2}\right)\norm{\vec{w}}^2 = \phi \norm{\vec{w}}^2.
\]
So, we can query the norm of $\tilde{w}$ by querying the norms of $\tilde{u}_j$'s. 
Hence, $\mathbf{n}_{\phi}(\vec{w}) = \sum_{j \in [m]} \mathbf{n}_{\phi_j}(\vec{u}_j)$.
Now, we need to show that $\tilde{w}(i)$ upper bounds $\vec{w}(i)$. 
We have: 
\[
\tilde{w}(i) = \sqrt{\frac{1}{m} \sum_{j \in [m]} |\tilde{u}_j(i)|^2} \geq \sqrt{\frac{1}{m} \sum_{j \in [m]}\min_{j' \in [m]} |\tilde{u}_{j'}(i)|^2} = \min_{j' \in [m]} \tilde{u}_{j'}(i) \geq \min_{j' \in [m]} \vec{u}_j(i) = \vec{w}(i).
\]
Now we describe how to sample from $\D_{\tilde{w}}$. We first query the norms of $\tilde{u}_j$. 
We pick a $j \in [m]$ with probability $\frac{\norm{\tilde{u}_j}^2}{\sum_{j' \in [m]} \norm{\tilde{u}_{j'}}^2}$, and then sample an index $i \in [N]$ from the distribution $\D_{\tilde{u}_j}$. The probability of sampling an index $i$ is given by:
\[
\pr[i] = \sum_{j \in [m]} \frac{\norm{\tilde{u}_j}^2}{\sum_{j' \in [m]} \norm{\tilde{u}_{j'}}^2} \cdot \frac{|\tilde{u}_j(i)|^2}{\norm{\tilde{u}_j}^2} =  \frac{\frac{1}{m} \sum_{j \in [m]} |\tilde{u}_j(i)|^2}{\frac{1}{m} \sum_{j' \in [m]} \norm{\tilde{u}_{j'}}^2} = \frac{|\tilde{w}(i)|^2}{\norm{\tilde{w}}^2}.
\]
This is the correct probability for sampling from $\D_{\tilde{w}}$. Note that from the description, the sampling complexity is $\mathbf{s}_{\phi}(\vec{w}) = \max_{i \in [m]}{\mathbf{s}_{\phi_i}(\vec{u}_i)}$, after a single-time pre-processing cost of $\sum_{i \in [m]} \mathbf{n}_{\phi_i}(\vec{u}_i)$.
\end{proof}

We now have all the basic tools to perform $D^2$-sampling in the SQ access model and evaluate its time complexity.
In the remaining discussion, we will assume that the origin is centered at one of the data points so that the maximum norm of a data vector is upper bounded by the maximum interpoint distance.

\begin{theorem}
Let $0 < \delta \leq 1$.
Let $V \in \mathbb{R}^{N \times d}$ be the dataset and let $\vec{c}_1, ..., \vec{c}_m \in \mathbb{R}^d$ be $m$ arbitrary centers from the dataset and let $SQ(V)$ and $SQ(\vec{c}_1), ..., SQ(\vec{c}_m)$ be available.
Let $\zeta$ denote the aspect ratio of the dataset (i.e., the ratio of the maximum to minimum interpoint distance).
Then there is a classical algorithm that with probability at least $(1-\delta)$ outputs a sample from the $D^2$-distribution (i.e., $D^2$-sample) of $V$ with respect to centers $\{\vec{c}_1, ..., \vec{c}_m\}$ which runs in time $O(\zeta^2 m d (\log{\frac{1}{\delta}}) \log{Nd})$.
\end{theorem}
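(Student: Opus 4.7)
The plan is to chain the three SQ--access constructions developed above: first use Lemma~\ref{lemma:sq-1} to turn $SQ(V)$ and each $SQ(\vec{c}_j)$ into $\phi$--oversampling access on the per-center distance vectors $\vec{u}_j = (\norm{\vec{v}_i - \vec{c}_j})_{i \in [N]}$; then use Lemma~\ref{lemma:sq-2} to combine these into $\phi$--oversampling access on the minimum-distance vector $\vec{w} = (\min_j \norm{\vec{v}_i - \vec{c}_j})_{i \in [N]}$; and finally apply Lemma~\ref{lemma:sq-0} to draw a sample from $\D_{\vec{w}}$, which by construction is exactly the $D^2$--distribution of $V$ with respect to $\{\vec{c}_1,\ldots,\vec{c}_m\}$.

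Concretely, starting from the exact SQ access ($\alpha = \beta = 1$) provided by Lemma~\ref{lemma:ds-1} at cost $O(\log Nd)$ per operation, Lemma~\ref{lemma:sq-1} yields $SQ_{\gamma_j}(\vec{u}_j)$ with $\gamma_j = \tfrac{2(\norm{V}_F^2 + N\norm{\vec{c}_j}^2)}{\norm{\vec{u}_j}^2}$ and $\mathbf{sq}_{\gamma_j}(\vec{u}_j) = O(d \log Nd)$. Feeding all $m$ of these into Lemma~\ref{lemma:sq-2} produces $SQ_{\phi}(\vec{w})$ with oversampling factor $\phi = \tfrac{\sum_j \gamma_j \norm{\vec{u}_j}^2}{m \norm{\vec{w}}^2}$ and per-operation cost $\mathbf{sq}_{\phi}(\vec{w}) = O(m d \log Nd)$, since each of $\mathbf{q}(\vec{w})$, $\mathbf{q}_{\phi}(\vec{w})$, $\mathbf{s}_{\phi}(\vec{w})$, $\mathbf{n}_{\phi}(\vec{w})$ is at most the sum/max over $j$ of the corresponding $\mathbf{sq}_{\gamma_j}(\vec{u}_j)$.

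The main technical step is to show $\phi = O(\zeta^2)$, i.e.\ to absorb the entire oversampling blow-up into the aspect ratio. Because the origin has been placed at a data point, every data vector and every center (a data point) has norm at most $\zeta$, so $\norm{V}_F^2 \leq N\zeta^2$ and $\norm{\vec{c}_j}^2 \leq \zeta^2$; this gives $\sum_j \gamma_j \norm{\vec{u}_j}^2 = \sum_j 2(\norm{V}_F^2 + N\norm{\vec{c}_j}^2) \leq 4Nm\zeta^2$. For the denominator, observe that each $\vec{c}_j$ is a data point, so at most $m$ entries of $\vec{w}$ can vanish; every other entry is at least the minimum interpoint distance, which equals $1$ after the scaling fixed at the outset of the section. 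Hence $\norm{\vec{w}}^2 \geq N - m = \Omega(N)$ in the relevant regime $m \leq N/2$, and dividing yields $\phi = O(\zeta^2)$. Plugging into Lemma~\ref{lemma:sq-0}, a single sample from $\D_{\vec{w}}$ with failure probability at most $\delta$ costs $O(\phi \cdot \mathbf{sq}_{\phi}(\vec{w}) \cdot \log(1/\delta)) = O(\zeta^2 \cdot m d \log(Nd) \cdot \log(1/\delta))$, exactly as claimed.

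The main obstacle I anticipate is precisely this $\norm{\vec{w}}^2 = \Omega(N)$ lower bound: it relies both on the centers being data points (so the scaled minimum interpoint distance lower-bounds generic distances) and on $m$ being bounded away from $N$. In the $k$-means++ seeding application the number of currently-chosen centers $m < k$ is always small and the argument is comfortable; outside this regime the oversampling factor bound would need to be revisited, but this is not needed for any use case invoked in the paper.
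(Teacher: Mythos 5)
Your proposal is correct and follows essentially the same route as the paper's proof: chain Lemma~\ref{lemma:sq-1}, Lemma~\ref{lemma:sq-2}, and Lemma~\ref{lemma:sq-0} in that order, then bound the oversampling factor $\phi$ by $O(\zeta^2)$ using $\norm{V}^2 \leq N\zeta^2$, $\norm{\vec{c}_j}^2 \leq \zeta^2$ (origin at a data point, minimum interpoint distance scaled to $1$), $\norm{\vec{w}}^2 \geq N-m$, and $m \leq N/2$. The complexity chain $\mathbf{sq}_{\phi_j}(\vec{u}_j) = O(d\log Nd)$ and $\mathbf{sq}_{\phi}(\vec{w}) = O(md\log Nd)$ matches the paper's accounting exactly, and your closing caveat about the $m \leq N/2$ regime is the same implicit assumption the paper makes.
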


\begin{proof}
Let $d_{norm}$ be the maximum norm of a data point.
Let $d_{min}$ and $d_{max}$ be the minimum and maximum distance between two data points, respectively. 
Then, we have $\zeta = \frac{d_{max}}{d_{min}} \geq \frac{d_{norm}}{d_{min}}$.
Let us define vectors $\vec{u}_1, ..., \vec{u}_m$ where $\vec{u}_j(i) \coloneqq \norm{V(i, .) - \vec{c}_j}$.
Using Lemma~\ref{lemma:sq-1}, we can enable $SQ_{\phi_j}(\vec{u}_j)$ for every $j \in [m]$, for $\phi_j = \frac{2}{\norm{\vec{u_j}}^2} (\norm{V}^2 + N \norm{\vec{c}_j}^2)$.
Lemma~\ref{lemma:sq-2} gives a way to define $SQ_{\phi}(\vec{w})$, where $\vec{w}$ is defined as $\vec{w}(i) \coloneqq \min_{j \in [m]} |\vec{u}_j(i)|$ and $\phi = \frac{\sum_{j \in [m]} \phi_j \norm{\vec{u}_j}^2}{m \norm{\vec{w}}^2}$. 
Note that for the $\vec{w}$ defined, $\D_{\vec{w}}$ is precisely the $D^2$ distribution over $V$ with respect to centers $\vec{c}_1, ..., \vec{c}_m$ and from Lemma~\ref{lemma:sq-0}, we know how to use $SQ_{\phi}(\vec{w})$ to sample from $\D_{\vec{w}}$ with probability at least $(1-\delta)$ in time $O(\phi \cdot \mathbf{sq}_{\phi}(\vec{w}) \cdot \log{\frac{1}{\delta}})$. For calculating the complexity, we first get a bound on $\phi$:
\[
\phi = \frac{\sum_{j \in [m]} \phi_j \norm{\vec{u}_j}^2}{m \norm{\vec{w}}^2} = \frac{\sum_{j \in [m]} \frac{2}{\norm{\vec{u_j}}^2} (\norm{V}^2 + N \norm{\vec{c}_j}^2) \norm{\vec{u}_j}^2}{m \norm{\vec{w}}^2} = \frac{2 \norm{V}^2}{\norm{\vec{w}}^2} + \frac{2N}{\norm{\vec{w}}^2} \frac{1}{m} \sum_{j \in [m]} \norm{\vec{c}_j}^2.
\]
The $i^{th}$ coordinate of $\vec{w}$ is the least distance of $V(i, .)$ from a center. This gives us $\norm{\vec{w}}^2 \geq (N-m) \cdot d_{min}^2$. Also, $\norm{V}^2 \leq N d_{norm}^2$ and $\norm{\vec{c}_j}^2 \leq d_{norm}^2$. Putting these bounds in the above equation (and that $m \leq N/2$), we get $\phi \leq 8 \frac{d_{norm}^2}{d_{min}^2} \leq 8 \zeta^2$. 
From Lemmas~\ref{lemma:ds-1} and \ref{lemma:ds-2}, we have $\mathbf{sq}(V) = O(\log{Nd})$, $\mathbf{sq}(\vec{c}_j) = O(\log{d})$. 
From Lemma~\ref{lemma:sq-1}, we have $\mathbf{sq}_{\phi_j}(\vec{u}_j) = d \cdot (\mathbf{sq}(V) + \mathbf{sq}(\vec{c}_j)) = O(d \log{Nd})$ for every $j \in [m]$. 
From Lemma~\ref{lemma:sq-2}, we have $\mathbf{sq}_{\phi}(\vec{w}) = \sum_{j \in [m]} \mathbf{sq}_{\phi_j}(\vec{u}_j) = O(m d \log{Nd})$.
So, from Lemma~\ref{lemma:sq-0}, the complexity of $D^2$-sampling a point is $O(\phi \cdot \mathbf{sq}_{\phi}(\vec{w}) \cdot \log{\frac{1}{\delta}}) = O(\zeta^2 m d (\log{\frac{1}{\delta}}) \log{Nd})$.
\end{proof}

The $k$-means++ algorithm is simply $D^2$-sampling $k$ centers iteratively while updating the list of centers. 
We summarise the discussion of this section by giving the complexity of the $k$-means++ algorithm and the proof of Theorem~\ref{thm:qikmpp}. We restate the theorem for ease of reading.

\begin{theorem-non}[Restatement of Theorem~\ref{thm:qikmpp}]
There is a classical implementation of $k$-means++ ({\em which we call QI-$k$-means++}) that runs in time $O(Nd) + \tilde{O}(\zeta^2 k^2 d)$. Moreover, with probability at least $0.99$, QI-$k$-means++ outputs $k$ centers that give $O(\log{k})$-approximation in expectation for the $k$-means problem.
\end{theorem-non}

\begin{proof}
The $Nd$ term in the running time expression is for setting up $SQ(V)$, the sample-query access to the data matrix $V$. This may also be regarded as the preprocessing step. This is followed by a $D^2$-sampling $k$ centers iteratively while updating the set of centers by including the new centers sampled. From the previous theorem, we know that when we have $m$ centers, the cost of obtaining a $D^2$-sample is $O(\zeta^2 m d (\log{\frac{1}{\delta}}) \log{Nd})$ with probability at least $(1-\delta)$. 
To obtain an overall probability of success of $0.99$, we need to have a success probability of at least $1-O(1/k)$ in each iteration. This makes the overall $D^2$-sampling cost $O(\zeta^2 k^2 d \log{k} \log{Nd})$. So, the overall time complexity of QI-$k$-means++ is $O\left( Nd  \right) + \tilde{O}(\zeta^2 k^2 d)$. Since with probability at least $0.99$, we obtain $k$ perfectly $D^2$-sampled center, the approximation guarantee remains the same as that of $k$-means++, which is $O(\log{k})$.\footnote{What happens with the remaining 0.01 probability? $D^2$-sampling in the SQ model succeeding with probability at least $(1-\delta)$ means that with probability at least $(1-\delta)$ a center is sampled with $D^2$ distribution, and with the remaining probability, no point is sampled.}
\end{proof}

\subsection{Quantum inspired approximate $D^2$-sampling}
Sampling from a distribution that is $\veps$-close to the $D^2$ distribution is sufficient to obtain $O(\log{k})$ approximation.  
This follows from the analysis of the noisy-$k$-means++ algorithm~\cite{noisy-kmpp23}.
This relaxation (without any approximation penalty beyond constant factors) allows us to shave off the factor of $d$ from the time complexity of the quantum-inspired algorithm. 
However, as we will see in the analysis of this section, it is at the cost of worsening the dependence on the aspect ratio. 
So, the algorithm in this section should be used for high-dimensional cases where the aspect ratio is small.
The factor of $d$ in the sampling component of QI-$k$-means++ appears because for enabling the oversampling and query access for the vector $\vec{w} \coloneqq (\norm{V(1, .) - \vec{c}}, ..., \norm{V(N, .) - \vec{c}})^T$ ($\vec{c}$ is a center), we need to access the $d$ coordinates of $V(i, .)$ and $\vec{c}$ to be able to answer the query access to the $i^{th}$ index of $\vec{w}$, which is $\norm{V(i, .) - \vec{c}}$.
Note that $\norm{V(i, .) - \vec{c}} = \sqrt{\norm{V(i, .)}^2 + \norm{\vec{c}}^2 - 2 \langle V(i, .), \vec{c}\rangle}$, where $\langle V(i, .), \vec{c}\rangle$ denotes the dot product. The terms $\norm{V(i, .)}^2$ and $\norm{\vec{c}}^2$ are  available in $O(1)$ time. The dot product term is costly and adds a factor of $d$. So, the key idea in eliminating the factor of $d$ is to replace the dot product term with an appropriate approximation that requires a much smaller access time. This uses the standard sampling trick - randomly sample a subset of indices, compute the dot product restricted to these indices of the subset, and scale (see Lemma 5.17 in \cite{tang-thesis}). The number of indices we need to sample to ensure that the resulting approximation is within $(1\pm\veps)$ of $\norm{V(i, .) - \vec{c}}$ for every $i$ with probability at least $(1 - \delta)$ is $\tilde{O}\left(\frac{\log{(1/\delta)}}{\veps^2} \cdot \frac{\norm{\vec{c}}^2 \cdot \norm{V(i, .)}^2}{\norm{V(i, .) - \vec{c}}^2}\right) = \tilde{O}\left( \frac{\zeta^4 \log{1/\delta}}{\veps^2}\right)$. The remaining steps are the same as those in the exact version, except that $d$ is replaced with $\tilde{O}\left( \frac{\zeta^4 \log{1/\delta}}{\veps^2}\right)$. This results in the overall running time $O(Nd) + \tilde{O} \left( \zeta^6 k^2\right)$.

\section{A Quantum Approximation Scheme (proof of Theorem~\ref{thm:main})}

We start the discussion with the $D^2$-sampling method. 
In particular, we would like to check the robustness of the approximation guarantee provided by the $D^2$-sampling method against errors in estimating the distances between points.
We will show that the $D^2$-sampling method gives a constant pseudo-approximation even under sampling errors.

\subsection{Pseudoapproximation using $D^2$-sampling}
Let the matrix $V \in \mathbb{R}^{N \times d}$ denote the dataset, where row $i$ contains the $i^{th}$ data point $v_i \in \mathbb{R}^d$. 
Let the matrix $C \in \mathbb{R}^{t \times d}$ denote any $t$-center set, where row $i$ contains the $i^{th}$ center $c_i \in \mathbb{R}^d$ out of the $t$ centers.
Sampling a data point using the $ D^2$ distribution w.r.t. ({\em short for with respect to}) a center set $C$ means that the datapoint $v_i$ gets sampled with probability proportional to the squared distance to its nearest center in the center set $C$.
This is also known as $D^2$ sampling w.r.t. center set $C$.
More formally, data points are sampled using the distribution 
$\left(\frac{D^2(v_1, C)}{\sum_j D^2(v_j, C)}, ..., \frac{D^2(v_N, C)}{\sum_j D^2(v_j, C)} \right)$, where $D^2(v_j, C) \equiv \min_{c \in C} D^2(v_j, c)$. 
For the special case $C = \emptyset$, $D^2$ sampling is the same as uniform sampling.
The $k$-means++ seeding algorithm starts with an empty center set $C$ and, over $k$ iterations, adds a center to $C$ in every iteration by $D^2$ sampling w.r.t. the current center set $C$.
It is known from the result of \cite{av07} that this $k$-means++ algorithm above gives an $O(\log{k})$ approximation in expectation. 
It is also known from the result of \cite{adk09} that if $2k$ centers are sampled, instead of $k$ ({\it i.e., the for-loop runs from $1$ to $2k$}), the cost with respect to these $2k$ centers is at most some constant times the optimal $k$-means cost. 
Such an algorithm is called a {\em pseudo approximation} algorithm.
Such a pseudo approximation algorithm is sufficient for the approximation scheme of \cite{bgjk20}.
So, we will quantize the following constant factor pseudo-approximation algorithm.
 \begin{algorithm}[ht]
  \caption{A pseudo-approximation algorithm based on $D^2$-sampling.}
  \label{algo:kmpp}
 \begin{algorithmic} 
    \STATE {\bf Input:}  $(V, k)$\;
        \STATE $C \gets \{\}$ \;
        \FOR{$i$ = $1$ to $2k$}
        		\STATE Pick $c$ using $D^2$-sampling w.r.t. center set $C$ \;
		\STATE $C := C \gets \{c\}$ \;
        \ENDFOR
        \RETURN $C$
   \label{l:out}
   \end{algorithmic}
\end{algorithm}

In the quantum simulation of the above sampling procedure, there will be small errors in the sampling probabilities in each iteration. We need to ensure that the constant approximation guarantee of the above procedure is robust against small errors in the sampling probabilities owing to errors in distance estimation.
We will work with a relative error of $(1\pm \delta)$ for small $\delta$.
Following is a crucial lemma from \cite{av07} that we will need to show the pseudo approximation property of Algorithm 1. 

\begin{lemma}[Lemma 3.2 in \cite{av07}]\label{lem:kmpp}
Let $A$ be an arbitrary optimal cluster, and let $C$ be an arbitrary set of centers. Let $c$ be a center chosen from $A$ with $D^2$-sampling with respect to $C$.
Then $\E[cost(A, C \cup \{c\})] \leq 8 \cdot OPT(A)$.
\end{lemma}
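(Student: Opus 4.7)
\textbf{Proof plan for Lemma~\ref{lem:kmpp}.} Let $T_C \coloneqq \sum_{a \in A} D^2(a,C)$ be the current cost of cluster $A$ (so the $D^2$-sampling probability of $a_0 \in A$, conditioned on sampling from $A$, equals $D^2(a_0,C)/T_C$). Since $c = a_0$ is added to the center set, each point $a \in A$ contributes $\min(D^2(a,C), D^2(a,a_0))$ to the new cost. Hence
\[
\E[\mathrm{cost}(A, C \cup \{c\})] \;=\; \sum_{a_0 \in A} \frac{D^2(a_0,C)}{T_C} \sum_{a \in A} \min\bigl(D^2(a,C),\, D^2(a,a_0)\bigr).
\]
The first step of the proof is to decouple the numerator $D^2(a_0,C)$ from the messy structure of $C$. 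For this I would use the triangle inequality $D(a_0,C) \le D(a,C) + D(a,a_0)$ for any $a \in A$, square it using $(x+y)^2 \le 2x^2 + 2y^2$, and then average over $a \in A$ to obtain
\[
D^2(a_0,C) \;\le\; \frac{2}{|A|}\,T_C \;+\; \frac{2}{|A|} \sum_{a \in A} D^2(a, a_0).
\]

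Plugging this bound into the expectation splits it into two pieces. For the first piece (the $\frac{2}{|A|} T_C$ contribution) I would use $\min(D^2(a,C),D^2(a,a_0)) \le D^2(a,a_0)$, which cancels the $1/T_C$ and leaves $\frac{2}{|A|}\sum_{a_0,a \in A} D^2(a,a_0)$. For the second piece I would instead use $\min(D^2(a,C),D^2(a,a_0)) \le D^2(a,C)$; the inner sum over $a$ then equals $T_C$, which again cancels the $1/T_C$, leaving $\frac{2}{|A|}\sum_{a_0,b \in A} D^2(b,a_0)$. So both pieces reduce to the same double sum of squared intra-cluster distances.

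The final step is the standard identity $\sum_{a_0, a \in A} D^2(a,a_0) = 2|A|\cdot \mathrm{OPT}(A)$, which follows from the fact that the centroid $\mu(A)$ minimizes $\sum_a D^2(a,x)$ and the bias-variance decomposition $\sum_a D^2(a,x) = \sum_a D^2(a,\mu(A)) + |A|\,D^2(x,\mu(A))$; averaging over $x = a_0 \in A$ and using $\sum_{a_0} D^2(a_0,\mu(A)) = \mathrm{OPT}(A)$ yields the identity. Each of the two pieces therefore contributes at most $\frac{2}{|A|}\cdot 2|A|\cdot \mathrm{OPT}(A) = 4\,\mathrm{OPT}(A)$, summing to the claimed $8\,\mathrm{OPT}(A)$.

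The main obstacle is making the right choice of upper bound on the $\min$ in each of the two split terms: if one uses $\min \le D^2(a,a_0)$ uniformly, the $1/T_C$ in the second piece does not cancel and one is stuck with a term that cannot be bounded by $\mathrm{OPT}(A)$ alone. The two-sided choice (one $\min$ is replaced by $D^2(a,a_0)$, the other by $D^2(a,C)$) is the crux, and it is what makes the triangle-inequality splitting of $D^2(a_0,C)$ into a $T_C$-part and an intra-cluster-part work so cleanly.
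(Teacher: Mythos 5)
Your proposal is correct and takes essentially the same route as the paper: the paper cites Lemma~\ref{lem:kmpp} from \cite{av07} without proof, but its proof of the robust generalization Lemma~\ref{lem:kmppe} is exactly your argument with $(1\pm\delta)$ factors carried along, and your proof is the $\delta=0$ case. In particular, the paper uses the same triangle-inequality split of $D^2(a_0,C)$ into a $T_C$-part and an intra-cluster part, the same asymmetric choice of which side of the $\min$ to keep in each of the two resulting pieces (which you correctly identify as the crux), and the same identity $\sum_{a_0,a\in A}D^2(a,a_0)=2|A|\cdot\mathrm{OPT}(A)$.
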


The above lemma is used as a black box in the analysis of Algorithm 1 in \cite{adk09}. 
The following version of the lemma holds for distance estimates with a relative error of $(1 \pm \delta)$ and gives a constant factor approximation guarantee. 
Since Lemma~\ref{lem:kmpp} is used as a black box in the analysis of Algorithm 1, replacing this lemma with Lemma~\ref{lem:kmppe} also gives a constant factor approximation to the $k$-means objective. We will use the following notion of the closeness of two distance functions.

\begin{definition}
A distance function $D_1$ is said to be $\delta$-close to distance function $D_2$, denoted by $D_1 \sim_{\delta} D_2$, if for every pair of points $x, y \in \mathbb{R}^d$, $D_1(x, y) \in (1\pm \delta) \cdot D_2(x, y)$.\footnote{We use the notation that for positive reals $P, Q$, $P \in (1\pm\delta)\cdot Q$ if $(1-\delta) \cdot Q \leq P \leq (1+\delta) \cdot Q$.}
\end{definition}

\begin{lemma}\label{lem:kmppe}
Let $0 < \delta \leq 1/2$.
Let $A$ be an arbitrary optimal cluster and $C$ be an arbitrary set of centers. Let $c$ be a center chosen from $A$ with $\D^2$-sampling with respect to $C$, where $\D \sim_{\delta} D$.
Then $\E[cost(A, C \cup \{c\})] \leq 72 \cdot OPT(A)$.
\end{lemma}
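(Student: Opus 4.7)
\textbf{Proof plan for Lemma~\ref{lem:kmppe}.}

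The plan is to reduce directly to the original Arthur--Vassilvitskii bound (Lemma~\ref{lem:kmpp}) by observing that the noisy distance function $\D$ only affects the \emph{sampling probabilities}, not the $k$-means cost itself, which is always measured with respect to the true distance $D$. So the expected cost under $\D^2$-sampling can be written as a sum over candidate centers $c \in A$ of $\Pr_{\D^2}[c] \cdot \mathrm{cost}(A, C \cup \{c\})$, and all I need is a pointwise comparison of $\Pr_{\D^2}[c]$ with $\Pr_{D^2}[c]$.

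First, I would exploit the hypothesis $\D \sim_\delta D$ to get the squared bound $\D^2(a, C) \in (1\pm\delta)^2 \cdot D^2(a, C)$ for every point $a$ and center set $C$. From this, for any $a_0 \in A$,
\[
\Pr_{\D^2}[a_0] \;=\; \frac{\D^2(a_0, C)}{\sum_{a \in A}\D^2(a, C)} \;\leq\; \left(\frac{1+\delta}{1-\delta}\right)^{\!2} \cdot \frac{D^2(a_0, C)}{\sum_{a \in A} D^2(a, C)} \;=\; \left(\frac{1+\delta}{1-\delta}\right)^{\!2} \Pr_{D^2}[a_0].
\]
For $\delta \leq 1/2$, the multiplicative factor $\bigl(\tfrac{1+\delta}{1-\delta}\bigr)^2$ is at most $9$.

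Second, I would use this pointwise bound to transfer the expectation:
\[
\E_{c \sim \D^2}\!\bigl[\mathrm{cost}(A, C \cup \{c\})\bigr] \;=\; \sum_{c \in A}\Pr_{\D^2}[c]\cdot\mathrm{cost}(A, C\cup\{c\}) \;\leq\; 9 \cdot \E_{c \sim D^2}\!\bigl[\mathrm{cost}(A, C \cup \{c\})\bigr].
\]
Here the crucial point (which I would emphasize) is that $\mathrm{cost}(A, C\cup\{c\}) = \sum_{a\in A}\min\bigl(D^2(a,C), D^2(a,c)\bigr)$ is the \emph{true} cost and does not depend on $\D$ at all; the noise is entirely encapsulated in the sampling weights. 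Applying Lemma~\ref{lem:kmpp} to bound the last expectation by $8 \cdot OPT(A)$ yields $\E_{c \sim \D^2}\!\bigl[\mathrm{cost}(A, C \cup \{c\})\bigr] \leq 9 \cdot 8 \cdot OPT(A) = 72 \cdot OPT(A)$, as required.

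I do not anticipate a genuine obstacle here: the argument is a straightforward robustness calculation since the noisy distances enter only through the sampling distribution, and the ratio of the noisy to the true $D^2$-distribution on $A$ is uniformly bounded by $\bigl(\tfrac{1+\delta}{1-\delta}\bigr)^2$. The only thing to watch is that the constant $72$ comes from the product $8 \cdot \bigl(\tfrac{1+\delta}{1-\delta}\bigr)^2$ evaluated at the worst case $\delta = 1/2$; for smaller $\delta$ the bound improves accordingly, which is what the downstream approximation-scheme analysis will want.
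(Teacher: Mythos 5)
Your proof is correct and takes a cleaner, more modular route than the paper's. The paper re-derives the Arthur--Vassilvitskii bound (Lemma~\ref{lem:kmpp}) from scratch with the noisy distances folded in: it starts from the triangle inequality $D(a_0) \le D(a) + D(a,a_0)$, converts it to a noisy form $\D(a_0) \le \frac{1+\delta}{1-\delta}\D(a) + (1+\delta)D(a_0,a)$, squares, averages over $a \in A$, and then carries these factors through the rest of the expectation computation to reach $8\bigl(\frac{1+\delta}{1-\delta}\bigr)^2 \cdot OPT(A) \le 72 \cdot OPT(A)$. You instead observe that the noise enters only through the sampling weights, while $\mathrm{cost}(A, C\cup\{c\})$ is always evaluated with the exact $D$; so it suffices to bound the pointwise ratio of the noisy to the exact sampling distribution by $\bigl(\frac{1+\delta}{1-\delta}\bigr)^2$ and invoke Lemma~\ref{lem:kmpp} as a black box. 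Both routes yield the same constant $8\bigl(\frac{1+\delta}{1-\delta}\bigr)^2 \le 72$ at $\delta=1/2$. Your argument is shorter and more reusable: it would port unchanged to any lemma whose conclusion is an expectation over $D^2$-sampling of a quantity not depending on $\D$, whereas the paper's approach requires reopening the original proof each time. One small step you use implicitly and should state: the min over centers preserves the multiplicative error, i.e.\ $\D(a,C)=\min_{c'\in C}\D(a,c') \in (1\pm\delta)\cdot\min_{c'\in C}D(a,c') = (1\pm\delta)\cdot D(a,C)$, which follows by upper-bounding $\D(a,C)$ via the $D$-minimizer and lower-bounding it via the $\D$-minimizer.
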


\begin{proof}
Let $D(a)$ denote the distance of the point $a$ from the nearest center in $C$ and let $\D(a)$ denote the estimated distance. We have $\D(a) \in D(a) \cdot (1 \pm \delta)$. The following expression gives the expectation:
\begin{eqnarray*}
\sum_{a _0 \in A} \frac{\D^2(a_0)}{\sum_{a \in A} \D^2(a)} \cdot \sum_{a' \in A} \min{\left( D^2(a'), D^2(a', a_0) \right)}
\end{eqnarray*}
Note that for all $a_0, a \in A$, $D(a_0) \leq D(a) + D(a, a_0)$. This gives $\D(a_0) \leq \frac{1+\delta}{1-\delta} \cdot \D(a) + (1+\delta) \cdot D(a_0, a)$, which further gives 
$\D^2(a_0) \leq 2\left(\frac{1+\delta}{1-\delta}\right)^2 \cdot \D^2(a) + 2(1+\delta)^2 \cdot D^2(a_0, a)$ and $\D^2(a_0) \leq \frac{2}{|A|} \left(\frac{1+\delta}{1-\delta}\right)^2 \cdot \sum_{a \in A} \D^2(a) + \frac{2}{|A|} (1+\delta)^2\cdot \sum_{a \in A} D^2(a_0, a)$. 
We use this to obtain the following upper bound on the expectation $\E[cost(A, C \cup \{c\})]$:
\begin{eqnarray*}
 && \sum_{a _0 \in A} \frac{\D^2(a_0)}{\sum_{a \in A} \D^2(a)} \cdot \sum_{a' \in A} \min{\left( D^2(a'), D^2(a', a_0) \right)} \\
&\leq&  \sum_{a _0 \in A} \frac{\left(\frac{2}{|A|}\left(\frac{1+\delta}{1-\delta}\right)^2 \sum_{a \in A} \D^2(a) \right)}{\sum_{a \in A} \D^2(a)} \cdot 
 \sum_{a' \in A} \min{\left( D^2(a'), D^2(a', a_0) \right)} +\\ 
&& \sum_{a _0 \in A} \frac{\left(\frac{2}{|A|} (1+\delta)^2\sum_{a \in A} D^2(a_0, a)\right)}{\sum_{a \in A} \D^2(a)} \cdot  \sum_{a' \in A} \min{\left( D^2(a'), D^2(a', a_0) \right)}\\
&\leq& \sum_{a_0 \in A} \sum_{a' \in A} \frac{2}{|A|} \left(\frac{1+\delta}{1-\delta}\right)^2 D^2(a', a_0) + \sum_{a_0 \in A} \sum_{a \in A} \frac{2}{|A|} \left(\frac{1+\delta}{1-\delta}\right)^2 D(a_0, a)^2\\
&=& \frac{4}{|A|} \left(\frac{1+\delta}{1-\delta}\right)^2 \sum_{a_0 \in A} \sum_{a \in A} D^2(a_0, a)\\
&=& 8 \left(\frac{1+\delta}{1-\delta}\right)^2 OPT(A)\\
&\leq& 72 \cdot OPT(A).
\end{eqnarray*}
This completes the proof of the lemma.
\end{proof}

We will use this lemma in the approximation scheme of \cite{bgjk20}. 
However, this lemma may be of independent interest as this gives a quantum pseudo approximation algorithm with a constant factor approximation that runs in time that is polylogarithmic in the data size and linear in $k$ and $d$. We will discuss this quantum algorithm in the next section.

\subsection{Approximation scheme of \cite{bgjk20}}
A high-level description of the approximation scheme of \cite{bgjk20} was given in the introduction. 
A more detailed pseudocode is given in Algorithm~\ref{algo:2}.
\begin{algorithm}[ht]
\caption{Algorithm of \cite{bgjk20}. For our Quantum application, we will replaces $D$ with $\delta$-close $\D$.}
\label{algo:2}
\begin{algorithmic}[1]
	\STATE {\bf Input}: $(V, k, \veps, C_{init})$, where $V$ is the dataset, $k>0$ is the number of clusters, $\veps>0$ is the error parameter, and $C_{init}$ is a $k$ center set that gives constant (pseudo)approximation.
	\STATE {\bf Output}: A list $\mathcal{L}$ of $k$ center sets such that for at least one $C \in \mathcal{L}$, $\Phi(V, C) \leq (1 + \veps) \cdot OPT$.
	\STATE {\bf Constants}: $\rho = O(\frac{k}{\veps^4})$; $\tau = O(\frac{1}{\veps})$
	\STATE $\mathcal{L} \gets \emptyset$; $count \gets 1$
	\REPEAT
		\STATE Sample a multi-set $M$ of $\rho k$ points from $V$ using $D^2$-sampling wrt center set $C_{init}$
		\STATE $M \gets M \cup \{ \tau k \textrm{ copies of each element in $C_{init}$}\}$
		\FORALL{disjoint subsets $S_1,...,S_k$ of $M$ such that $\forall i, |S_i| = \tau$} 
			\STATE $\mathcal{L} \gets \mathcal{L} \cup (\mu(S_1), ..., \mu(S_k))$ 
		\ENDFOR
		\STATE $count$++
	\UNTIL{$count < 2^k$}
	\RETURN $\mathcal{L}$
\end{algorithmic}
\end{algorithm}
In addition to the input instance $(V, k)$ and error parameter $\veps$, the algorithm is also given a constant approximate solution $C_{init}$, which is used for $D^2$-sampling. 
A pseudoapproximate solution $C_{init}$ is sufficient for the analysis in \cite{bgjk20}. 
The discussion from the previous subsection gives a robust algorithm that outputs a pseudoapproximate solution even under errors in distance estimates. 
So, the input requirement of Algorithm~\ref{algo:2} can be met.
Now, the main ingredient being $D^2$-sampling, we need to ensure that errors in distance estimate do not seriously impact the approximation analysis of Algorithm~\ref{algo:2}. 
We state the main theorem of \cite{bgjk20} before giving the analogous statement for the modified algorithm where $D$ is replaced with $\D$ that is $\delta$-close to $D$.

\begin{theorem}[Theorem 1 in \cite{bgjk20}]\label{thm:bgjk}
Let $0 < \veps \leq 1/2$ be the error parameter, $V \in \mathbb{R}^{N \times d}$ be the dataset, $k$ be a positive integer, and let $C_{init}$ be a constant (pseudo)approximate solution for dataset $V$.
Let $\mathcal{L}$ be the list returned by Algorithm~\ref{algo:2} on input $(V, k, \veps, C_{init})$ using the Euclidean distance function $D$. Then with probability at least $3/4$, $\mathcal{L}$ contains a center set $C$ such that $\Phi(V, C) \leq (1 + \veps) \cdot OPT$. 
Moreover, $|\mathcal{L}| = \tilde{O} \left( 2^{\tilde{O}(\frac{k}{\veps})}\right)$ and the running time of the algorithm is $O(Nd |\mathcal{L}|)$.
\end{theorem}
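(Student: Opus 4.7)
The plan is to analyze a single iteration of the outer repeat loop and show it produces a good candidate with constant probability, then amplify via the $2^k$ outer iterations. Fix an optimal clustering $O_1^*,\ldots,O_k^*$ with centroids $\mu_i^* = \mu(O_i^*)$, so $OPT = \sum_i \Phi(O_i^*, \mu_i^*)$. Since $C_{init}$ is a constant pseudoapproximation, $\Phi(V, C_{init}) = O(OPT)$. First I would partition the optimal clusters into \emph{heavy} clusters, whose contribution to $\Phi(V, C_{init})$ is at least $\Omega(\veps/k) \cdot \Phi(V, C_{init})$, and \emph{light} clusters (the rest); the light clusters collectively contribute at most $O(\veps \cdot OPT)$ to $\Phi(V, C_{init})$.

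For each heavy cluster $O_i^*$, the probability that a single $D^2$-sample with respect to $C_{init}$ lands in $O_i^*$ is $\Omega(\veps/k)$, so a Chernoff bound applied to the $\rho k = O(k^2/\veps^4)$ samples gives that each heavy cluster receives at least $\Omega(1/\veps^3)$ of them with high probability. Conditioned on being in $O_i^*$, the samples follow the $D^2$-distribution restricted to $O_i^*$ rather than the uniform one, but a rejection/coupling argument exhibits a uniformly random subset of size $\tau = O(1/\veps)$ inside these samples with constant probability. This is the core technical step: it uses that the $D^2$-weights inside $O_i^*$ are not too skewed relative to uniform weights, which follows from the constant pseudoapproximation property of $C_{init}$ applied cluster-by-cluster. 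Once a uniform sample $S_i \subseteq O_i^*$ of size $\tau$ is in hand, the Inaba et al.\ sampling lemma yields $\Phi(O_i^*, \mu(S_i)) \leq (1+\veps) \Phi(O_i^*, \mu_i^*)$ with constant probability.

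For each light cluster $O_i^*$, I would choose the candidate center to be the $\tau$-fold copy of the center in $C_{init}$ nearest to $\mu_i^*$; these copies are explicitly included in the multiset $M$ by construction. Summing over light clusters, the resulting cost is bounded by the $C_{init}$-cost on the light clusters, which is $O(\veps \cdot OPT)$ and is absorbed in the $(1+\veps)$ factor. Since the all-subsets loop enumerates every disjoint $k$-tuple $(S_1,\ldots,S_k)$ of size-$\tau$ subsets of $M$, the favourable tuple described above appears in $\mathcal{L}$, and $(\mu(S_1),\ldots,\mu(S_k))$ achieves cost at most $(1+\veps) OPT$. Combining the heavy-cluster quota event, the uniform-extraction event, and the Inaba sampling event, one iteration of the outer repeat loop succeeds with probability bounded below by some constant less than $1$; running $2^k$ iterations pushes the failure probability below $1/4$.

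The main obstacle is the uniform-extraction step: showing that $D^2$-samples biased toward distant points nonetheless contain a uniform sample of size $\tau$ from each heavy cluster. The sample count $\rho = O(k/\veps^4)$ is tuned precisely to make this work, and without the constant pseudoapproximation guarantee on $C_{init}$ the $D^2$-distribution inside a cluster could be too skewed for the coupling to go through. The running-time bound is immediate once success is established: $|\mathcal{L}| \leq 2^k \cdot (|M|)^{\tau k} = 2^{\tilde{O}(k/\veps)}$, each candidate requires $O(Nd)$ time to evaluate $\Phi(V, \cdot)$, and therefore the total running time is $O(Nd \cdot |\mathcal{L}|)$ as claimed.
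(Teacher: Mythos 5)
There is a genuine gap in both halves of your argument, and in both cases it comes from the same missing device: the paper's construction of a \emph{surrogate cluster} $X_i'$ by \emph{collapsing} points of $X_i$ that lie close to $C_{init}$ onto their nearest center in $C_{init}$. For your light-cluster step, picking the single center in $C_{init}$ nearest to $\mu_i^*$ and taking $\tau$ copies of it does not give cost bounded by the $C_{init}$-cost on $X_i$: if $X_i$ is split between two $C_{init}$ centers $c_1, c_2$ that are far apart, every point of $X_i$ is cheap under $C_{init}$, yet $\mu(X_i) \approx (c_1+c_2)/2$ is far from both centers and the one-center cost carries an additive $|X_i|\,\|c^*-\mu_i^*\|^2 = \Theta(\Delta(X_i))$ term that is not absorbed by the $(1+\veps)$ factor. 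The paper's proof instead replaces $X_i$ with $X_i' = \{c(x): x\in X_i\}$, applies Inaba's lemma to $X_i'$ (giving a centroid of a $\tau$-sized uniform subsample of $C_{init}$-centers with repetition, which is exactly what $C_i' \subseteq M$ makes available), and then transfers the guarantee back to $X_i$ via bounds on $\|m - m'\|$ and $\Delta(X_i')$; the averaging over several $C_{init}$ centers is what saves the argument.

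For your heavy-cluster step, the claim that ``the $D^2$-weights inside $O_i^*$ are not too skewed relative to uniform'' because $C_{init}$ is a constant pseudoapproximation is simply false: the pseudoapproximation controls the total cost, not per-point weights, and a point coinciding with a $C_{init}$ center has $D^2$-weight zero. No rejection/coupling scheme can manufacture a uniform sample from $O_i^*$ out of $D^2$-samples alone because the acceptance probability on such points vanishes. The paper's actual mechanism introduces a radius $R$ with $R^2 = \frac{\veps^2}{41}\cdot\frac{\Phi(C_{init},X_i)}{|X_i|}$, splits $X_i$ into $X_i^{near}$ (within $R$ of $C_{init}$) and $X_i^{far}$, collapses $X_i^{near}$ onto $C_{init}$ to form $X_i'$, proves that every point of $X_i^{far}$ is $D^2$-sampled with probability at least $\gamma/|X_i|$ (Lemma~\ref{lem:osample}) and that $\pr[X_i^{far}] \geq \veps/(8\alpha k)$, and then constructs a near-uniform sample from $X_i'$ by randomly mixing a $D^2$-sample from $X_i^{far}$ with a uniformly random element of the collapsed multi-set $c(X_i^{near})$ (again supplied by $C_i' \subseteq M$); a final $\nl$-rejection step flattens this to exactly uniform. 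Your outline has the right skeleton (case split on cluster weight under $C_{init}$, Inaba, amplification via $2^k$ outer rounds) and the running-time accounting is fine, but without the near/far threshold, the collapse to $X_i'$, and the mixing of $D^2$-samples with $C_{init}$-center copies, both cases break at the step where you need a uniform sample to feed Inaba.
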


We give the analogous theorem with access to the Euclidean distance function $D$ replaced with a function $\D$ that is $\delta$-close to $D$.

\begin{theorem}\label{thm:bgjke}
Let $0 < \veps \leq \frac{1}{2}$ be the error parameter, $0 < \delta < 1/2$ be the closeness parameter, $V \in \mathbb{R}^{N \times d}$ be the dataset, $k$ be a positive integer, and let $C_{init}$ be a constant (pseudo)approximate solution for dataset $V$.
Let $\mathcal{L}$ be the list returned by Algorithm~\ref{algo:2} on input $(V, k, \veps, C_{init})$ using the distance function $\D$ that is $\delta$-close to the Euclidean distance function $D$. Then with probability at least $3/4$, $\mathcal{L}$ contains a center set $C$ such that $\Phi(V, C) \leq (1 + \veps) \cdot OPT$. 
Moreover, $|\mathcal{L}| = 2^{\tilde{O}(\frac{k}{\veps})}$ and the running time of the algorithm is $O(Nd |\mathcal{L}|)$.
\end{theorem}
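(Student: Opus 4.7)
The plan is to mirror the proof of Theorem~\ref{thm:bgjk} of \cite{bgjk20}, tracking how the $\delta$-closeness of $\D$ to $D$ propagates through the analysis. Since $\delta<1/2$, we have $\D^2(x,y)\in(1\pm\delta)^2\cdot D^2(x,y)$, so every $\D^2$-sampling probability differs from the corresponding $D^2$-sampling probability by at most a constant multiplicative factor in $[\tfrac{4}{9},\tfrac{9}{4}]$. The strategy is to push this constant distortion through the approximation argument and absorb it into the sampling constant $\rho$ (and, if needed, the number of outer iterations $2^k$) without changing the asymptotic form of the guarantee.

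First, I would inspect the two places where the proof of Theorem~\ref{thm:bgjk} actually uses $D^2$-sampling. The first is a ``heavy cluster'' argument: optimal clusters $A$ whose cost contributes non-trivially to $OPT$ are shown to contribute enough mass to the $D^2$-sampling distribution (with respect to $C_{init}$) that $\rho k$ samples from this distribution yield, with high probability, many points drawn from $A$ itself. Under $\D^2$-sampling the contribution of $A$ to the sampling distribution is only distorted by a constant factor, so by choosing $\rho$ larger by the same constant factor the same Chernoff/concentration bound yields the same conclusion. The second place is the robust centroid argument of Inaba (or its weighted variant used in \cite{bgjk20}): conditional on $S_i\subset M$ being a set of $O(1/\veps)$ roughly-uniform samples from an optimal cluster $A_i$, its centroid $\mu(S_i)$ gives a $(1+\veps)$-approximate $1$-means solution for $A_i$. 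This lemma does not depend on which distance function was used to sample; it only uses the sample itself and exact Euclidean geometry of $A_i$. Hence it transfers verbatim.

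Second, I would handle the ``light clusters''. In \cite{bgjk20} these are clusters whose cost is too small to guarantee enough $D^2$-samples, and they are covered by copies of points from $C_{init}$ added to $M$ (the $\tau k$ copies in line 7). This step is purely algorithmic and is unchanged by replacing $D$ with $\D$; the accounting of how much cost light clusters contribute uses the (already-assumed) constant (pseudo)approximation guarantee of $C_{init}$, which by Lemma~\ref{lem:kmppe} is available in the $\D$-close setting (with a worse but still constant factor). Finally, the enumeration over all $k$-tuples $(S_1,\ldots,S_k)\subset M\cup\{\text{copies of }C_{init}\}$ is a purely combinatorial step and produces the same size list $|\mathcal{L}|=2^{\tilde{O}(k/\veps)}$; the runtime $O(Nd|\mathcal{L}|)$ is for the loop itself and is independent of the sampling distribution (the $Nd$ comes from cost evaluations, which can be done with the true $D$ during post-processing, or with $\D$ at the cost of an extra $(1\pm O(\veps))$ factor absorbed into $\veps$).

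The main obstacle I anticipate is the careful bookkeeping to ensure that every multiplicative constant introduced by $(1\pm\delta)^2$ distortion is absorbed either into $\rho=O(k/\veps^4)$ or into the outer repetition count, without leaking a factor that depends on $\veps$ or $k$. In particular, one must verify that the probability-amplification analysis — which currently gives success probability $\ge 3/4$ after $2^k$ iterations — continues to work when per-iteration success is reduced by a constant. A secondary subtlety is that if the final selection of the best center set in $\mathcal{L}$ is performed using $\D$-based cost estimates, we need those estimates to be within $(1+O(\veps))$ of the true $\Phi(V,\cdot)$; this in turn forces $\delta=O(\veps)$ in the distance-estimation subroutine of Section~A, which is already the setting in which our quantum $D^2$-sampling operates and is therefore compatible with the overall quantum approximation scheme promised by Theorem~\ref{thm:main}.
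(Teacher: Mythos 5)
Your proposal follows essentially the same route as the paper: the paper's proof of Theorem~\ref{thm:bgjke} (Appendix D) performs exactly this bookkeeping, showing that $\delta$-closeness distorts the $\D^2$-sampling probability of a far point of an expensive cluster by a factor of $\frac{(1-\delta)^2}{(1+\delta)^2}\ge(1-\delta)^4$ (Lemma~\ref{lem:osample}), which it absorbs into $\rho=O\bigl(\frac{k}{(1-\delta)^4\veps^4}\bigr)$; the cheap-cluster (Case I) and Inaba-centroid steps are, as you say, unchanged, and the final cost selection (your secondary subtlety) is deferred to Lemma~\ref{lem:pickC} in the proof of Theorem~\ref{thm:main}, not needed here. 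One small slip: with $\delta<1/2$ the $\D^2$-sampling probability distortion lies in $\bigl[\tfrac{(1-\delta)^2}{(1+\delta)^2},\tfrac{(1+\delta)^2}{(1-\delta)^2}\bigr]\subseteq[1/9,9]$ rather than $[4/9,9/4]$, but this does not affect your argument since any absolute constant is absorbed identically.
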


The proof of the above theorem closely follows the proof of Theorem~\ref{thm:bgjk} of \cite{bgjk20}.
This is similar to the proof of Theorem~\ref{lem:kmppe} that we saw earlier,  closely following the proof of Lemma~\ref{lem:kmpp}. 
The minor changes are related to approximate distance estimates using $\D$ instead of real estimates using $D$.
The statement of Theorem~\ref{thm:bgjke} is not surprising in this light.
Instead of repeating the entire proof of \cite{bgjk20}, we point out the one change in their argument caused by using $\D$ instead of $D$ as the distance function. 
The analysis of \cite{bgjk20} works by partitioning the points in any optimal cluster $X_j$ into those that are close to $C_{init}$ and those that are far. 
For the far points, it is shown that when doing $D^2$-sampling, a far point will be sampled with probability at least $\gamma$ times the uniform sampling probability (see Lemma 21 in \cite{gjk20}, which is a full version of \cite{bgjk20}). 
It then argues that a reasonable size set of $D^2$-sampled points will contain a uniform sub-sample. A combination of the uniform sub-sample along with copies of points in $C_{init}$ gives a good center for this optimal cluster $X_j$. 
Replacing $D$ with $\D$ decrease the value of $\gamma$ by a multiplicative factor of $\frac{(1-\delta)^2}{(1+\delta)^2} \geq (1-\delta)^4$.
This means that the number of points sampled should increase by a factor of $O(\frac{1}{(1-\delta)^4})$.
This means that the list size increases to $\tilde{O} \left( 2^{\tilde{O}(\frac{k}{\veps (1-\delta)})}\right)$.
Note that when $\delta \leq \frac{1}{2}$, the list size and running time retain the same form as that in \cite{bgjk20} (i.e., $|\mathcal{L}| = 2^{\tilde{O}(\frac{k}{\veps})}$ and time $O(Nd|\mathcal{L}|)$). 
A detailed proof of Theorem~\ref{thm:bgjke} is provided in the next section of the Appendix.

\subsection{The quantum algorithm}
The main idea of the quantum version of the approximation scheme of \cite{bgjk20} is to $D^2$-sample a set $S$ of $poly(k/\veps)$ points with respect to a center set $C_{init}$ that gives constant approximation to the $k$-means objective. 
We then try out all various subsets of possibilities for the $k$ center set out of $S$ and $C_{init}$ and then find the one with the least cost.
$C_{init}$ is itself found quantumly by iteratively $D^2$-sampling $2k$ points. 
So, most of the quantum ideas developed for the quantum version of $k$-means++ can be reused in the quantum approximation scheme. 
An additional effort is in to calculate the $k$-means cost of a list of $L = 2^{\tilde{O}(\frac{k}{\veps})}$ $k$-center sets and pick the one with the least cost. 
Here, instead of adding the relevant steps only, we add all the steps of the quantum algorithm for better readability.

We will work under the assumption that the minimum distance between two data points is $1$, which can be achieved using scaling. This makes the aspect ratio $\zeta$ simply the maximum distance between two data points.
We will use $i$ for an index into the rows of the data matrix $V \in \mathbb{R}^{N \times d}$, and $j$ for an index into the rows of the center matrix $C \in \mathbb{R}^{2k \times d}$. 
We would ideally like to design a quantum algorithm that performs the transformation: 
\[
\ket{i}\ket{j}\ket{0} \rightarrow \ket{i} \ket{j}\ket{{D(v_i, c_j)}}
\]
Let us call the state on the right $\ket{\Psi_{ideal}}$.
This is an ideal quantum state for us since $\ket{\Psi_{ideal}}$ helps to perform $ D^2$-sampling and to find the $k$-means cost of clustering, which are the main components of the approximation scheme of \cite{bgjk20} that we intend to use.
One caveat is that we will only be able to perform the following transformation (instead of the abovementioned transformation)
\[
\ket{i}\ket{j}\ket{0} \rightarrow \ket{i} \ket{j}\ket{\psi_{i,j}},
\]
where $\ket{\psi_{i,j}}$ is an approximation for $\ket{\D(v_i, c_j)}$ in a sense that we will make precise below.
We will use $\ket{\Psi_{real}}$ to denote the state $\ket{i}\ket{j}\ket{\psi_{i,j}}$.
This state is prepared using tools such as {\em swap test} followed by {\em coherent amplitude estimation}, and {\em median estimation}.
Since these tools and techniques are known from previous works~\cite{wiebe, LMR,kllp19}, we summarise the discussion (see Section 4.1 and 4.2 in \cite{kllp19}) in the following lemma.

\begin{lemma}[\cite{kllp19} and \cite{wiebe}]
Assume for a data matrix $V \in \mathbb{R}^{N \times d}$ and a center set matrix $C \in \mathbb{R}^{t \times d}$ that the following unitaries: (i) $\ket{i}\ket{0} \rightarrow \ket{i}\ket{v_i}$, (ii) $\ket{j}\ket{0} \rightarrow \ket{j}\ket{c_j}$ can be performed in time $T$ and the norms of the vectors are known. For any $\Delta > 0$, there is a quantum algorithm that in time $O \left( \frac{T \log{\frac{1}{\Delta}}}{\veps} \right)$ computes:
\[
\ket{i}\ket{j}\ket{0} \rightarrow \ket{i}\ket{j}\ket{\psi_{i,j}}, 
\]
where $\ket{\psi_{i,j}}$ satisfies the following two conditions for every $i \in [N]$ and $j \in [t]$:
\begin{enumerate}
\item[(i)] $\norm{\ket{\psi_{i,j}} - \ket{0^{\otimes \ell}} \ket{{\tilde{D}(v_i, c_j)}}}  \leq \sqrt{2\Delta}$, and  
\item[(ii)] For every $i, j$, $\tilde{D}(v_i, c_j) \in (1\pm\veps) \cdot D(v_i, c_j)$.
\end{enumerate}
\end{lemma}

In the subsequent discussions, we will use $T$ as the time to access the {\em QRAM data structure} \cite{kp17}, i.e., for the transitions $\ket{i}\ket{0} \rightarrow \ket{i}\ket{v_i}$ and $\ket{j}\ket{0} \rightarrow \ket{j}\ket{c_j}$ as given in the above lemma.
This is known to be $T = O(\log^2{(Nd)})$. Moreover, the time to update each entry in this data structure is also $T = O(\log^2{(Nd)})$.
This is the logarithmic factor that is hidden in the $\tilde{O}$ notation.
In the following subsections, we discuss the utilities of $\ket{\Psi_{real}}$ for the various components of the approximation scheme of \cite{bgjk20}.
During these discussions, it will be easier to see the utility first with the ideal state $\ket{\Psi_{ideal}}$ before the real state $\ket{\Psi_{real}}$ that can actually be prepared.
We will see how $\ket{\Psi_{real}}$ is sufficient within a reasonable error bound.

\subsection{Finding distance to closest center}
Let us see how we can estimate the distance of any point to its closest center in a center set $C \coloneqq C_{init}$ with $t \coloneqq 2k$ centers. 
We can use the transformation $\ket{i}\ket{j}\ket{0} \rightarrow \ket{i} \ket{j}\ket{{D(v_i, c_j)}}$ to prepare the following state for any $i$:
\[
\ket{i}\ket{D(v_i, c_1)}\ket{D(v_i, c_2)}...\ket{D(v_i, c_{t})}
\]
We can then iteratively compare and swap pairs of registers to prepare the state $\ket{i} \ket{\min_{j \in [t]} D(v_i, c_j)}$. 
If we apply the same procedure to $\ket{i}\ket{\psi_{i,1}}...\ket{\psi_{i,t}}$, then with probability at least $(1-2\Delta)^{t}$, the resulting state will be $\ket{i}\ket{\min_{j \in [t]}{\tilde{D}(v_i, c_j)}}$. So, the contents of the second register will be an estimate of the distance of the $i^{th}$ point to its closest center in the center set $C$. This further means that the following state can be prepared with probability at least $(1- 2\Delta)^{Nt}$:\footnote{The state prepared is actually $\frac{1}{\sqrt{N}} \sum_{i=1}^{N} \ket{i} \left( \alpha \ket{\min_{j \in [t]}{\tilde{D}(v_i, c_j)}} + \beta \ket{G}\right)$ with $|\alpha|^2 \geq (1-2\Delta)^{Nt}$. However, instead of working with this state, subsequent discussions become much simpler if we assume that $\ket{\Psi_{C}}$ is prepared with probability $|\alpha|^2$.}
\[
\ket{\Psi_{C}} \equiv \frac{1}{\sqrt{N}} \sum_{i=1}^{N} \ket{i} \ket{\min_{j \in [t]}{\tilde{D}(v_i, c_j)}}. 
\]
This quantum state can be used to find the approximate clustering cost of the center set $C$, which we discuss in the following subsection. 
However, before we do that, let us summarise the main ideas of this subsection in the following lemma. 

\begin{lemma}\label{lemma:minC}
There is a quantum algorithm that, with probability at least $(1-2\Delta)^{Nt}$, prepares the quantum state $\ket{\Psi_{C}}$ in time $O\left( \frac{T t \log{\frac{1}{\Delta}}}{\veps} \right)$.
\end{lemma}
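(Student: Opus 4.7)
The plan is to build $\ket{\Psi_C}$ by first computing all $t$ approximate distances in superposition and then folding them into a single minimum register via a standard quantum comparator tree. I will assume the preceding lemma, which provides, for each pair $(i,j)$, a unitary that maps $\ket{i}\ket{j}\ket{0} \to \ket{i}\ket{j}\ket{\psi_{i,j}}$ in time $O(T \log(1/\Delta)/\veps)$, where $\ket{\psi_{i,j}}$ is $\sqrt{2\Delta}$-close to $\ket{0^{\otimes \ell}}\ket{\tilde{D}(v_i, c_j)}$ and $\tilde{D}(v_i,c_j) \in (1\pm\veps) D(v_i,c_j)$.

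The construction proceeds in three stages. First, prepare the uniform superposition $\frac{1}{\sqrt{N}} \sum_{i \in [N]} \ket{i}$ and adjoin $t$ distance registers initialized to $\ket{0}$, together with a classical register for the center index $j$. By invoking the approximate distance unitary sequentially for $j = 1, \ldots, t$ (each invocation controlled on the $i$-register and targeting the $j$-th distance register), I obtain
\[
\frac{1}{\sqrt{N}} \sum_{i \in [N]} \ket{i} \ket{\psi_{i,1}} \ket{\psi_{i,2}} \cdots \ket{\psi_{i,t}},
\]
at a total time cost of $O(T t \log(1/\Delta)/\veps)$, matching the claimed running time. Second, apply a quantum minimum-finding subroutine implemented by a balanced binary tree of $t-1$ compare-and-swap gates, which writes $\min_{j \in [t]} \tilde{D}(v_i, c_j)$ into a fresh register using only reversible arithmetic comparators on the distance registers (the minimum computation itself adds only polylogarithmic overhead in the bit-width of the distance values, which is absorbed into the $\tilde O$). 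Third, discard (or simply ignore) the $t$ intermediate distance registers, since the lemma only requires the reduced state on the $\ket{i}$ and minimum registers.

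For the correctness and probability bound, I would argue as follows. If every $\ket{\psi_{i,j}}$ were exactly the ideal state $\ket{0^{\otimes\ell}}\ket{\tilde{D}(v_i,c_j)}$, the min-finding step would produce $\ket{\Psi_C}$ exactly. The preceding lemma guarantees $\|\ket{\psi_{i,j}} - \ket{0^{\otimes\ell}}\ket{\tilde D(v_i,c_j)}\|^2 \le 2\Delta$, so the squared amplitude on the "good" branch of each register is at least $1-2\Delta$. Since the $Nt$ approximate distance preparations act on disjoint index-and-$j$ combinations within the superposition, the squared overlap of the produced state with the ideal state on which the min-finding yields $\ket{\Psi_C}$ is at least $(1-2\Delta)^{Nt}$, which is exactly the claimed success probability (interpreted, as the footnote in the excerpt indicates, as the weight on the ideal branch).

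The only delicate step is the error-tracking in the last paragraph: one has to justify that the individual pairwise approximation errors compound multiplicatively rather than additively when the approximate registers are entangled through the $i$-register, and that the classical min circuit, being a permutation on the computational basis, preserves this branch weight. This is the kind of bookkeeping that is standard in the coherent-amplitude-estimation literature and is handled via the same "good branch" framing used in the preceding lemma, so I do not expect substantive obstacles, only careful accounting.
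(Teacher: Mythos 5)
Your proof follows essentially the same approach as the paper's: apply the approximate distance unitary once per center $j$ (in superposition over $i$) to load all $t$ distance estimates into parallel registers, fold them with a compare-and-swap tree, and argue that the ``good branch'' where each $\ket{\psi_{i,j}}$ collapses to $\ket{0^{\otimes \ell}}\ket{\tilde{D}(v_i,c_j)}$ has squared amplitude at least $(1-2\Delta)^{Nt}$. The paper handles the two subtleties you flag --- the entangled leftover registers and the multiplicative compounding of per-pair fidelity --- at the same informal level you do (the paper relegates them to a footnote), so there is no gap relative to the paper's own argument.
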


\subsection{Computing cost of clustering}
Suppose we want to compute the $k$-means cost, $\Phi(V, C) \equiv \sum_{i=1}^{N}\min_{j \in [t]}{D^2(v_i, c_j)}$, of the clustering given by a $t$ center set $C$.
We can prepare $m$ copies of the state $\ket{\Psi_{C}}$ and then estimate the cost of clustering by measuring $m$ copies of this quantum state and summing the squares of the second registers. 
If $m$ is sufficiently large, we obtain a close estimate of $\Phi(V, C)$. 
To show this formally, we will use the following Hoeffding tail inequality.

\begin{theorem}[Hoeffding bound]
Let $X_1, ..., X_m$ be independent, bounded random variables such that $X_i \in [a, b]$. Let $S_m = X_1 + ... + X_m$ . Then for any $\theta> 0$, we have:
\[
\pr[|S_m - \E[S_m]| \geq \theta] \leq 2 \cdot e^{\frac{-2\theta^2}{m (b-a)^2}}.
\] 
\end{theorem}

Let $X_1, ..., X_m$ denote the square of the measured value of the second register in $\ket{\Psi_{C}}$.
These are random values in the range $[1, \chi^2]$, where $\chi = \max_{i, j} \tilde{D}(v_i, c_j) \in (1 \pm \veps) \cdot \max_{i, j}{D(v_i, v_j)}$. Note that by scaling, we assume that $\min_{i, j} D(v_i, v_j) = 1$. So, $\chi$ can be upper bounded in terms of the aspect ratio $\zeta$ as $\chi \leq 2 \zeta$.\footnote{In cases scaling only ensures $\min_{i, j} D(v_i, v_j) \geq 1$ (e.g., vectors with integer coordinates), we can use $\max_{i, j} D(v_i, v_j)$ as the parameter, instead of aspect ratio.}
First, we note the expectation of these random variables equals $\frac{\tilde{\Phi}(V, C)}{N}$, where $\tilde{\Phi}(V, C) \equiv \sum_{i=1}^{N} \min_{j \in [t]}\tilde{D}(v_i, c_j)^2 \in (1 \pm \veps)^2 \cdot \Phi(V, C)$.
We define the variable $S_m = X_1 + X_2 + ... + X_m$ and apply the Hoeffding bound on these bounded random variables to get a concentration result that can then be used.

\begin{lemma}\label{lemma:estimate}
Let $\alpha_m = S_m \cdot \frac{N}{m}$ and $L > 0$. If $m = O \left(\frac{\zeta^4 \ln{(10L)}}{\veps^2} \right)$, then we have:
\[
\pr[\alpha_m \in (1 \pm \veps) \cdot \tilde{\Phi}(V, C)] \geq 1 - \frac{1}{5L}.
\]
\end{lemma}
\begin{proof}
We know that $\E[S_m] = \frac{m}{N} \cdot \tilde{\Phi}(V, C)$
From the Hoeffding tail inequality, we get the following:
\begin{eqnarray*}
\pr[|S_m - \E[S_m]| \geq \veps \cdot \E[S_m]] &\leq& 2 \cdot e^{\frac{-2 \veps^2 \E[S_m]^2}{m \zeta^4}} \\
&=& 2 \cdot e^{\frac{-2 \veps^2 m}{\zeta^4} \cdot \left(\frac{\tilde{\Phi}(V, C)}{N} \right)^2} \\
&\leq& 2 \cdot e^{-\ln{(10L)}} \leq \frac{1}{5L}.
\end{eqnarray*}
This implies that:
\[
\pr[|\alpha_m - \tilde{\Phi}(V, C)| \geq \veps \cdot \tilde{\Phi}(V, C)] \leq \frac{1}{5L}.
\]
This completes the proof of the lemma.
\end{proof}

So, conditioned on having $m$ copies of the state $\ket{\Psi_{C}}$, we get an estimate of the clustering cost within a relative error of $(1\pm \veps)^3$ with probability at least $(1 - \frac{1}{5L})$. Removing the conditioning, we get the same with probability at least $(1 - 2\Delta)^{Nkm} \cdot (1 - \frac{1}{5L})$. 
We want to use the above cost estimation technique to calculate the cost for a {\em list} of center sets $C_1, ..., C_{L}$, and then pick the center set from the list with the least cost. We must apply the union bound appropriately to do this with high probability. We summarise these results in the following lemma.
Let us first set some of the parameters with values that we will use to implement the approximation scheme of \cite{bgjk20}.
\begin{itemize}
\item $L$ denotes the size of the list of $k$-center sets we will iterate over to find the one with the least cost. This quantity is bounded as $L = \left( \frac{k}{\veps}\right)^{O(\frac{k}{\veps})}$.
\item $m$ is the number of copies of the state $\ket{\Psi_C}$ made to estimate the cost of the center set $C$. This, as given is Lemma~\ref{lemma:estimate} is $m = \frac{\zeta^4 \ln{(10L)}}{\veps^2}$, where $\zeta = (1+\veps) \cdot \max_{i,j}{D(v_i, c_j)}$.
\end{itemize}

\begin{lemma}\label{lem:pickC}
Let $L = \left( \frac{k}{\veps}\right)^{O(\frac{k}{\veps})}$, $m = \frac{\zeta^4 \ln{(10L)}}{\veps^2}$, and $\Delta = O \left( \frac{1}{NkmL}\right)$.
Given a point set $V$ and a list of center sets $C_1, ..., C_L$ in the QRAM model, there is a quantum algorithm that runs in time 
$\tilde{O} \left( 2^{\tilde{O}(\frac{k}{\veps})} T\zeta^4 \right)$ 
and outputs an index $l$ such that $\Phi(V, C_l) \leq (1+\veps)^3 \min_{j \in L} \Phi(V, C_j)$ with probbaility at least $\frac{3}{5}$.
\end{lemma}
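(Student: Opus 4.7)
The plan is to cost-estimate every center set in the list using the tools already developed in Lemmas~\ref{lemma:minC} and \ref{lemma:estimate}, and then output the index achieving the smallest estimate. Concretely, for every $l \in [L]$, I will invoke Lemma~\ref{lemma:minC} to prepare $m$ copies of the state $\ket{\Psi_{C_l}}$, measure them to obtain samples $X_1^{(l)}, \dots, X_m^{(l)}$ of squared distances to the nearest center, and form the Hoeffding estimator $\alpha_l = \tfrac{N}{m}\sum_{r \le m} X_r^{(l)}$. I then return $l^\star = \arg\min_l \alpha_l$.

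The analysis proceeds by two successive union bounds. First, preparing a single copy of $\ket{\Psi_{C_l}}$ succeeds with probability at least $(1 - 2\Delta)^{Nk}$, and in total we need $mL$ copies across the whole list; choosing $\Delta = O(1/(NkmL))$ keeps $(1 - 2\Delta)^{NkmL} \ge 1 - 1/10$, so with probability at least $9/10$ every state preparation yields the intended distance state. Second, conditioned on all preparations succeeding, Lemma~\ref{lemma:estimate} (applied with $L$ in place of $N$) guarantees $\alpha_l \in (1 \pm \veps)\,\tilde{\Phi}(V, C_l)$ with probability at least $1 - 1/(5L)$; a union bound over the $L$ center sets gives a joint success probability of at least $4/5$. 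Combining the two events, with probability at least $9/10 \cdot 4/5 = 18/25 > 3/5$ the estimates satisfy $\alpha_l \in (1\pm\veps)\,\tilde{\Phi}(V, C_l) \subseteq (1\pm\veps)^3 \,\Phi(V, C_l)$ simultaneously for all $l$, where the second inclusion uses $\tilde{\Phi}(V,C) \in (1\pm\veps)^2 \,\Phi(V,C)$ from the distance approximation. Setting $j^\star = \arg\min_j \Phi(V, C_j)$, the min-of-estimates argument then yields
\[
\Phi(V, C_{l^\star}) \le \tfrac{1}{(1-\veps)^3}\,\alpha_{l^\star} \le \tfrac{1}{(1-\veps)^3}\,\alpha_{j^\star} \le \tfrac{(1+\veps)^3}{(1-\veps)^3}\,\Phi(V, C_{j^\star}),
\]
which, after absorbing constants by rescaling the input error parameter (say $\veps \leftarrow \veps/7$), gives the claimed $(1+\veps)^3$ approximation.

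For the running-time bookkeeping, each invocation of Lemma~\ref{lemma:minC} costs $O(Tk \log(1/\Delta)/\veps)$. Plugging in $\Delta = \Theta(1/(NkmL))$ makes $\log(1/\Delta) = O(\log(NkmL)) = \tilde{O}(k/\veps)$, since $L = (k/\veps)^{O(k/\veps)} = 2^{\tilde{O}(k/\veps)}$ dominates the other factors. With $m = \tilde{O}(\zeta^2)$ copies per center set and $L$ center sets, the total cost is
\[
mL \cdot O\!\left(\tfrac{Tk \log(1/\Delta)}{\veps}\right) \;=\; 2^{\tilde{O}(k/\veps)} \cdot \tilde{O}(\zeta^2) \cdot \tilde{O}(T) \;=\; \tilde{O}\!\left(2^{\tilde{O}(k/\veps)}\, T\, \zeta^2\right),
\]
matching the stated bound.

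The only delicate step is synchronising the failure budgets: the per-copy preparation failure probability $2\Delta$ has to be driven small enough to survive the union bound over all $NkmL$ preparations, but small enough that $\log(1/\Delta)$ remains $\tilde{O}(k/\veps)$ and gets absorbed into the $\tilde{O}$ notation. I expect this bookkeeping and the conversion between $\tilde{\Phi}$ and $\Phi$ (where the $(1+\veps)^3$ factor comes from stacking one $(1\pm\veps)$ from the Hoeffding concentration on top of the two $(1\pm\veps)$ factors from distance estimation) to be the main place where care is required; the rest is a direct assembly of Lemmas~\ref{lemma:minC} and~\ref{lemma:estimate}.
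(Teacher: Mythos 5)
Your proof is correct and follows essentially the same route as the paper's: estimate each $\Phi(V,C_l)$ via $m$ copies of $\ket{\Psi_{C_l}}$ using Lemmas~\ref{lemma:minC} and~\ref{lemma:estimate}, union-bound over the $NkmL$ state preparations and the $L$ Hoeffding estimates, and return the argmin. The only cosmetic differences are that you phrase the failure-budget combination as a product $\tfrac{9}{10}\cdot\tfrac{4}{5}$ rather than adding failure probabilities, and that you explicitly flag the $(1+\veps)^3/(1-\veps)^3$ vs.\ $(1+\veps)^3$ slack and resolve it by rescaling $\veps$, which the paper leaves implicit.
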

\begin{proof}
The algorithm estimates the cost of $C_1, ..., C_L$ using $m$ copies each of $\ket{\Psi_{C_1}}, ..., \ket{\Psi_{C_L}}$ and picks the index with the minimum value in time $O \left( \frac{TkmL \log{\frac{1}{\Delta}}}{\veps}\right)$.
Plugging the values of $L, m$, and $\Delta$ we get the running time stated in the lemma.

Let us bound the error probability of this procedure. 
By Lemma~\ref{lemma:D2minC}, the probability that we do not have the correct $m$ copies each of $\ket{\Psi_{C_1}}, ..., \ket{\Psi_{C_L}}$ is bounded by $1 - (1-2 \Delta)^{NkmL}$. 
Conditioned on having $\ket{\Psi_{C_1}}, ..., \ket{\Psi_{C_L}}$, the probability that there exists an index $j \in [L]$, where the estimate is off by more than a $(1\pm\veps)^3$ factor is upper bounded by $\frac{1}{5}$ by the union bound. So, the probability that the algorithm will find an index $l$ such that $\Phi(V, C_l) > (1 + \veps)^3 \min_{j \in [L]}{\Phi(V, C_j)}$ is upper bounded by $1 - (1-2\Delta)^{NkmL} + \frac{1}{5}$. This probability is at most $\frac{2}{5}$ since $\Delta = O(\frac{1}{NkmL})$. This completes the proof of the lemma. 
\end{proof}

\subsection{$D^2$-sampling}
$D^2$-sampling from the point set $V$ with respect to a center set $C \in \mathbb{R}^{t \times d}$ with $t$ centers, samples $v_i$ with probability proportional to $\min_{j \in [t]}{D^2(v_i, c_j)}$. 
Let us see if we can use our state $\ket{\Psi_{C}} = \frac{1}{\sqrt{N}} \sum_{i=1}^{N} \ket{i} \ket{\min_{j \in [t]}{\tilde{D}(v_i, c_j)}}$ is useful to perform this sampling.
If we can pull out the value of the second register as the amplitude, then the measurement will give us close to $D^2$-sampling. 
This is possible since we have an estimate of the clustering cost from the previous subsection. 
We can use controlled rotations on an ancilla qubit to prepare the state: 
\[
\ket{\Psi_{sample}} \equiv \frac{1}{\sqrt{N}} \sum_{i=1}^{N} \ket{i} \left( \beta_i \ket{0} + \sqrt{1-|\beta_i|^2} \ket{1}\right), 
\]
where $\beta_i = \frac{\min_{j \in [t]}{\tilde{D}(v_i, c_j)}}{\zeta \sqrt{4 \cdot \tilde{\Phi}(V, C)/N}}$.
So, the probability of measurement of $(i, 0)$ is $\frac{\min_{j \in [t]}{\tilde{D}(v_i, c_j)^2}}{4 \zeta^2 \cdot \tilde{\Phi}(V, C)}$. Since we do rejection sampling, ignoring $(., 1)$'s that are sampled with probability $\leq (1 - \frac{1}{4 \zeta^2})$, we end up sampling with a distribution where the probability of sampling $i$ is $\frac{\min_{j \in [t]}{\tilde{D}(v_i, c_j)^2}}{\tilde{\Phi}(V, C)} \in (1 \pm \veps)^4 \cdot \frac{\min_{j \in [t]}{D(v_i, c_j)^2}}{\Phi(V, C)}$. 
Moreover, we get a usable sample with high probability in at most $O(\zeta^2 \log{10N})$ rounds of sampling.
This means that points get sampled with a probability close to the actual $D^2$-sampling probability. As we have mentioned earlier, this is sufficient for the approximation guarantees of \cite{bgjk20} to hold.
We summarise the observations of this section in the next lemma. 

\begin{lemma}\label{lem:sampling}
Given a dataset $V \in \mathbb{R}^{N \times d}$ and a center set $C \in \mathbb{R}^{t \times d}$ in the QRAM model, there is a quantum algorithm that runs in time $O \left( \frac{T t m S \log{\frac{1}{\Delta}}}{\delta} \right)$ and with probability at least $(1-2\Delta)^{NtmS}$ outputs $S$ independent samples with 
$\D^2$ distribution such that the distance function $\D$ is $\delta$-close to $D$.
\end{lemma}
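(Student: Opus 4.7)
The plan is to bootstrap Lemma~\ref{lem:D2sampling} (equivalently, the $D^2$-sampling subroutine built out of Lemmas~\ref{lemma:minC} and \ref{lemma:estimate}) by running it $S$ times, being careful with the overall failure probability and the per-copy resource count. First, I would invoke Lemma~\ref{lemma:minC} to prepare a single copy of $\ket{\Psi_{C}}$. To obtain $S$ independent samples, this preparation must itself be executed many times, both to produce the $m$ copies needed for cost estimation and to repeat the rejection-sampling step $S$ times. So the $D^2$-sampling procedure of Section~\ref{thm:main}'s analysis is essentially replayed $S$ times with fresh randomness, and the total number of invocations of the ``prepare $\ket{\Psi_C}$'' primitive is $\Theta(mS)$.

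Next, for each single invocation, Lemma~\ref{lemma:minC} tells us that the preparation of $\ket{\Psi_C}$ succeeds with probability at least $(1-2\Delta)^{Nt}$ and costs $O\!\left(\frac{Tt\log(1/\Delta)}{\delta}\right)$ time (with $\veps$ replaced by $\delta$ as the distance-closeness parameter). Multiplying through, the total running time over the $mS$ invocations is
\[
O\!\left(\frac{T t m S \log(1/\Delta)}{\delta}\right),
\]
matching the claimed bound. For the success probability, the $mS$ invocations are independent, and each must succeed, so the probability that all preparations yield the intended states is at least $(1-2\Delta)^{NtmS}$. After that, the cost estimation $\tilde{\Phi}(V,C)$ and the controlled rotations preparing $\ket{\Psi_{sample}}$, followed by rejection sampling, proceed exactly as in the earlier subsection, and each successful rejection sampling step yields one sample from a distribution that is within $(1\pm\delta)^4$ of the true $D^2$ distribution, i.e., a $\D^2$-sample with $\D\sim_{\delta} D$.

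The only real step requiring care is the error accounting and the verification that the ``$(1-2\Delta)^{NtmS}$'' bound cleanly absorbs all failure events. Concretely, I would argue that the only event that can spoil a sample is a failure during one of the underlying preparations of $\ket{\Psi_C}$; cost estimation concentrates with probability $1-1/(5N)$ (or can be driven arbitrarily low by enlarging $m$), and rejection sampling, conditioned on a correctly prepared $\ket{\Psi_{sample}}$, produces a valid draw from the intended distribution after an expected constant number of attempts (the $\log(1/\Delta)$ factor in the running time already absorbs the repeated-attempts overhead). Independence of the $S$ outer rounds, each using fresh ancillas and fresh measurements, gives independence of the resulting samples. Putting all of this together yields precisely the statement of Lemma~\ref{lem:sampling}, and the main obstacle is purely bookkeeping: tracking the product $(1-2\Delta)^{NtmS}$ through the three layers (preparation, cost estimation, rejection sampling) without double counting, and keeping the $\delta$-closeness of $\D$ consistent with what is needed in the downstream application to Theorem~\ref{thm:main}.
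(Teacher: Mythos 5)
Your proposal is correct and takes essentially the same route as the paper: the paper's own proof of Lemma~\ref{lem:sampling} is a one-line pointer to Lemma~\ref{lemma:minC} together with the preceding rejection-sampling discussion, and you simply spell out the $\Theta(mS)$-fold repetition of the $\ket{\Psi_C}$ preparation, the resulting product $(1-2\Delta)^{NtmS}$, and the per-sample $(1\pm\delta)^4$ distortion. One small imprecision worth noting: the $\log(1/\Delta)$ factor in the running time comes from the median-estimation step inside the preparation of $\ket{\Psi_C}$, not from rejection-sampling retries; the retries only contribute an $O(1)$ expected factor because the ancilla is normalized so $\ket{0}$ occurs with probability $1/2$. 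Your observation that the stated success probability silently drops the $(1-1/5N)$ term arising from cost estimation is also fair — the single-sample Lemma~\ref{lem:D2sampling} keeps it, Lemma~\ref{lem:sampling} does not — but since the estimation is performed once (not per sample) and $m$ can be enlarged, this does not affect the conclusion.
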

\begin{proof}
The proof follows from Lemma~\ref{lemma:minC} and the preceding discussion.
\end{proof}
The above lemma says that for $\Delta = \tilde{O}(\frac{1}{NkmS})$, we obtain the required samples with high probability.
We can now give proof of Theorem~\ref{thm:main}, assembling the quantum tools of this section. We restate the theorem below for ease of reading. 

\begin{theorem-non}[Restatement of Theorem~\ref{thm:main}]
Let $0 < \veps < 1/2$ be the error parameter.
There is a quantum algorithm that, when given QRAM data structure access to a dataset $V \in \mathbb{R}^{N \times d}$, runs in time $\tilde{O} \left(2^{\tilde{O}(\frac{k}{\veps})} d \zeta^{O(1)} \right)$ and outputs a $k$ center set $C \in \mathbb{R}^{k \times d}$ such that with high probability $\Phi(V, C) \leq (1 + \veps) \cdot OPT$. Here, $\zeta$ is the aspect ratio, i.e., the ratio of the maximum to the minimum distance between two given points in $V$.\footnote{The $\tilde{O}$ notation hides logarithmic factors in $N$. The $\tilde{O}$ in the exponent hides logarithmic factors in $k$ and $1/\veps$.}
\end{theorem-non}

\begin{proof}[Proof of Theorem~\ref{thm:main}]
The first requirement for executing the algorithm of \cite{bgjk20} is a constant pseudo approximation algorithm, using which we obtain the initial center set $C_{init}$.
By Lemma~\ref{lem:kmppe}, we know that $2k$ points sampled using $\D^2$-sampling give such a center set. From Lemma~\ref{lem:sampling}, this can be done quantumly in time $\tilde{O}(\frac{k^2 \zeta^{O(1)}}{\veps^2})$.
The algorithm of \cite{bgjk20} has an outer repeat loop for probability amplification.
Within the outer loop, $poly(\frac{k}{\veps})$ points are $\D^2$-sampled with respect to the center set $C_{init}$ (line 6). 
This can again be done quantumly using Lemma~\ref{lem:sampling} in time $\tilde{O}(\zeta^{O(1)} (k/\veps)^{O(1)})$. 
We can then classically process the point set $M$ (see line 7 in Algorithm~\ref{algo:2}) and create the QRAM data structure for the list $C_1, ..., C_L$ of $k$-center sets that correspond to all possible disjoint subsets of $M$ (see line 8 in Algorithm~\ref{algo:2}). This takes time $\tilde{O}(Lkd)$, where $L = \left( \frac{k}{\veps}\right)^{O(\frac{k}{\veps})}$. 
Theorem~\ref{thm:bgjke} shows that at least one center set in the list gives $(1+\veps)$-approximation.
We use this fact in conjunction with the result of Lemma~\ref{lem:pickC} to get that the underlying quantum algorithm runs in time $\tilde{O}(2^{\tilde{O}(\frac{k}{\veps})}d\zeta^{O(1)})$ and with high probability outputs a center set $C'$ such that $\Phi(V, C') \leq (1+\veps)^4 \cdot OPT$.\footnote{We needed $(1+\veps)$, but got $(1+\veps)^4$ instead. However, this can be handled with $\veps' = \veps/5$.}
\end{proof}

\section{A Robust Approximation Scheme for $k$-means (proof of Theorem~\ref{thm:bgjke})}

This section gives the proof of Theorem~\ref{thm:bgjke}. 
We restate the theorem and the algorithm for readability.
We state the algorithm of Bhattacharya \etal~\cite{bgjk20} replacing $D^2$-sampling with $\D^2$-sampling.

\begin{algorithm}[ht]
\caption{Algorithm of \cite{bgjk20} with $D$ replaced with $\delta$-close $\D$}
\begin{algorithmic}[1]
	\STATE {\bf Input}: $(V, k, \veps, C_{init})$, where $V$ is the dataset, $k>0$ is the number of clusters, $\veps>0$ is the error parameter, and $C_{init}$ is a $k$ center set that gives constant (pseudo)approximation.
	\STATE {\bf Output}: A list $\mathcal{L}$ of $k$ center sets such that for at least one $C' \in \mathcal{L}$, $\Phi(V, C') \leq (1 + \veps) \cdot OPT$.
	\STATE {\bf Constants}: $\rho = O(\frac{k}{(1-\delta)^4\veps^4})$; $\tau = O(\frac{1}{\veps})$
	\STATE $\mathcal{L} \gets \emptyset$; $count \gets 1$
	\REPEAT
		\STATE Sample a multi-set $M$ of $\rho k$ points from $V$ using $\D^2$-sampling wrt center set $C_{init}$
		\STATE $M \gets M \cup \{ \tau k \textrm{ copies of each element in $C_{init}$}\}$
		\FORALL{disjoint subsets $S_1,...,S_k$ of $M$ such that $\forall i, |S_i| = \tau$} 
			\STATE $\mathcal{L} \gets \mathcal{L} \cup (\mu(S_1), ..., \mu(S_k))$ 
		\ENDFOR
		\STATE $count$++
	\UNTIL{$count < 2^k$}
	\RETURN $\mathcal{L}$
\end{algorithmic}
\label{algo:robust-ptas}
\end{algorithm}

\begin{theorem-non}[Restatement of Theorem~\ref{thm:bgjke}]
Let $0 < \veps \leq \frac{1}{2}$ be the error parameter, $0 < \delta < 1/2$ be the closeness parameter, $V \in \mathbb{R}^{N \times d}$ be the dataset, $k$ be a positive integer, and let $C_{init}$ be a constant (pseudo)approximate solution for dataset $V$.
Let $\mathcal{L}$ be the list returned by Algorithm~\ref{algo:robust-ptas} on input $(V, k, \veps, C_{init})$ using the distance function $\D$ that is $\delta$-close to the Euclidean distance function $D$. Then with probability at least $3/4$, $\mathcal{L}$ contains a center set $C$ such that $\Phi(V, C) \leq (1 + \veps) \cdot OPT$. 
Moreover, $|\mathcal{L}| = 2^{\tilde{O}(\frac{k}{\veps})}$ and the running time of the algorithm is $O(Nd |\mathcal{L}|)$.
\end{theorem-non}
Note that what this theorem essentially says is that if the error in the distance estimates is bounded within a multiplicative factor of $(1\pm \delta)$ with $\delta \leq 1/2$, then the guarantees of the algorithm of Bhattacharya \etal~\cite{bgjk20} do not change except for some constant factors (that get absorbed in the $\tilde{O}$ of the exponent).

We need to define a few quantities that will be used to prove the above theorem. 
Let $\{X_1, ..., X_k\}$ be the optimal clusters. 
It is well-known that the center that optimizes the $1$-means cost for any pointset is the centroid of the pointset. 
We denote the centroid of any pointset $Y \subset \mathbb{R}^d$ with $\mu(Y) \equiv \frac{\sum_{y \in Y} y}{|Y|}$.
This follows from the following well-known fact:
\begin{fact}\label{fact:1}
For any $Y \subset \mathbb{R}^d$ and any $c \in \mathbb{R}^d$, we have $\sum_{y \in Y} \norm{y - c}^2 = \sum_{y \in Y} \norm{y - \mu(Y)}^2 + |Y| \cdot \norm{c - \mu(Y)}^2$.
\end{fact}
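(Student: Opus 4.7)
The plan is to prove Fact~\ref{fact:1} by the standard "add and subtract the centroid" trick, i.e., the parallel-axis/bias-variance decomposition. I would start by writing, for each $y \in Y$,
\[
y - c = (y - \mu(Y)) + (\mu(Y) - c),
\]
and then expanding the squared norm as
\[
\norm{y-c}^2 = \norm{y - \mu(Y)}^2 + 2\langle y - \mu(Y),\, \mu(Y) - c\rangle + \norm{\mu(Y) - c}^2.
\]
Summing this identity over $y \in Y$ gives three pieces: the first is exactly the "variance" term $\sum_{y \in Y}\norm{y - \mu(Y)}^2$, the third contributes $|Y|\cdot \norm{\mu(Y) - c}^2$ since $\mu(Y)-c$ does not depend on $y$, and the middle piece is the cross term
\[
2\Bigl\langle \sum_{y \in Y}(y - \mu(Y)),\; \mu(Y) - c\Bigr\rangle.
\]

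The key observation, which makes the cross term vanish, is that by the very definition $\mu(Y) = \frac{1}{|Y|}\sum_{y \in Y} y$ we have $\sum_{y \in Y}(y - \mu(Y)) = \sum_{y \in Y} y - |Y|\cdot \mu(Y) = 0$. Hence the inner product collapses to zero regardless of $c$, and the identity claimed in the fact follows immediately.

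There is essentially no obstacle here: the entire proof is a single algebraic expansion plus the centroid-centering identity. The one thing to be careful about is simply to note that the cross term uses linearity of the inner product to pull the constant vector $\mu(Y)-c$ out of the sum, so that the sum $\sum_{y\in Y}(y-\mu(Y))$ appears cleanly and can be evaluated to $0$. The proof therefore fits in a few lines and does not require any auxiliary lemma.
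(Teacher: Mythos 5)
Your proof is correct and is exactly the standard argument this identity is always proved by; the paper states Fact~\ref{fact:1} as a well-known fact without giving a proof, so there is nothing to compare against beyond noting that your decomposition $y-c=(y-\mu(Y))+(\mu(Y)-c)$ with the vanishing cross term $\sum_{y\in Y}(y-\mu(Y))=0$ is precisely the canonical derivation the authors implicitly rely on.
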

We have used the norm notation instead of $D(., .)$ for the Euclidean distance function. This is to easily distinguish between the usage of $D$ and $\D$.
We define the optimal $1$-means cost for pointset $X_i$ as $\Delta(X_i) \equiv \sum_{x \in X_i} \norm{x - \mu(X_i)}^2$. 
We denote the optimal $k$-means cost using $OPT \equiv \sum_{i=1}^{k} \Delta(X_i)$.
We will use the following sampling lemma from Inaba \etal~\cite{inaba}, which says that the centroid of a small set of uniformly sampled points from the dataset $Y$ is a good center with respect to the $1$-means cost for dataset Y.

\begin{lemma}[\cite{inaba}]\label{lemma:inaba}
Let $S$ be a set of points obtained by independently sampling $M$ points with replacement uniformly at random from a point set $Y \subset \R^d$. Then for any $\delta > 0$,
\[
\pr \left[\Phi(\mu(S), Y) \leq \left( 1 + \frac{1}{\delta M} \right) \cdot \Delta(Y) \right] \geq (1 - \delta).
\]
\end{lemma}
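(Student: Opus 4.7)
The plan is to reduce the statement to a Markov bound applied to the squared displacement $\norm{\mu(S) - \mu(Y)}^2$ between the sample centroid and the true centroid. The key observation is that by Fact~\ref{fact:1} applied with $c = \mu(S)$,
\[
\Phi(\mu(S), Y) \;=\; \sum_{y \in Y} \norm{y - \mu(S)}^2 \;=\; \Delta(Y) \;+\; |Y|\cdot \norm{\mu(S) - \mu(Y)}^2,
\]
so the excess of $\Phi(\mu(S), Y)$ over the optimal $1$-means cost $\Delta(Y)$ is controlled exactly by $|Y|\cdot\norm{\mu(S) - \mu(Y)}^2$. Consequently it suffices to show that this non-negative random variable is at most $\Delta(Y)/(\delta M)$ with probability at least $1-\delta$.

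Next I would compute its expectation. Write $S = (Y_1, \ldots, Y_M)$ with each $Y_i$ drawn independently and uniformly from $Y$, so $\mu(S) = \frac{1}{M}\sum_{i=1}^M Y_i$ and $\E[Y_i] = \mu(Y)$. Then
\[
\E\!\left[\,\norm{\mu(S) - \mu(Y)}^2\,\right] \;=\; \frac{1}{M^2}\,\E\!\left[\,\Big\|\sum_{i=1}^M (Y_i - \mu(Y))\Big\|^2\,\right].
\]
Expanding the squared norm, all cross terms $\E[\langle Y_i - \mu(Y), Y_j - \mu(Y)\rangle]$ with $i \neq j$ vanish by independence and the fact that $\E[Y_i - \mu(Y)] = 0$, leaving
\[
\E\!\left[\,\norm{\mu(S) - \mu(Y)}^2\,\right] \;=\; \frac{1}{M}\,\E\!\left[\,\norm{Y_1 - \mu(Y)}^2\,\right] \;=\; \frac{1}{M}\cdot\frac{1}{|Y|}\sum_{y \in Y}\norm{y - \mu(Y)}^2 \;=\; \frac{\Delta(Y)}{M\,|Y|}.
\]
Therefore $\E\bigl[\,|Y|\cdot\norm{\mu(S) - \mu(Y)}^2\,\bigr] = \Delta(Y)/M$.

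Finally, since $|Y|\cdot\norm{\mu(S) - \mu(Y)}^2 \geq 0$, Markov's inequality yields
\[
\pr\!\left[\,|Y|\cdot\norm{\mu(S) - \mu(Y)}^2 \;\geq\; \frac{\Delta(Y)}{\delta M}\,\right] \;\leq\; \delta,
\]
and substituting back into the displayed identity for $\Phi(\mu(S), Y)$ gives the claim with probability at least $1 - \delta$. There is no real obstacle here; the only subtle point is the vanishing of the cross terms, which relies jointly on independence of the $Y_i$'s and on the centering $\E[Y_i - \mu(Y)] = 0$ (i.e.\ that $\mu(Y)$ really is the mean of the uniform distribution on $Y$). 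The argument does not otherwise depend on the geometry of $Y \subset \mathbb{R}^d$ beyond the Pythagorean-style identity supplied by Fact~\ref{fact:1}.
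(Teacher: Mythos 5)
The paper does not prove this lemma; it is imported verbatim from Inaba et al.~\cite{inaba} and used as a black box. Your proof is correct and is in fact the standard argument for this result: the Pythagorean decomposition of Fact~\ref{fact:1} isolates the excess cost as $|Y|\cdot\norm{\mu(S)-\mu(Y)}^2$, the variance computation (cross terms vanishing by independence and centering) gives expectation $\Delta(Y)/M$ for that excess, and Markov's inequality finishes. Nothing is missing; this is exactly the proof one finds in the original reference.
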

We will also use the following approximate triangle inequality that holds for squared distances.
\begin{fact}[Approximate triangle inequality]\label{fact:2}
For any $x, y, z \in \R^d$, we have $\norm{x - z}^2 \leq 2 \cdot (\norm{x - y}^2 + \norm{y - z}^2)$.
\end{fact}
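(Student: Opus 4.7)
The plan is to reduce the claim to two elementary inequalities about the Euclidean norm: expanding the square via the inner-product identity, then bounding the cross term using Cauchy--Schwarz followed by AM--GM. Concretely, I would set $a := x - y$ and $b := y - z$, so that $x - z = a + b$, and the statement becomes $\norm{a+b}^2 \leq 2(\norm{a}^2 + \norm{b}^2)$.

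First I would expand using the inner-product definition of the Euclidean norm:
\[
\norm{a+b}^2 = \langle a+b, a+b\rangle = \norm{a}^2 + 2\langle a, b\rangle + \norm{b}^2.
\]
Next I would upper-bound the inner product via the Cauchy--Schwarz inequality, $\langle a, b\rangle \leq \norm{a}\cdot\norm{b}$, to obtain $\norm{a+b}^2 \leq \norm{a}^2 + 2\norm{a}\norm{b} + \norm{b}^2$. Finally, I would apply the AM--GM inequality in the form $2\norm{a}\norm{b} \leq \norm{a}^2 + \norm{b}^2$ (equivalently, $(\norm{a}-\norm{b})^2 \geq 0$), which collapses the right-hand side to $2(\norm{a}^2 + \norm{b}^2)$. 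Substituting back $a = x-y$ and $b = y-z$ yields the claim.

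There is no real obstacle here; the fact is textbook, and the only thing worth emphasizing is that the constant $2$ is tight, attained when $a = b$, i.e.\ when $y$ is the midpoint of $x$ and $z$. A one-line alternative would simply note $\norm{x-z} \leq \norm{x-y} + \norm{y-z}$ by the (ordinary) triangle inequality and then square and apply AM--GM, which gives the same bound. I would present the inner-product version since it generalizes cleanly to any inner-product space and makes the role of the constant $2$ transparent.
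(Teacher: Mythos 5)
Your proof is correct, and it is the standard argument: set $a=x-y$, $b=y-z$, expand $\norm{a+b}^2$ via the inner product, and absorb the cross term $2\langle a,b\rangle$ using Cauchy--Schwarz and AM--GM. The paper itself states Fact~\ref{fact:2} without proof, treating it as a well-known elementary inequality, so there is no paper argument to compare against; your write-up (together with the tightness remark at $y=\tfrac{x+z}{2}$) is exactly what one would supply if a proof were demanded.
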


The proof of Theorem~\ref{thm:bgjke} follows from the following theorem, which we will prove in the remaining section. 
\begin{theorem}\label{thm:keythm}
Let $0 < \veps, \delta \leq 1/2$.
Let $\mathcal{L}$ denote the list returned by Algorithm~\ref{algo:robust-ptas}  on input ($V, k,  \veps, C_{init}$). 
Then with probability at least $3/4$, $\mathcal{L}$ contains a center set $C$ such that
\[
\Phi \left(C, \{X_{1}, ..., X_{k}\} \right) \leq \left(1+ \frac{\veps}{2} \right) \cdot \sum_{i=1}^{k} \Delta(X_{i}) + \frac{\veps}{2} \cdot OPT.
\]
Moreover, $|\mathcal{L}| = 2^{\tilde{O}(\frac{k}{\veps})}$ and the running time of the algorithm is $O(nd |\mathcal{L}|)$.
\end{theorem}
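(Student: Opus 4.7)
The plan is to follow the analysis of Bhattacharya et al.~\cite{bgjk20} (full version \cite{gjk20}) essentially verbatim, modifying only the sampling step where $\D^2$-sampling replaces $D^2$-sampling. I will fix the optimal clusters $X_1, \ldots, X_k$ and, for each $i$, exhibit (with constant probability per outer iteration) a good candidate center for $X_i$ that lies in $\mathcal{L}$. The candidate is either (i) a copy of some $c \in C_{init}$, used when $X_i$ is ``close-dominated'' (most of its mass lies near $C_{init}$), or (ii) the centroid $\mu(S_i)$ of a $\tau$-size uniform sub-sample of a ``far'' subset $X_i^f \subseteq X_i$, used when $X_i$ is ``far-dominated''. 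Since Algorithm~\ref{algo:robust-ptas} enumerates all disjoint $k$-tuples of $\tau$-subsets of $M \cup \{\tau k \text{ copies of each } c \in C_{init}\}$, exhibiting one such tuple automatically places it in $\mathcal{L}$, and the $2^k$ outer repetitions will amplify the per-iteration success probability to the claimed $3/4$.

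For the close-dominated case I would use Facts~\ref{fact:1}--\ref{fact:2}: routing each close point of $X_i$ to its nearest $C_{init}$ center costs at most $(1 + O(\veps)) \Delta(X_i) + O(\veps/k) \cdot \textit{OPT}$, which aggregates across $i$ into the right-hand side of the theorem. For the far-dominated case I would define $X_i^f$ by thresholding $\norm{x - C_{init}}^2$ at an appropriate multiple of $\Delta(X_i)/|X_i|$ (tuned so that ``far'' points carry a constant fraction of $\Delta(X_i)$), and then invoke Inaba's lemma (Lemma~\ref{lemma:inaba}) with $M = \tau = \Theta(1/\veps)$ to conclude $\Phi(\mu(S_i), X_i^f) \le (1+\veps/2)\,\Delta(X_i^f)$, provided $S_i \subseteq M$ is a uniform $\tau$-sample from $X_i^f$. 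What remains to verify is that $M$ does contain such a uniform sample, for every far-dominated $i$ simultaneously, with probability $\ge 2^{-O(k)}$.

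The $\delta$-closeness enters only through this last probabilistic claim, via the inequality
\[
\frac{\D^2(x, C_{init})}{\sum_y \D^2(y, C_{init})} \;\ge\; \frac{(1-\delta)^2}{(1+\delta)^2}\cdot \frac{D^2(x, C_{init})}{\sum_y D^2(y, C_{init})} \;\ge\; (1-\delta)^4 \cdot \pr_{D^2}[x],
\]
valid for every $x \in V$. Hence the far-point probability lower bound of \cite[Lemma~21]{gjk20} degrades by a factor $(1-\delta)^4$, which is exactly offset by the inflated sample size $\rho = O(k/(\veps^4(1-\delta)^4))$ in Algorithm~\ref{algo:robust-ptas}. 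Because $\delta \le 1/2$, this inflation is a constant, so the list-size bound $|\mathcal{L}| \le \binom{\rho k + \tau k}{\tau}^k = 2^{\tilde{O}(k/\veps)}$ and the running time $O(Nd|\mathcal{L}|)$ are preserved unchanged. The main obstacle, and the only nontrivial step, will be the coupling argument that extracts a truly uniform $\tau$-sample of $X_i^f$ from a multiset whose empirical distribution on $X_i^f$ merely \emph{dominates} uniform up to the factor $(1-\delta)^4 \gamma$; I would reproduce this extraction from \cite{gjk20} with the enlarged $\rho$, ensuring the $(1-\delta)^{O(1)}$ constants are absorbed into the $(1+\veps/2)$ slack rather than contaminating the final approximation ratio.
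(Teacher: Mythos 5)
Your overall framework matches the paper's: per-cluster case analysis based on whether $X_i$ is "close-dominated" or "far-dominated," $\D^2$-sampling probabilities degrading by $(1-\delta)^4$, compensated by inflating $\rho$, and probability amplification by $2^k$ outer repetitions. The close-dominated case is also handled in the same spirit (routing to $C_{init}$ and invoking Facts~\ref{fact:1}--\ref{fact:2}), though the paper more precisely applies Inaba to the collapsed multiset $X_i' = \{c(x): x \in X_i\}$ and then argues that $C_i'$ contains every needed $\tau$-multiset from $C_{init}$.

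The far-dominated case has a genuine gap. You propose to extract a uniform $\tau$-sample $S_i$ from the \emph{far subset} $X_i^f \subseteq X_i$ and then apply Inaba to conclude $\Phi(\mu(S_i), X_i^f) \leq (1 + \veps/2)\Delta(X_i^f)$. But this gives a good center for $X_i^f$, not for $X_i$, and the near points $X_i^{near} = X_i \setminus X_i^f$ can carry an arbitrarily large share of $|X_i|$ and can be located far from $\mu(X_i^f)$, so $\Phi(\mu(S_i), X_i)$ can blow up. Tuning the threshold so that the far points carry a constant fraction of $\Delta(X_i)$ does not fix this: a point set can have a constant fraction of its $1$-means cost on each side while the two centroids are far apart, in which case a good center for one side is a bad center for the whole. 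The paper's resolution is to apply Inaba not to $X_i^{far}$ but to the \emph{collapsed set} $X_i' = X_i^{far} \cup \{c(x): x \in X_i^{near}\}$, which has the same cardinality as $X_i$ and whose optimal center is close (in $1$-means cost) to that of $X_i$ (this is the content of the paper's Lemmas~\ref{lemma:case2-1}--\ref{lemma:case2-4}, using the carefully tuned radius $R^2 = \frac{\veps^2}{41}\cdot\frac{\Phi(C_{init}, X_i)}{|X_i|}$, not $\Delta(X_i)/|X_i|$, which the algorithm cannot even estimate). To then obtain near-uniform samples from $X_i'$ rather than from $X_i^{far}$, the paper mixes: when a $\D^2$-sampled point lands in $X_i^{far}$, a coin flip decides whether to keep it or to replace it by a random element of $c(X_i^{near})$, and since $C_i'$ holds $\tau$ copies of each $C_{init}$ center, any realization of this mixed sample lives inside $M_i$. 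Your proposal omits both the collapsed set $X_i'$ and this mixing with $C_{init}$ copies, which is exactly the piece needed to make the Inaba step deliver a good center for $X_i$ rather than for $X_i^{far}$ alone.
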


Let $C_{init}$ be an $(\alpha, \beta)$-approximate solution to the $k$-means problem on the dataset $V$. That is:
\[
\Phi(V, C_{init}) \leq \alpha \cdot OPT \quad \textrm{and} \quad |C_{init}| \leq \beta \cdot k.
\]
Let us now analyze the algorithm. Note that the outer repetition (lines 5-12) operation is to amplify the probability of the list containing a good $k$ center set.
We will show that the probability of finding a good $k$ center set in one iteration is at least $(3/4)^k$, and the theorem will follow from standard probability analysis. So, we shall focus on only one iteration of the outer loop. Consider the multi-set $M$ obtained on line (7) of Algorithm~\ref{algo:robust-ptas}.
We will show that with probability at least $(3/4)^k$, there are disjoint multi-subsets $T_1, ..., T_k$ each of size $\tau$ such that for every $i = 1, ..., k$:
\begin{equation}\label{eqn:reqd}
\Phi(\mu(T_i), X_i) \leq \left( 1 + \frac{\veps}{2}\right) \cdot \Delta(X_i) + \frac{\veps}{2k} \cdot OPT.
\end{equation}
Since we try all possible subsets of $M$ in line (8), we will get the desired result.
More precisely, we will argue in the following manner:
consider the multi-set  $C' = \{\tau k \textrm{ copies of each element in $C_{init}$}\}$. 
We can interpret $C'$ as a union of multi-sets $C_1', C_2', ..., C_k'$, where $C_i' = \{\tau \textrm{ copies of each element in $C_{init}$}\}$.
Also, since $M$ consists of $\rho k$ independently sampled points, we can interpret $M$ as a union of multi-sets $M_1', M_2', ..., M_k'$ where $M_1'$ is the first $\rho$ points sampled, $M_2'$ is the second $\rho$ points and so on.
For all $i = 1, ..., k$, let $M_i = C_i' \cup (M_i' \cap X_i)$.\footnote{$M_i' \cap X_i$ in this case, denotes those points in the multi-set $M_i'$ that belongs to $X_i$.}
We will show that for every $i \in \{1, ..., k\}$, with probability at least $(3/4)$, $M_i$ contains a subset $T_i$ of size $\tau$ that satisfies Eqn. (\ref{eqn:reqd}). 
Note that $T_i$'s being disjoint follows from the definition of $M_i$.
It will be sufficient to prove the following lemma.

\begin{lemma}\label{lem:list-mainlem}
Consider the sets $M_1, ..., M_k$ as defined above. For any $i \in \{1, ..., k\}$, 
\[
\pr \left[\exists T_i \subseteq M_i \textrm{ s.t. } |T_i| = \tau \textrm{ and } \left(\Phi(\mu(T_i), X_i) \leq \left(1 + \frac{\veps}{2} \right) \cdot \Delta(X_i) + \frac{\veps}{2k} OPT \right)\right] \geq \frac{3}{4}.
\]
\end{lemma}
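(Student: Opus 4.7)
\textbf{Proof plan for Lemma~\ref{lem:list-mainlem}.}
Fix an index $i\in\{1,\dots,k\}$. Following Bhattacharya \etal~\cite{bgjk20}, the plan is to split on whether the optimal cluster $X_i$ is \emph{close} to $C_{init}$, meaning there exists $c^\star\in C_{init}$ with
\[
|X_i|\cdot\norm{c^\star-\mu(X_i)}^2 \;\leq\; \tfrac{\veps}{2}\,\Delta(X_i) + \tfrac{\veps}{2k}\,OPT,
\]
or \emph{far}. In the close case, take $T_i$ to be $\tau$ copies of a witnessing $c^\star$; this is a valid $\tau$-subset of $M_i$ because $C'_i\subseteq M_i$ contains $\tau$ copies of every $c\in C_{init}$. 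Then $\mu(T_i)=c^\star$, and Fact~\ref{fact:1} gives
\[
\Phi(\mu(T_i),X_i) \;=\; \Delta(X_i)+|X_i|\,\norm{c^\star-\mu(X_i)}^2 \;\leq\; \big(1+\tfrac{\veps}{2}\big)\Delta(X_i)+\tfrac{\veps}{2k}\,OPT
\]
deterministically, so the event of the lemma holds with probability $1$ in this subcase.

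In the far case, every $c\in C_{init}$ is well-separated from $\mu(X_i)$. Using the approximate triangle inequality (Fact~\ref{fact:2}) and summing over $x\in X_i$, this bounds the ratio between the maximum and the minimum value of $D^2(x,C_{init})$ for $x\in X_i$ by a constant depending only on $\veps$. Hence the exact $D^2$-sampling distribution, restricted to $X_i$, is within a constant factor of uniform on $X_i$, and its total mass on $X_i$ is $\Omega(\veps/k)$ (using $\Phi(V,C_{init}) = O(OPT)$ via the constant pseudoapproximation property of $C_{init}$). Switching to $\D^2$-sampling perturbs every per-point probability by a factor at most $((1+\delta)/(1-\delta))^{4}$, which is a constant for $\delta\leq 1/2$ and is precisely the $(1-\delta)^4$ factor appearing in the definition of $\rho$. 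An independent rejection step then flattens the conditional distribution on $X_i$ to exactly uniform. A Chernoff bound shows that for $\rho=\Theta(k/((1-\delta)^4\veps^4))$, the batch $M'_i\cap X_i$ yields at least $\tau=\Theta(1/\veps)$ truly uniform survivors with probability at least $7/8$. Taking any $\tau$ of them as $T_i\subseteq M_i$ and invoking Inaba's Lemma~\ref{lemma:inaba} with $M=\tau$ and failure parameter $1/8$ gives $\Phi(\mu(T_i),X_i)\leq(1+\tfrac{\veps}{2})\Delta(X_i)$ with probability at least $7/8$, so the event of the lemma holds with probability at least $3/4$ by a union bound.

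\textbf{Main obstacle.} The delicate step is the far case. One must quantify ``nearly uniform on $X_i$'' precisely enough to absorb both the $(1\pm\delta)$ distortion of $\D$ and the constant coming from the farness condition, without inflating the final approximation beyond $(1+\veps/2)$, and then calibrate $\rho$ so that $\tau$ genuinely uniform samples from $X_i$ survive with probability $\geq 7/8$. The factor $\veps^{-4}$ in $\rho$ absorbs an $\veps^{-1}$ from $\tau$, an $\veps^{-1}$ from the per-cluster hit probability, and a Chernoff $\veps^{-2}$, while $(1-\delta)^{-4}$ absorbs the $\D$-distortion. The rest of the argument is bookkeeping using Facts~\ref{fact:1} and \ref{fact:2}.
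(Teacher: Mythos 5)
Your dichotomy is not the one the paper uses, and the ``far'' branch has a genuine gap. In the paper, Case I is the event that $X_i$ contributes only an $\veps/(6\alpha k)$-fraction of $\Phi(C_{init},V)$, and Case II is its complement; you instead split on whether a single center $c^\star$ is close to $\mu(X_i)$. Your close case is fine: taking $T_i$ to be $\tau$ copies of such a $c^\star$ and applying Fact~\ref{fact:1} gives the bound deterministically. But the far case as written does not go through. You claim that if every $c\in C_{init}$ is far from $\mu(X_i)$, then $\max_{x\in X_i} D^2(x,C_{init})/\min_{x\in X_i}D^2(x,C_{init})$ is bounded by a constant, so the $D^2$-distribution restricted to $X_i$ is near-uniform. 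That is false: a point $x_0\in X_i$ can coincide with (or lie arbitrarily near) a center in $C_{init}$ even when $\mu(X_i)$ is far from every center, in which case $D^2(x_0,C_{init})$ is arbitrarily small and the ratio is unbounded. No rejection-sampling step with a fixed rate can flatten that to uniform without an unbounded number of samples, so the appeal to Chernoff and Inaba (Lemma~\ref{lemma:inaba}) does not close the argument. This is exactly the obstruction that the paper's own Case II is designed around.

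What the paper does instead, and what your far case is missing: it introduces a radius $R$ (Eqn.~\ref{eqn:defineR}) and partitions $X_i$ into $X_i^{near}$ (within $R$ of $C_{init}$, hence possibly under-sampled by $\D^2$) and $X_i^{far}$. It then works with a surrogate multiset $X_i' = X_i^{far}\cup\{c(x):x\in X_i^{near}\}$, proves that a $(1+\veps/16)$-good center for $X_i'$ is a $(1+\veps/2)$-good center for $X_i$ (Lemmas~\ref{lemma:case2-1}--\ref{lemma:case2-4}), and builds an almost-uniform sampler on $X_i'$ by mixing $\D^2$-samples (which do hit each point of $X_i^{far}$ with probability $\Omega(\gamma/|X_i|)$, Lemma~\ref{lem:osample}) with uniformly random elements of $c(X_i^{near})$, the latter being available because $C_i'$ contains $\tau$ copies of every center. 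Only then does the rejection trick (the $Z_u$ variables) and Inaba apply, and $T_i\subseteq M_i$ can legitimately draw from both the sampled points and $C_i'$. Your proposal has $T_i$ coming entirely from $M_i'\cap X_i$ in the far case, with no mechanism to substitute centers for under-sampled near points, so it cannot be repaired by just tightening constants.
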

We prove the above lemma in the remaining discussion. 
We do a case analysis that is based on whether $\frac{\Phi(C_{init}, X_i)}{\Phi(C_{init}, V)}$ is large or small for a particular $i \in \{1, ..., k\}$.
\begin{itemize}
\item \underline{\it Case-I} $\left(\Phi(C_{init}, X_i) \leq \frac{\veps}{6 \alpha k} \cdot \Phi(C_{init}, V) \right)$:
Here, we will show that there is a subset $T_i \subseteq C_i' \subseteq M_i$ that satisfies Eqn. (\ref{eqn:reqd}).

\item \underline{\it Case-II} $\left(\Phi(C_{init}, X_i) > \frac{\veps}{6 \alpha k} \cdot \Phi(C_{init}, V) \right)$:
Here we will show that $M_i$ contains a subset $T_i$ such that $\Phi(\mu(T_i), X_i) \leq \left(1+\frac{\veps}{2} \right) \cdot \Delta(X_i)$ and hence $T_i$ also satisfies Eqn. (\ref{eqn:reqd}).
\end{itemize}
We will discuss these two cases next.

\paragraph*{\bf Case-I: $\left(\Phi(C_{init}, X_i) \leq \frac{\veps}{6 \alpha k} \cdot \Phi(C_{init}, V) \right)$}
First, observe that:
\begin{equation}\label{eqn:case1-1}
\Phi(C_{init}, X_i) \leq \frac{\veps}{6k} \cdot OPT \qquad \textrm{(since $C_{init}$ is an $\alpha$-approximate solution)}
\end{equation}
For any point $x \in V$, let $c(x)$ denote the center in the set $C$ that is closest to $x$. 
That is, $c(x) = \arg\min_{c \in C_{init}}{\norm{c - x}}$.
Given this definition, note that:
\begin{equation}
\sum_{x \in X_i} \norm{x - c(x)}^2 = \Phi(C_{init}, X_i)
\end{equation}
We define the multi-set $X_i' = \{c(x) : x \in X_i\}$.
Let $m$ and $m'$ denote the centroids of the point sets $X_i$ and $X_i'$, respectively. 
So, we have $\Delta(X_i) = \Phi(m, X_i)$ and $\Delta(X_i') = \Phi(m', X_i')$.
We will show that $\Delta(X_i) \approx \Delta(X_i')$. 
First, we bound the distance between $m$ and $m'$.

\begin{lemma}\label{lemma:case1-1}
$\norm{m - m'}^2 \leq \frac{\Phi(C_{init}, X_i)}{|X_i|}$.
\end{lemma}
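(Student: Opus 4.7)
The plan is to express both centroids as averages over the same index set $X_i$ and then invoke convexity of the squared norm. Since $X_i' = \{c(x) : x \in X_i\}$ is a multiset of size $|X_i|$ indexed by the points of $X_i$, we have
\[
m = \mu(X_i) = \frac{1}{|X_i|}\sum_{x \in X_i} x, \qquad m' = \mu(X_i') = \frac{1}{|X_i|}\sum_{x \in X_i} c(x).
\]
Subtracting, $m - m' = \frac{1}{|X_i|}\sum_{x \in X_i}(x - c(x))$, so $\norm{m - m'}^2$ is the squared norm of an average of vectors.

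The next step is to apply Jensen's inequality (equivalently, Cauchy-Schwarz on the uniform measure over $X_i$) to the convex function $v \mapsto \norm{v}^2$:
\[
\norm{m - m'}^2 = \norm{\frac{1}{|X_i|}\sum_{x \in X_i}(x - c(x))}^2 \leq \frac{1}{|X_i|}\sum_{x \in X_i}\norm{x - c(x)}^2.
\]
By definition, $c(x)$ is the center in $C_{init}$ closest to $x$, so $\sum_{x \in X_i}\norm{x - c(x)}^2 = \Phi(C_{init}, X_i)$, which yields the claimed bound $\norm{m - m'}^2 \leq \Phi(C_{init}, X_i)/|X_i|$.

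There is no real obstacle here; the statement is an immediate consequence of the convexity of $\norm{\cdot}^2$, and no use of the $\delta$-closeness of $\D$ and $D$ enters, since this lemma concerns only true Euclidean distances between the constructed centroids. The short proof should consist of the two displayed equations above together with a one-line identification of the right-hand sum as $\Phi(C_{init}, X_i)$.
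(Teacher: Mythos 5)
Your proof is correct and follows exactly the same route as the paper: express $m - m'$ as an average of the vectors $x - c(x)$, apply Cauchy--Schwarz (equivalently Jensen for $\norm{\cdot}^2$) to bound the squared norm of the average by the average of the squared norms, and identify the resulting sum as $\Phi(C_{init}, X_i)$. No meaningful difference in approach.
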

\begin{proof}
We have:
\begin{eqnarray*}
\norm{m - m'}^2 = \frac{\norm{\sum_{x \in X_i} (x - c(x))}^2}{|X_i|^2} \leq  \frac{\sum_{x \in X_j} \norm{(x - c(x))}^2}{|X_j|} = \frac{\Phi(C_{init}, X_j)}{|X_j|}.
\end{eqnarray*}
where the second to last inequality follows from Cauchy-Schwartz.
\end{proof}

\begin{lemma}\label{lemma:case1-2}
$\Delta(X_i') \leq 2 \cdot \Phi(C_{init}, X_i) + 2 \cdot \Delta(X_i)$.
\end{lemma}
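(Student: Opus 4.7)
The plan is to prove Lemma~\ref{lemma:case1-2} by combining the optimality of the centroid (Fact~\ref{fact:1}) with the approximate triangle inequality (Fact~\ref{fact:2}), in that order.

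First, I would use the fact that $m' = \mu(X_i')$ minimizes the sum of squared distances from the points in $X_i'$ among all choices of center. In particular, replacing $m'$ with $m = \mu(X_i)$ can only increase this sum, giving
\[
\Delta(X_i') \;=\; \sum_{x \in X_i} \norm{c(x) - m'}^2 \;\leq\; \sum_{x \in X_i} \norm{c(x) - m}^2.
\]
(Here I am using that $X_i'$ is the multi-set $\{c(x) : x \in X_i\}$, so summing over $x \in X_i$ is the same as summing over the elements of the multi-set $X_i'$ with multiplicity.)

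Next, I would apply the approximate triangle inequality (Fact~\ref{fact:2}) to each summand with the middle point taken to be $x$ itself:
\[
\norm{c(x) - m}^2 \;\leq\; 2\,\norm{c(x) - x}^2 + 2\,\norm{x - m}^2.
\]
Summing over $x \in X_i$ and recognizing the two resulting sums as $\Phi(C_{init}, X_i)$ and $\Delta(X_i)$ respectively yields the claimed bound
\[
\Delta(X_i') \;\leq\; 2\,\Phi(C_{init}, X_i) + 2\,\Delta(X_i).
\]

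There is no real obstacle here; the proof is a two-line calculation using tools already stated in the paper. The only mild subtlety worth being careful about is that $X_i'$ is a multi-set (with one entry per $x \in X_i$, possibly with repetitions since many points in $X_i$ may share the same nearest center in $C_{init}$), so one must write the sums over $x \in X_i$ rather than over distinct elements of $X_i'$ — but this is precisely how $\Delta(X_i')$ and $m' = \mu(X_i')$ are defined, so the argument goes through verbatim.
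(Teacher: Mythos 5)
Your proof is correct and is essentially identical to the paper's: both first replace $m'$ by $m$ using the optimality of the centroid, then apply the approximate triangle inequality (Fact~\ref{fact:2}) termwise with $x$ as the intermediate point, and sum. Your extra remark about treating $X_i'$ as a multi-set indexed by $x \in X_i$ is a sound clarification of a point the paper leaves implicit.
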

\begin{proof}
We have:
\begin{eqnarray*}
\Delta(X_i') &=& \sum_{x \in X_i} \norm{c(x) - m'}^2 \leq \sum_{x \in X_i} \norm{c(x) - m}^2 \\
&\stackrel{\tiny{(Fact\ \ref{fact:2})}}{\leq}& 2 \cdot \sum_{x \in X_{i}} (\norm{c(x) - x}^2 + \norm{x - m}^2) = 2 \cdot \Phi(C_{init}, X_i) + 2 \cdot \Delta(X_i)
\end{eqnarray*}
This completes the proof of the lemma.
\end{proof}

We now show that a good center for $X_i'$ will also be a good center for $X_i$.
\begin{lemma}
Let $m''$ be a point such that $\Phi(m'', X_i') \leq (1 +\frac{\veps}{8}) \cdot \Delta(X_i')$.
Then $\Phi(m'', X_i) \leq (1 + \frac{\veps}{2}) \cdot \Delta(X_i) + \frac{\veps}{2k} \cdot OPT$.
\end{lemma}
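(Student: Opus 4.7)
The plan is to reduce the claim to a comparison between centroids: by Fact~\ref{fact:1} applied to $X_i$, we have $\Phi(m'', X_i) = \Delta(X_i) + |X_i|\cdot\norm{m''-m}^2$, so the entire task is to bound $|X_i|\cdot\norm{m''-m}^2$ in terms of $\Delta(X_i)$ and $OPT$. First I would split this quantity using the approximate triangle inequality (Fact~\ref{fact:2}):
\[
|X_i|\cdot\norm{m''-m}^2 \;\le\; 2\,|X_i|\cdot\norm{m''-m'}^2 + 2\,|X_i|\cdot\norm{m'-m}^2.
\]

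Next I would handle each of the two terms on the right using ingredients already set up. For the first term, apply Fact~\ref{fact:1} to the multi-set $X_i'$ (noting $|X_i'|=|X_i|$) to get $|X_i|\cdot\norm{m''-m'}^2 = \Phi(m'',X_i')-\Delta(X_i')$; by the hypothesis on $m''$ this is at most $\tfrac{\veps}{8}\Delta(X_i')$, and by Lemma~\ref{lemma:case1-2} we may further bound $\Delta(X_i')\le 2\Phi(C_{init},X_i)+2\Delta(X_i)$. For the second term, Lemma~\ref{lemma:case1-1} immediately gives $|X_i|\cdot\norm{m'-m}^2\le\Phi(C_{init},X_i)$.

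Plugging the two bounds back in, the total contribution takes the form $\bigl(1+\tfrac{\veps}{2}\bigr)\Delta(X_i) + O(1)\cdot\Phi(C_{init},X_i)$ after collecting constants (the $\Delta(X_i)$ coefficient is clean because the $\tfrac{\veps}{8}$ from the hypothesis cancels the factor $2\cdot 2$ from Fact~\ref{fact:2} and Lemma~\ref{lemma:case1-2}). Finally I would invoke the Case-I assumption, i.e.\ inequality~(\ref{eqn:case1-1}), which says $\Phi(C_{init},X_i)\le\tfrac{\veps}{6k}\cdot OPT$, to absorb the residual term into $\tfrac{\veps}{2k}\cdot OPT$ with slack.

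No step here is genuinely hard — everything reduces to careful bookkeeping of constants through two applications of Fact~\ref{fact:1} and one application of the approximate triangle inequality. The only place where one has to be slightly careful is ensuring that the constant in front of $\Phi(C_{init},X_i)$ after combining all contributions is small enough (at most $3$ or so) so that the Case-I bound $\Phi(C_{init},X_i)\le\tfrac{\veps}{6k}\cdot OPT$ yields the clean $\tfrac{\veps}{2k}\cdot OPT$ term; this is the only place where the precise constant $\tfrac{1}{6}$ in the Case-I threshold is used.
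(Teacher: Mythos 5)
Your proposal is correct and follows essentially the same route as the paper's proof: Fact~\ref{fact:1} on $X_i$, the approximate triangle inequality splitting $\norm{m''-m}^2$ through $m'$, Fact~\ref{fact:1} on $X_i'$ combined with the hypothesis on $m''$, Lemma~\ref{lemma:case1-2}, Lemma~\ref{lemma:case1-1}, and finally the Case-I bound (\ref{eqn:case1-1}). Your constant accounting is also right: the coefficient on $\Phi(C_{init},X_i)$ comes out to $2+\tfrac{\veps}{2}\le 2.25$, which the $\tfrac{1}{6}$ threshold comfortably absorbs into $\tfrac{\veps}{2k}\cdot OPT$.
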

\begin{proof}
We have:
\begin{eqnarray*}
\Phi(m'', X_i) &\stackrel{\tiny{(Fact~\ref{fact:1})}}{=}& \sum_{x \in X_i} \norm{x - m}^2 + |X_i| \cdot \norm{m - m''}^2\\
&\stackrel{\tiny{(Fact~\ref{fact:2})}}{\leq}& \Delta(X_i) + 2|X_i| \cdot (\norm{m - m'}^2 + \norm{m' - m''}^2) \\
&\stackrel{\tiny{(Lemma~\ref{lemma:case1-1})}}{\leq}& \Delta(X_i) + 2 \cdot \Phi(C_{init}, X_i) + 2 |X_i| \cdot \norm{m' - m''}^2\\
&\stackrel{\tiny{(Fact~\ref{fact:1})}}{\leq}& \Delta(X_i) + 2 \cdot \Phi(C_{init}, X_i) + 2 (\Phi(m'', X_i') - \Delta(X_i'))\\
&\leq&  \Delta(X_i) + 2 \cdot \Phi(C, X_i) + \frac{\veps}{4} \cdot \Delta(X_i')\\
&\stackrel{\tiny{(Lemma~\ref{lemma:case1-2})}}{\leq}&  \Delta(X_i) + 2 \cdot \Phi(C_{init}, X_i) + \frac{\veps}{2} \cdot (\Phi(C_{init}, X_i) + \Delta(X_i)) \\
&\stackrel{\tiny{(Eqn.~\ref{eqn:case1-1})}}{\leq}& \left(1+\frac{\veps}{2} \right) \cdot \Delta(X_i) + \frac{\veps}{2k} \cdot OPT.
\end{eqnarray*}
This completes the proof of the lemma.
\end{proof}

We know from Lemma~\ref{lemma:inaba} that there exists a (multi) subset of $X_i'$ of size $\tau$ such that the mean of these points satisfies the condition of the lemma above.
Since $C_i'$ contains at least $\tau$ copies of every element of $C_{init}$, there is guaranteed to be a subset $T_i \subseteq C_i'$ that satisfies Eqn. (\ref{eqn:reqd}).
So, for any index $i \in \{1, ..., k\}$ such that $\frac{\Phi(C_{init}, X_i)}{\Phi(C_{init}, V)} \leq \frac{\veps}{6 \alpha k}$, $M_i$ has a good subset $T_i$ with probability $1$.

\paragraph*{\bf Case-II: $\left(\Phi(C_{init}, X_i) > \frac{\veps}{6 \alpha k} \cdot \Phi(C_{init}, V) \right)$}

If we can show that a $\D^2$-sampled set with respect to center set $C_{init}$ has a subset $S$ that may be considered a uniform sample from $X_i$, then we can use Lemma~\ref{lemma:inaba} to argue that $M_i$ has a subset $T_i$ such that $\mu(T_i)$ is a good center for $X_i$. 
Note that since $\frac{\Phi(C_{init}, X_j)}{\Phi(C_{init}, V)} > \frac{\veps}{6 \alpha k}$, and $D$ is $\delta$-close to $\D$, we can  argue that if we $\D^2$-sample $poly(\frac{k}{\veps})$ elements, then we will get a good representation from $X_i$.
However, some of the points from $X_i$ may be very close to one of the centers in $C_{init}$ and hence will have a very small chance of being $\D^2$-sampled. 
In such a case, no subset $S$ of a $\D^2$-sampled set will behave like a uniform sample from $X_i$. 
So, we need to argue more carefully taking into consideration the fact that there may be points in $X_i$ for which the chance of being $\D^2$-sampled may be very small. 
Here is the high-level argument that we will build:
\begin{itemize}
\item Consider the set $X_i'$ which is same as $X_i$ except that points in $X_i$ that are very close to $C_{init}$ have been ``collapsed'' to their closest center in $C_{init}$.
\item Argue that a good center for the set $X_i'$ is a good center for $X_i$.
\item Show that a convex combination of copies of centers in $C_{init}$ (i.e., $C_i'$) and $\D^2$-sampled points from $X_i$ gives a good center for the set $X_i'$.
\end{itemize}
The closeness of point in $X_i$ to points in $C_{init}$ is quantified using radius $R$ that is defined by the equation: 
\begin{equation}\label{eqn:defineR}
R^2 \stackrel{defn.}{=} \frac{\veps^2}{41} \cdot \frac{\Phi(C_{init}, X_i)}{|X_i|}.
\end{equation}
Let $X_i^{near}$ be the points in $X_i$ that are within a distance of $R$ from a point in set $C_{init}$ and $X_i^{far} = X_i \setminus X_i^{near}$. That is,
$
X_i^{near} = \{x \in X_i : \min_{c \in C_{init}}{\norm{x - c}} \leq R\}$ and $X_i^{far} = X_i \setminus X_{i}^{near}.
$
Using these, we define the multi-set $X_i'$ as:
\[
X_i' = X_i^{far} \cup \{c(x) : x \in X_i^{near}\}
\]

\newcommand{\nst}{\bar{n}}

Note that $|X_i| = |X_i'|$.
Let $m = \mu(X_i)$, $m' = \mu(X_i')$.
Let $n = |X_i|$ and $\nst = |X_i^{near}|$.
We first show a lower bound on $\Delta(X_i)$ in terms of $R$.

\begin{lemma}\label{lemma:case2-1}
$\Delta(X_i) \geq \frac{16 \nst}{\veps^2} R^2$.
\end{lemma}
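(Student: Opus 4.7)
The plan is to relate $\Delta(X_i)$ to $\Phi(C_{init}, X_i)$ via the parallel-axis identity (Fact~\ref{fact:1}) and to control the resulting slack using the defining inequality $\|x - c(x)\| \leq R$ on $X_i^{near}$.

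First I would choose $c^* \in C_{init}$ to be the center nearest to the centroid $m = \mu(X_i)$. Since $c^*$ is a single element of $C_{init}$, $\Phi(C_{init}, X_i) \leq \sum_{x \in X_i} \|x - c^*\|^2$; applying Fact~\ref{fact:1} on the right and substituting the defining equation $\Phi(C_{init}, X_i) = \tfrac{41 n R^2}{\veps^2}$ yields
\[
\Delta(X_i) \;\geq\; \Phi(C_{init}, X_i) - n\|m - c^*\|^2 \;=\; \frac{41 n R^2}{\veps^2} - n\|m - c^*\|^2.
\]
So it suffices to show $\|m - c^*\|^2 \leq \tfrac{1+\eta}{\nst}\Delta(X_i) + O(R^2)$ for some small $\eta > 0$.

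For this I would use any $x \in X_i^{near}$ (the case $\nst = 0$ is trivial). By the definitions of $c^*$ and $X_i^{near}$, $\|m - c^*\| \leq \|m - c(x)\| \leq \|m - x\| + R$; squaring via the parameterized inequality $(a+b)^2 \leq (1+\eta)a^2 + (1+\eta^{-1})b^2$ and averaging over $x \in X_i^{near}$ gives
\[
\|m - c^*\|^2 \;\leq\; \frac{1+\eta}{\nst}\sum_{x \in X_i^{near}}\|x - m\|^2 + (1+\eta^{-1})R^2 \;\leq\; \frac{1+\eta}{\nst}\Delta(X_i) + (1+\eta^{-1})R^2.
\]
Substituting into the first inequality and collecting $\Delta(X_i)$ on the left, I would then use $\nst \leq n$ to bound the coefficient by $(2+\eta)n/\nst$ and choose $\eta = 1/4$; a short calculation with $\veps \leq 1/2$ then produces $\Delta(X_i) \geq \tfrac{16 \nst R^2}{\veps^2}$.

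The main obstacle is the constant bookkeeping: the factor $41$ in the definition of $R^2$ has been chosen precisely so that after paying for the parameterized triangle inequality and the $\nst \leq n$ estimate, a clean factor $16$ survives for all $\veps \leq 1/2$; a poorly chosen $\eta$ (too small or too large) would weaken the constant below $16$.
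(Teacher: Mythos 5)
Your proof is correct, and it reaches the bound by a genuinely different route than the paper. The paper defines the same center $c = \arg\min_{c'\in C_{init}}\norm{m - c'}$ but then splits into two cases according to whether $\norm{m-c} \geq \tfrac{5}{\veps}R$ or not: in the ``far'' case it lower-bounds $\norm{p-m}$ for each $p \in X_i^{near}$ directly via $\norm{p-m}\geq\norm{c(p)-m}-\norm{c(p)-p}\geq\norm{c-m}-R$, and in the ``near'' case it applies Fact~\ref{fact:1} exactly as you do but then simply plugs in the numerical bound $\norm{m-c}^2 < \tfrac{25}{\veps^2}R^2$. Your argument avoids the dichotomy entirely by bounding $\norm{m-c^*}^2$ once and for all through the averaged parameterized triangle inequality over $X_i^{near}$, feeding that back into the parallel-axis identity, and solving the resulting linear inequality in $\Delta(X_i)$. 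The tradeoff: the paper's case split keeps the arithmetic trivial at the cost of a branch, while yours is a single uniform estimate whose constant bookkeeping is slightly more involved but goes through cleanly. Working out the final calculation with $\eta = 1/4$, one gets
\[
\Delta(X_i)\geq \frac{\nst R^2}{2+\eta}\left(\frac{41}{\veps^2}-(1+\eta^{-1})\right)=\frac{4\nst R^2}{9}\left(\frac{41}{\veps^2}-5\right)\geq\frac{144\,\nst R^2}{9\veps^2}=\frac{16\nst R^2}{\veps^2},
\]
valid for all $\veps\leq 1$ (not just $\veps\leq 1/2$), so the constants do close up as you claim. One small point worth making explicit: the substitution step requires $\nst > 0$ to form the average, and you correctly flag that $\nst = 0$ makes the claim vacuous.
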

\begin{proof}
Let $c = \arg\min_{c' \in C_{init}}{\norm{m - c'}}$. We do a case analysis:
\begin{enumerate}
\item \underline{Case 1}: $\norm{m - c} \geq \frac{5}{\veps} \cdot R$\\
Consider any point $p \in X_i^{near}$. From triangle inequality, we have:
\[
\norm{p - m} \geq \norm{c(p) - m} - \norm{c(p) - p} \geq \frac{5}{\veps} \cdot R - R \geq \frac{4}{\veps} \cdot R.
\]
This gives: $\Delta(X_i) \geq \sum_{p \in X_i^{near}} \norm{p - m}^2 \geq \frac{16 \nst}{\veps^2} \cdot R^2$.

\item \underline{Case 2}: $\norm{m - c} < \frac{5}{\veps} \cdot R$\\
In this case, we have:
\begin{eqnarray*}
\Delta(X_i) = \Phi(c, X_i) - n \cdot \norm{m - c}^2 \geq \Phi(C_{init}, X_i) - n \cdot \norm{m - c}^2 
\geq \frac{41 n}{\veps^2} \cdot R^2 - \frac{25 n}{\veps^2} \cdot R^2 \geq \frac{16 \nst}{\veps^2} \cdot R^2.
\end{eqnarray*}
\end{enumerate}
This completes the proof of the lemma.
\end{proof}

We now bound the distance between $m$ and $m'$ in terms of $R$.

\begin{lemma}\label{lemma:case2-2}
$\norm{m - m'}^2 \leq \frac{\nst}{n} \cdot R^2$.
\end{lemma}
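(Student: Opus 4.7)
The plan is to exploit the fact that $X_i$ and $X_i'$ differ only on the near part: on $X_i^{\mathrm{far}}$ the two multi-sets agree, while on $X_i^{\mathrm{near}}$ each point $x$ has been replaced by its closest center $c(x)$. Consequently the difference of the two centroids telescopes to a sum purely over $X_i^{\mathrm{near}}$:
\[
m - m' \;=\; \frac{1}{n}\sum_{x \in X_i}x \;-\; \frac{1}{n}\sum_{x \in X_i'}x \;=\; \frac{1}{n}\sum_{x \in X_i^{\mathrm{near}}} (x - c(x)).
\]

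From here the proof is a one-line Cauchy--Schwarz application followed by the defining bound on $X_i^{\mathrm{near}}$. Applying the vector Cauchy--Schwarz inequality $\|\sum_{j=1}^{\nst} u_j\|^2 \le \nst \sum_{j=1}^{\nst} \|u_j\|^2$ to the $\nst$ displacement vectors $u_x = x - c(x)$, I would write
\[
\|m - m'\|^2 \;\le\; \frac{1}{n^2}\cdot \nst \cdot \sum_{x \in X_i^{\mathrm{near}}} \|x - c(x)\|^2.
\]
By definition of $X_i^{\mathrm{near}}$, every $x \in X_i^{\mathrm{near}}$ satisfies $\|x - c(x)\| \le R$ (where $c(x)$ is the nearest point of $C_{init}$ to $x$), so $\sum_{x \in X_i^{\mathrm{near}}} \|x - c(x)\|^2 \le \nst\, R^2$. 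This yields the (slightly stronger) bound $\|m - m'\|^2 \le \frac{\nst^2}{n^2}\, R^2$, which, using $\nst \le n$, immediately implies the stated inequality $\|m - m'\|^2 \le \frac{\nst}{n}\, R^2$.

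There is no real obstacle here: the lemma is a direct analogue of Lemma~\ref{lemma:case1-1} (where the same Cauchy--Schwarz telescoping was used, but over all of $X_i$). The only thing to be careful about is accounting properly for which points contribute to the displacement, i.e.\ recognizing that the contributions from $X_i^{\mathrm{far}}$ cancel and that the sum has exactly $\nst$ terms rather than $n$. This is precisely what gives the improvement over the Case-I bound and what will later allow the Case-II argument to absorb the replacement error into $\veps \cdot \Delta(X_i)$ via Lemma~\ref{lemma:case2-1}.
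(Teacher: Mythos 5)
Your proof is correct and matches the paper's argument exactly: both telescope the centroid difference to a sum over $X_i^{\mathrm{near}}$, apply Cauchy--Schwarz, bound each displacement by $R$, and then relax $\nst^2/n^2 \le \nst/n$. Nothing to add.
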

\begin{proof}
Since $|X_i| = |X_i'|$ and the only difference between $X_i$ and $X_i'$ are the points corresponding to $X_i^{near}$, we have:
\[
\norm{m - m'}^2 = \frac{1}{(n)^2} \left| \left| \sum_{p \in X_i^{near}} (p - c(p))\right| \right|^2 \leq \frac{\nst}{(n)^2} \sum_{p \in X_i^{near}} \norm{p - c(p)}^2 \leq \frac{\nst^2}{(n)^2}R^2 \leq \frac{\nst}{n} R^2.
\]
The second inequality above follows from the Cauchy-Schwarz inequality. 
\end{proof}

We now show that $\Delta(X_i)$ and $\Delta(X_i')$ are close.

\begin{lemma}\label{lemma:case2-3}
$\Delta(X_i') \leq 4 \nst R^2 + 2 \cdot \Delta(X_i)$.
\end{lemma}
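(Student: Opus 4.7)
The plan is to exploit the fact that $m' = \mu(X_i')$ is the unique minimizer of the $1$-means cost for the multiset $X_i'$, so for any test center $m \in \mathbb{R}^d$ we have $\Delta(X_i') = \Phi(m', X_i') \leq \Phi(m, X_i')$. I would choose $m = \mu(X_i)$, which is the centroid already appearing in the statement, and then split the resulting sum along the partition $X_i' = X_i^{far} \cup \{c(x) : x \in X_i^{near}\}$. The contribution from $X_i^{far}$ is exactly $\sum_{x \in X_i^{far}} \|x - m\|^2$, which is trivially at most $\Delta(X_i)$.

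For the ``collapsed'' part, the key observation is that each replaced point $c(x)$ is within distance $R$ of its original $x \in X_i^{near}$ by the definition of $X_i^{near}$. Applying the approximate triangle inequality (Fact~\ref{fact:2}) gives, for each $x \in X_i^{near}$,
\[
\|c(x) - m\|^2 \;\leq\; 2\|c(x) - x\|^2 + 2\|x - m\|^2 \;\leq\; 2R^2 + 2\|x - m\|^2.
\]
Summing over the $\bar{n}$ points of $X_i^{near}$ contributes at most $2\bar{n} R^2 + 2 \sum_{x \in X_i^{near}} \|x - m\|^2$.

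Recombining the far and near contributions yields
\[
\Delta(X_i') \;\leq\; \sum_{x \in X_i^{far}} \|x - m\|^2 + 2 \sum_{x \in X_i^{near}} \|x - m\|^2 + 2\bar{n}R^2 \;\leq\; 2\Delta(X_i) + 2\bar{n}R^2,
\]
which is dominated by the stated bound $4\bar{n}R^2 + 2\Delta(X_i)$ (the stated constants leave some slack). I do not anticipate any real obstacle: the argument is just the approximate triangle inequality paired with optimality of the centroid, and the only care needed is in bookkeeping the constant factors so that the final bound lines up with the constants in the lemma statement. This bound will then slot directly into the subsequent step that uses $\Delta(X_i') \approx \Delta(X_i)$ to argue that a good center for $X_i'$ is also a good center for $X_i$.
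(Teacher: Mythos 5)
Your proof is correct, and it takes a genuinely different route from the paper's. The paper expands $\Delta(X_i')$ around its own centroid $m'$, applies the approximate triangle inequality to the collapsed near points to reach $2\bar{n}R^2 + 2\Phi(m', X_i)$, and then has to convert $\Phi(m', X_i)$ into $\Delta(X_i)$ by invoking Fact~\ref{fact:1} together with the separate bound $\norm{m-m'}^2 \leq \frac{\bar{n}}{n}R^2$ from Lemma~\ref{lemma:case2-2}; this last step is where the constant inflates to $4\bar{n}R^2$. You instead invoke centroid optimality at the very start ($\Delta(X_i') = \Phi(m', X_i') \leq \Phi(m, X_i')$) so that the rest of the argument never needs to mention $m'$ or bound $\norm{m - m'}$ at all. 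The payoff is twofold: your proof is self-contained (no reliance on Lemma~\ref{lemma:case2-2}), and it yields the strictly tighter bound $\Delta(X_i') \leq 2\bar{n}R^2 + 2\Delta(X_i)$, which of course implies the lemma as stated. The paper's route, while less economical here, keeps $m'$ in play and reuses Lemma~\ref{lemma:case2-2}, which it has already proved for the purposes of the subsequent Lemma~\ref{lemma:case2-4}, so neither approach is clearly better in the context of the whole section.
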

\begin{proof}
The lemma follows from the following sequence of inequalities:
\begin{eqnarray*}
\Delta(X_i') &=& \sum_{p \in X_i^{near}} \norm{c(p) - m'}^2 + \sum_{p \in X_i^{far}} \norm{p - m'}^2\\
&\stackrel{\tiny{(Fact~\ref{fact:2})}}{\leq}& \sum_{p \in X_i^{near}} 2 \cdot \left( \norm{c(p) - p}^2 + \norm{p - m'}^2\right) + \sum_{p \in X_i^{far}} \norm{p - m'}^2\\
&\leq& 2 \nst R^2 + 2 \cdot \Phi(m', X_i)\\
&\stackrel{\tiny{(Fact~\ref{fact:1})}}{\leq}& 2 \nst R^2 + 2 \cdot (\Phi(m, X_i) + n \cdot \norm{m - m'}^2) \\
&\stackrel{\tiny{(Lemma~\ref{lemma:case2-2})}}{\leq}& 4 \nst R^2 + 2 \cdot \Delta(X_i)
\end{eqnarray*}
This completes the proof of the lemma.
\end{proof}

We now argue that any center that is good for $X_i'$ is also good for $X_i$.
\begin{lemma}\label{lemma:case2-4}
Let $m''$ be  such that $\Phi(m'', X_i') \leq \left(1 + \frac{\veps}{16} \right) \cdot \Delta(X_i')$.
Then $\Phi(m'', X_i) \leq \left(1 + \frac{\veps}{2} \right) \cdot \Delta(X_i)$.
\end{lemma}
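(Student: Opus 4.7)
The plan is to mirror the style of the Case-I analogue (the earlier lemma bounding $\Phi(m'',X_i)$ in terms of $\Phi(m'',X_i')$), but to exploit the two sharp quantitative estimates we already have in Case-II, namely Lemma~\ref{lemma:case2-1} ($\Delta(X_i)\geq \tfrac{16\nst}{\veps^2}R^2$) and Lemma~\ref{lemma:case2-2} ($\norm{m-m'}^2\leq \tfrac{\nst}{n}R^2$), so that the contribution of the ``collapsing'' perturbation is swallowed into $(1+\veps/2)\Delta(X_i)$ rather than producing the additional $\tfrac{\veps}{2k}\cdot OPT$ term that showed up in Case-I. Because $X_i$ and $X_i'$ have the same cardinality $n$, I can use the Pythagorean identity (Fact~\ref{fact:1}) on both sides without reweighting.

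First, I would apply Fact~\ref{fact:1} to write $\Phi(m'',X_i)=\Delta(X_i)+n\norm{m-m''}^2$, then use the approximate triangle inequality (Fact~\ref{fact:2}) to split $n\norm{m-m''}^2\leq 2n\norm{m-m'}^2+2n\norm{m'-m''}^2$. A second application of Fact~\ref{fact:1} on the pair $(m',X_i')$ rewrites $n\norm{m'-m''}^2=\Phi(m'',X_i')-\Delta(X_i')$, and the hypothesis gives $\Phi(m'',X_i')-\Delta(X_i')\leq \tfrac{\veps}{16}\Delta(X_i')$. Thus
\[
\Phi(m'',X_i)\leq \Delta(X_i)+2n\norm{m-m'}^2+\tfrac{\veps}{8}\Delta(X_i').
\]
Now I invoke Lemma~\ref{lemma:case2-2} to bound $n\norm{m-m'}^2\leq \nst R^2$ and Lemma~\ref{lemma:case2-3} to bound $\Delta(X_i')\leq 4\nst R^2+2\Delta(X_i)$, giving
\[
\Phi(m'',X_i)\leq \Delta(X_i)+\bigl(2+\tfrac{\veps}{2}\bigr)\nst R^2+\tfrac{\veps}{4}\Delta(X_i).
\]

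Finally I would use Lemma~\ref{lemma:case2-1} (rewritten as $\nst R^2\leq \tfrac{\veps^2}{16}\Delta(X_i)$) to absorb the $\nst R^2$ term, producing
\[
\Phi(m'',X_i)\leq \Bigl(1+\tfrac{\veps}{4}+\tfrac{\veps^2}{8}+\tfrac{\veps^3}{32}\Bigr)\Delta(X_i),
\]
and then a numerical check for $0<\veps\leq 1/2$ confirms the right-hand side is at most $(1+\veps/2)\Delta(X_i)$. The only step that requires any care is tracking constants: one must choose the approximation factor $(1+\veps/16)$ in the hypothesis so that, after the factor of $2$ from Fact~\ref{fact:2} and after combining with the $4\nst R^2$ slack from Lemma~\ref{lemma:case2-3}, the $\veps$-multiple of $\Delta(X_i)$ stays below $\veps/2$. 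The definition $R^2=\tfrac{\veps^2}{41}\cdot\tfrac{\Phi(C_{init},X_i)}{|X_i|}$ in \eqref{eqn:defineR} and the constant $16$ in Lemma~\ref{lemma:case2-1} were chosen precisely to make this bookkeeping work, so no new ideas are needed beyond the already-established geometric lemmas.
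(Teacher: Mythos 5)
Your proposal is correct and follows exactly the same chain of steps as the paper's proof: two applications of Fact~\ref{fact:1}, one of Fact~\ref{fact:2}, then Lemma~\ref{lemma:case2-2}, the hypothesis, Lemma~\ref{lemma:case2-3}, and finally Lemma~\ref{lemma:case2-1}. The only difference is that you spell out the concluding arithmetic check $(\veps/8 + \veps^2/32 \leq 1/4$ for $\veps \leq 1/2)$, which the paper leaves implicit.
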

\begin{proof}
The lemma follows from the following inequalities:
\begin{eqnarray*}
\Phi(m'', X_i) &=& \sum_{p \in X_i} \norm{m''-p}^2 \\
&\stackrel{\tiny{(Fact~\ref{fact:1})}}{=}& \sum_{p \in X_i} \norm{m - p}^2 +  n \cdot \norm{m-m''}^2 \\
&\stackrel{\tiny{(Fact~\ref{fact:2})}}{\leq}& \Delta(X_i) + 2n \left( \norm{m - m'}^2 + \norm{m'-m''}^2 \right)  \\
&\stackrel{\tiny{(Lemma~\ref{lemma:case2-2})}}{\leq}& \Delta(X_i) + 2 \nst R^2 + 2n \cdot \norm{m'-m''}^2 \\
&\stackrel{\tiny{(Fact~\ref{fact:1})}}{\leq}& \Delta(X_i) + 2 \nst R^2 + 2 \cdot \left( \Phi(m'', X_i') - \Delta(X_i')\right)\\
&\stackrel{\tiny{(Lemma\ hypothesis)}}{\leq}& \Delta(X_i) + 2 \nst R^2 + \frac{\veps}{8} \cdot \Delta(X_i') \\
&\stackrel{\tiny{(Lemma~\ref{lemma:case2-3})}}{\leq}& \Delta(X_i) + 2 \nst R^2 + \frac{\veps}{2} \cdot \nst R^2 + \frac{\veps}{4} \cdot \Delta(X_i)\\
&\stackrel{\tiny{(Lemma~\ref{lemma:case2-1})}}{\leq}& \left(1 + \frac{\veps}{2} \right) \cdot \Delta(X_i).
\end{eqnarray*}
This completes the proof of the lemma.
\end{proof}

Given the above lemma, all we need to argue is that our algorithm indeed considers a center $m''$ such that $\Phi(m'', X_i') \leq (1+\veps/16) \cdot \Delta(X_i')$.
For this we would need about $\Omega(\frac{1}{\veps})$ uniform samples from $X_i'$. 
However, our algorithm can only sample using $\D^2$-sampling w.r.t. $C_{init}$. 
For ease of notation, let $c(X_i^{near})$ denote the multi-set $\{c(p): p \in X_i^{near}\}$.
Recall that $X_i'$ consists of $X_i^{far}$ and $c(X_i^{near})$.
The first observation we make is that the probability of sampling an element from $X_i^{far}$ is reasonably large (proportional to $\frac{\veps}{k}$). 
Using this fact, we show how to sample from $X_i'$ (almost uniformly). 
Finally, we show how to convert this almost uniform sampling to uniform sampling (at the cost of increasing the size of sample).

\begin{lemma}
\label{lem:osample}
Let $x$ be a sample from $\D^2$-sampling w.r.t. $C_{init}$.
Then, $\pr[x \in X_i^{far}] \geq \frac{\veps}{8 \alpha k}$.
Further, for any point $p \in X_i^{far}$, $\pr[x=p] \geq \frac{\gamma}{|X_i|}$, where $\gamma$ denotes $\frac{(1-\delta)^4 \veps^3}{246 \alpha k}$.
\end{lemma}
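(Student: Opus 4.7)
The plan is to unpack the $\D^2$-sampling probability directly, use $\delta$-closeness to translate between $\D$ and $D$ in the numerator and denominator separately, and then exploit the definition of $R$ together with the Case II hypothesis $\Phi(C_{init}, X_i) > \frac{\veps}{6\alpha k}\Phi(C_{init}, V)$. The starting identity is
\[
\pr[x \in S] \;=\; \frac{\sum_{y \in S}\D^2(y, C_{init})}{\sum_{y \in V}\D^2(y, C_{init})} \;\geq\; \frac{(1-\delta)^2}{(1+\delta)^2}\cdot\frac{\sum_{y \in S} D^2(y, C_{init})}{\Phi(C_{init}, V)},
\]
where I lower bound the numerator and upper bound the denominator via $\D(\cdot,\cdot) \in (1\pm\delta)D(\cdot,\cdot)$. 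The whole proof reduces to lower-bounding $\sum_{y \in S} D^2(y, C_{init})$ for the two choices $S = X_i^{far}$ and $S = \{p\}$.

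For part (1), I take $S = X_i^{far}$ and observe
\[
\sum_{p \in X_i^{far}} D^2(p, C_{init}) \;=\; \Phi(C_{init}, X_i) \;-\; \sum_{p \in X_i^{near}} D^2(p, C_{init}).
\]
By the definition of $X_i^{near}$, every term in the last sum is at most $R^2$, and since $|X_i^{near}| \leq |X_i|$, the subtracted quantity is at most $|X_i|R^2 = \frac{\veps^2}{41}\Phi(C_{init}, X_i)$ by the definition of $R$ in (\ref{eqn:defineR}). For $\veps \leq 1/2$ this leaves a factor arbitrarily close to $1$ of $\Phi(C_{init}, X_i)$. Plugging in the Case II lower bound $\Phi(C_{init}, X_i) > \frac{\veps}{6\alpha k}\Phi(C_{init}, V)$ and absorbing the constant $\frac{(1-\delta)^2}{(1+\delta)^2}\cdot(1 - \veps^2/41)$ (which is bounded below by an absolute constant since $\delta, \veps \leq 1/2$) into the denominator $6 \to 8$ gives the claimed bound $\pr[x \in X_i^{far}] \geq \frac{\veps}{8\alpha k}$.

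For part (2), I take $S = \{p\}$ for an arbitrary $p \in X_i^{far}$. By definition of $X_i^{far}$, $D^2(p, C_{init}) > R^2$, so the inequality above becomes
\[
\pr[x = p] \;\geq\; \frac{(1-\delta)^2}{(1+\delta)^2}\cdot\frac{R^2}{\Phi(C_{init}, V)}.
\]
Substituting $R^2 = \frac{\veps^2}{41}\cdot\frac{\Phi(C_{init}, X_i)}{|X_i|}$ and then the Case II inequality yields $R^2 > \frac{\veps^3}{246\alpha k}\cdot\frac{\Phi(C_{init}, V)}{|X_i|}$. The $\Phi(C_{init}, V)$ factors cancel, leaving $\pr[x=p] \geq \frac{(1-\delta)^2}{(1+\delta)^2}\cdot\frac{\veps^3}{246\alpha k|X_i|}$. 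Finally, the elementary inequality $\frac{(1-\delta)^2}{(1+\delta)^2} \geq (1-\delta)^4$ (equivalent to $(1-\delta^2)^2 \leq 1$) produces the clean form with $\gamma = \frac{(1-\delta)^4\veps^3}{246\alpha k}$.

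The main obstacle is purely bookkeeping: making sure the $(1\pm\delta)$ factors from the sampling probability and the $(1 - \veps^2/41)$ factor from discarding $X_i^{near}$ are absorbed correctly so that the clean constants $8$ and $246$ in the statement are actually recovered. There is no conceptual difficulty; all the geometric work has been done in defining $R$ so that $X_i^{near}$ contributes a negligible fraction of $\Phi(C_{init}, X_i)$, and the Case II hypothesis is exactly what we need to convert a statement about $\Phi(C_{init}, X_i)$ into one about $\Phi(C_{init}, V)$.
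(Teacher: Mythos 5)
Your approach is essentially the paper's, with one minor structural difference and one concrete arithmetic issue worth flagging.

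The structural difference: in part~(1) you restrict the numerator to $X_i^{far}$ and then lower-bound $\sum_{p\in X_i^{far}} D^2(p,C_{init})$ by $\Phi(C_{init},X_i)-|X_i|R^2$, whereas the paper writes $\pr[x\in X_i^{far}]=\pr[x\in X_i]-\pr[x\in X_i^{near}]$ and bounds the two probabilities separately. Your version is in fact slightly tighter, because you apply the single factor $\frac{(1-\delta)^2}{(1+\delta)^2}$ to the whole difference rather than applying $\frac{(1-\delta)^2}{(1+\delta)^2}$ to one term and the unfavorable $\frac{(1+\delta)^2}{(1-\delta)^2}$ to the other. Part~(2) of your proof is line-for-line the same as the paper's.

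The arithmetic issue: your claim that the factor $\frac{(1-\delta)^2}{(1+\delta)^2}\,(1-\veps^2/41)$ can be absorbed by changing the denominator from $6$ to $8$ does not hold over the stated range $\delta<1/2$. At $\delta$ near $1/2$ you have $\frac{(1-\delta)^2}{(1+\delta)^2}\approx 1/9$, so your bound is roughly $\frac{\veps}{55\alpha k}$, not $\frac{\veps}{8\alpha k}$; you would need the absorbed factor to be at least $3/4$. Worth noting, the paper's own proof of this step is also not airtight: it silently uses $\pr[x\in X_i]\ge \Phi(C_{init},X_i)/\Phi(C_{init},V)$ without the $\frac{(1-\delta)^2}{(1+\delta)^2}$ factor, and with that factor included its subtraction argument would likewise fail to deliver the constant $8$ for $\delta$ close to $1/2$. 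So the stated constant in the lemma is optimistic in both derivations. This is harmless downstream—the constant only affects $\gamma$ up to a fixed factor, which is swallowed in the $\tilde O$ of the final theorems—but you should either state the bound with a worse constant (something like $\frac{(1-\delta)^2}{(1+\delta)^2}\cdot\frac{\veps}{7\alpha k}$, or $\frac{\veps}{60\alpha k}$ after plugging in $\delta\le 1/2$), or restrict to small $\delta$, rather than asserting the $8$ follows as written.
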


\begin{proof}
Since points are $\D^2$-sampled with $\D$ being $\delta$-close to $D$, we
note that $\sum_{p \in X_i^{near}} \pr[x=p] \leq \frac{(1+\delta)^2}{(1-\delta)^2} \cdot \frac{R^2}{\Phi(C_{init}, V)} \cdot |X_i| \leq \frac{(1+\delta)^2}{(1-\delta)^2} \cdot \frac{\veps^2}{41} \cdot \frac{\Phi(C_{init}, X_i)} {\Phi(C_{init}, V)}$.
Therefore, the fact that we are in case~II implies that:
$$\pr[x \in X_i^{far}] \geq \pr[x \in X_i] - \pr[x \in X_i^{near}] \geq \frac{\Phi(C_{init}, X_i)}{\Phi(C_{init}, V)} - \frac{(1+\delta)^2}{(1-\delta)^2} \cdot \frac{\veps^2}{41} \cdot \frac{\Phi(C_{init}, X_i)} {\Phi(C_{init}, V)} \geq \frac{\veps}{8 \alpha k}.$$

\noindent
Also, if $x \in X_i^{far}$, then $\Phi(C_{init}, \{x\}) \geq R^2=\frac{\veps^2}{41} \cdot \frac{\Phi(C_{init}, X_i)}{|X_i|}$.
Therefore the probability that a point $p \in X_i^{far}$ gets $\D^2$-sampled is,
\begin{eqnarray*}
\pr[x = p] \geq \frac{(1-\delta)^2}{(1+\delta)^2} \cdot \frac{\Phi(C_{init}, \{x\})}{\Phi(C_{init}, V)} &\geq& \frac{(1-\delta)^2}{(1+\delta)^2} \cdot \frac{\veps}{6 \alpha k} \cdot \frac{R^2}{\Phi(C_{init}, X_i)} \\
&\geq& \frac{(1-\delta)^2}{(1+\delta)^2} \cdot \frac{\veps}{6 \alpha k} \cdot \frac{\veps^2}{41} \cdot \frac{1}{|X_i|} \\
&\geq& \frac{(1-\delta)^2}{(1+\delta)^2} \cdot \frac{\veps^3}{246 \alpha k} \cdot \frac{1}{|X_i|} \geq \frac{(1-\delta)^4 \veps^3}{246 \alpha k} \cdot \frac{1}{|X_i|}.
\end{eqnarray*}
This completes the proof of the lemma.
\end{proof}

Let $O_1, \ldots O_{\rho}$ be $\rho$ points sampled independently using $\D^2$-sampling w.r.t. $C_{init}$.
We construct a new set of random variables $Y_1, \ldots, Y_{\rho}$.
Each variable $Y_u$ will depend on $O_u$ only and will take values either in $X_i'$ or will be $\nl$.
These variables are defined as follows: if $O_u \notin X_i^{far}$, we set $Y_u$ to  $\nl$. 
Otherwise, we assign $Y_u$ to one of the following random variables with equal probability:
(i) $O_u$ or (ii) a random element of the multi-set $c(X_i^{near})$.
The following observation follows from Lemma~\ref{lem:osample}.

\begin{corollary}
\label{cor:osample}
For a fixed index $u$, and an element $x \in X_i'$, $\pr[Y_u = x] \geq \frac{\gamma'}{|X_i'|},$ where $\gamma'=\gamma/2$.
\end{corollary}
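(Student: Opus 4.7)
The plan is to prove this directly by a two-case analysis on where $x$ sits in the multi-set $X_i' = X_i^{\text{far}} \cup \{c(p) : p \in X_i^{\text{near}}\}$, using only the definition of $Y_u$ and the two bounds already established in Lemma~\ref{lem:osample}. A preliminary observation to exploit is that $|X_i'| = |X_i^{\text{far}}| + |X_i^{\text{near}}| = |X_i|$, so the denominators on the two sides of the target inequality are interchangeable.

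In the first case, $x \in X_i^{\text{far}}$. By construction, $Y_u = x$ iff $O_u = x$ and the independent fair coin flip (done when $O_u \in X_i^{\text{far}}$) selects the ``take $O_u$'' branch. Hence $\pr[Y_u = x] = \tfrac12 \cdot \pr[O_u = x]$, and by the second part of Lemma~\ref{lem:osample} this is at least $\tfrac{1}{2} \cdot \tfrac{\gamma}{|X_i|} = \tfrac{\gamma'}{|X_i'|}$, which is exactly what is claimed.

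In the second case, $x$ is a fixed position in the multi-set $c(X_i^{\text{near}})$. Now $Y_u = x$ requires three independent things to happen: $O_u \in X_i^{\text{far}}$, the fair coin picks the ``uniform from $c(X_i^{\text{near}})$'' branch, and the uniform choice lands on that particular position. Using the first part of Lemma~\ref{lem:osample}, this gives
\[
\pr[Y_u = x] \;\geq\; \frac{\veps}{8\alpha k}\cdot\frac12\cdot\frac{1}{|X_i^{\text{near}}|} \;=\; \frac{\veps}{16\alpha k\,|X_i^{\text{near}}|}.
\]
It then remains to verify $\tfrac{\veps}{16\alpha k\,|X_i^{\text{near}}|} \geq \tfrac{\gamma'}{|X_i'|}$. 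Since $|X_i^{\text{near}}| \leq |X_i| = |X_i'|$, it suffices to check $\tfrac{\veps}{16\alpha k} \geq \gamma' = \tfrac{(1-\delta)^4\veps^3}{492\alpha k}$, i.e.\ $\tfrac{492}{16(1-\delta)^4\veps^2} \geq 1$, which holds comfortably for every $\veps \leq 1/2$ and $\delta \leq 1/2$.

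There is no real obstacle here; the only subtlety worth being careful about is the multi-set interpretation in the second case (an element of $c(X_i^{\text{near}})$ with multiplicity $r$ truly contributes $r$ distinct positions, and the bound is per position, which is precisely what is needed for Corollary~\ref{cor:osample} to be used downstream as an ``almost uniform sampling from $X_i'$'' statement). Everything else is straightforward arithmetic combining the two bounds already proved.
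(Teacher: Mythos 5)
Your proof is correct and follows the same two-case decomposition ($x \in X_i^{\mathrm{far}}$ versus $x \in c(X_i^{\mathrm{near}})$) that the paper itself uses, with the same reliance on the two bounds from Lemma~\ref{lem:osample}; the only cosmetic difference is that you make the final comparison $\frac{\veps}{16\alpha k} \geq \gamma'$ explicit where the paper absorbs it into a chain of inequalities. Nothing to fix.
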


\begin{proof}
If $x \in X_i^{far}$, then we know from Lemma~\ref{lem:osample} that $O_u$ is $x$ with probability at least $\frac{\gamma}{|X_i'|}$ (note that
$X_i'$ and $X_i$ have the same cardinality). 
Conditioned on this event, $Y_u$ will be equal to $O_u$ with probability $1/2$.
Now suppose $x \in c(X_i^{near})$. Lemma~\ref{lem:osample} implies that $O_u$ is an element of $X_i^{far}$ with probability at least $\frac{\veps}{8 \alpha k}$.
Conditioned on this event, $Y_u$ will be equal to $x$ with probability at least $\frac{1}{2} \cdot \frac{1}{|c(X_i^{near})|}$. 
Therefore, the probability that $O_u$ is equal to $x$ is at least $\frac{\veps}{8 \alpha k} \cdot \frac{1}{2|c(X_i^{near})|} \geq \frac{\veps}{16 \alpha k |X_i'|} \geq \frac{\gamma'}{|X_i'|}$.
\end{proof}

Corollary~\ref{cor:osample} shows that we can obtain samples from $X_i'$ which are nearly uniform (up to a constant factor).
To convert this to a set of uniform samples, we use the idea of~\cite{jks14}.
For an element $x \in X_i'$, let $\gamma_x$ be such that $\frac{\gamma_x}{|X_i'|}$ denotes the probability that the random variable $Y_u$ is equal to $x$ (note that this is independent of $u$).
Corollary~\ref{cor:osample} implies that $\gamma_x \geq \gamma'$.
We define a new set of independent random variables $Z_1, \ldots, Z_{\rho}$.
The random variable $Z_u$ will depend on $Y_u$ only.
If $Y_u$ is $\nl$, $Z_u$ is also $\nl$.
If $Y_u$ is equal to $x \in X_i'$, then $Z_u$ takes the value $x$ with probability $\frac{\gamma'}{\gamma_x}$, and $\nl$ with the remaining probability.
We can now prove the key lemma.

\begin{lemma}\label{lem:final}
Let $\rho$ be $\frac{256}{\gamma' \cdot \veps}$, and $m''$ denote the mean of the non-null samples from $Z_1, \ldots, Z_{\rho}$. Then, with a probability at least $(3/4)$,
$\Phi(m'', X_i') \leq (1+ \frac{\veps}{16}) \cdot \Delta(X_i')$.
\end{lemma}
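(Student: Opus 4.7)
The plan is to exploit the fact that the rejection-style construction $Y_u \mapsto Z_u$ was designed precisely to make the non-null $Z_u$'s i.i.d.\ uniform on $X_i'$. For each $x \in X_i'$, a direct computation gives
\[
\pr[Z_u = x] \;=\; \frac{\gamma_x}{|X_i'|} \cdot \frac{\gamma'}{\gamma_x} \;=\; \frac{\gamma'}{|X_i'|},
\]
so $\pr[Z_u \neq \nl] = \gamma'$ and, conditioned on $Z_u \neq \nl$, the variable $Z_u$ is uniformly distributed on $X_i'$. If $N$ denotes the (random) number of non-null samples among $Z_1, \ldots, Z_\rho$, then $N$ is a sum of $\rho$ independent Bernoulli$(\gamma')$ random variables, with $\E[N] = \rho \gamma' = 256/\veps$.

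Next I would argue that $N$ is not much smaller than its expectation. A standard multiplicative Chernoff bound (with deviation factor $1/2$) gives
\[
\pr\!\left[N \leq \tfrac{128}{\veps}\right] \;\leq\; \exp(-32/\veps) \;\leq\; \tfrac{1}{8},
\]
so with probability at least $7/8$ we collect $N \geq 128/\veps$ i.i.d.\ uniform samples from $X_i'$.

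Conditioned on any such $N$, the non-null samples form an i.i.d.\ uniform sample from $X_i'$ of size $N$, so Lemma~\ref{lemma:inaba} applies directly to $Y = X_i'$. Invoking it with failure parameter $1/8$, we get
\[
\Phi(m'', X_i') \;\leq\; \Bigl(1 + \tfrac{8}{N}\Bigr) \cdot \Delta(X_i')
\]
with probability at least $7/8$, and substituting $N \geq 128/\veps$ yields the bound $(1 + \veps/16) \cdot \Delta(X_i')$. A union bound over the concentration failure and the Inaba failure gives overall success probability at least $1 - 1/8 - 1/8 = 3/4$, which is exactly what the lemma claims.

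The proof is essentially routine once the $Z_u$'s are identified as uniform samples conditioned on non-nullness; the only mild subtlety is that the \emph{count} $N$ is random, so Inaba's lemma has to be applied conditionally on $N$ and then combined with the tail bound on $N$ via a union bound. I expect no real obstacle, only the bookkeeping of choosing the split $1/8 + 1/8$ between the two failure modes so that the final probability lands at $3/4$ and the value of $\rho$ is exactly the $256/(\gamma' \veps)$ promised in the statement.
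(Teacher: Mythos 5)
Your proof is correct and follows essentially the same route as the paper: condition on the number of non-null $Z_u$'s, apply a Chernoff bound to ensure this count is at least $128/\veps$ with probability $7/8$, then invoke Lemma~\ref{lemma:inaba} with failure parameter $1/8$ and combine via a union bound. The only cosmetic difference is that the paper introduces auxiliary random variables $Z_u'$ (coin toss, then uniform element of $X_i'$) and argues the joint distribution is identical, whereas you condition on the count $N$ directly and rely on exchangeability to conclude the non-null samples are i.i.d.\ uniform; both formalizations of the same conditioning step are valid.
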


\begin{proof}
Note that a random variable $Z_u$ is equal to a specific element of $X_i'$ with probability equal to $\frac{\gamma'}{|X_i'|}$.
Therefore, it takes $\nl$ value with probability $1-\gamma'$.
Now consider a different set of iid random variables $Z_u'$, $1 \leq u \leq \rho$ as follows: each $Z_u$ tosses a coin with a probability of Heads being $\gamma'$.
If we get Tails, it gets value $\nl$ otherwise, it is equal to a random element of $X_i'$. 
It is easy to check that the joint distribution of the random variables $Z_u'$ is identical to that
of the random variables $Z_u$.
Thus, it suffices to prove the statement of the lemma for the random variables $Z_u'$.

Now we condition on the coin tosses of the random variables $Z_u'$.
Let $n'$ be the number of random variables which are not $\nl$.
($n'$ is a deterministic quantity because we have conditioned on the coin tosses).
Let $m''$ be the mean of such non-$\nl$ variables among $Z_1', \ldots, Z_{\rho}'$.
If $n'$ happens to be larger than $\frac{128}{\veps}$, Lemma~\ref{lemma:inaba} implies that with probability at least $(7/8)$,
$\Phi(m'', X_i') \leq (1+ \frac{\veps}{16}) \cdot \Delta(X_i')$.

Finally, observe that the expected number of non-$\nl$ random variables is $\gamma' \cdot \rho \geq \frac{256}{\veps}$.
Therefore, with probability at least $\frac{7}{8}$ (using Chernoff-Hoeffding), the number of non-$\nl$ elements will be at least $\frac{128}{\veps}$.
\end{proof}

Let $C^{(\rho)}$ denote the multi-set obtained by taking $\rho$ copies of each of the centers in $C_{init}$. 
Now observe that all the non-$\nl$ elements among $Y_1, \ldots, Y_{\rho}$ are elements of $\{O_1, \ldots, O_{\rho}\} \cup C^{(\rho)}$, and so the same must hold for $Z_1, \ldots, Z_{\rho}$. 
Moreover, since we only need a uniform subset of size $\tau = \frac{128}{\veps}$, $C_i'$ suffices instead of $C^{(\rho)}$.
This implies that in steps 8-9 of the algorithm, we would have tried adding the point $m''$ as described in Lemma~\ref{lem:final}. 
This means that $M_i$ contains a subset $T_i$ such that $\Phi(\mu(T_i), X_i) \leq (1+\frac{\veps}{2})\cdot \Delta(X_i)$ with probability at least $3/4$.
This concludes the proof of Theorem~\ref{thm:keythm}.

\section{Additional Experiments}\label{sec:add-exp}
In this section, we give additional experiments and some of the details of the experiments in Section~\ref{sec:experiments}.

\subsection{Larger datasets and $\mathtt{AFKMC^2}$}
We have added experiments on datasets significantly larger than the ones experimented on in Section~\ref{sec:experiments} to further elaborate on the dependence of $\zeta, k, d$ and $N$ on \texttt{QI-k-means++}. The conditions under which the experiments were performed are the same as before. We also compare with the $\mathtt{AFKMC^2}$ algorithm from \cite{blhk16b}. In our implementation, we used the Markov chain length to be $m = 200$ as suggested by \cite{blhk16b} for all datasets.

\begin{table}[h!]
\centering
\begin{tabular}{|c|c|c|p{2.5cm}|c|}
\hline 
Dataset & $N$ &  $d$ & \texttt{QI-k-means++}  Setup Time (s) & $\mathtt{AFKMC^2}$ Setup Time (s)\\ 
\hline
KDD \cite{kdd} & $145,751$ &$74$ & $1.1$ & $0.2$ \\
\hline 
SUSY \cite{susy} & $5,000,000$ & $18$  & $11.2$ & $2.5$ \\
\hline
\end{tabular}
\caption{ Details of the datasets experimented on and corresponding pre-processing times}
\label{table:kdd-susy}
\end{table}

We first compare the runtime for sampling (without taking into account the pre-processing step) for a range of values of $k$. The speedups are given in Table~\ref{table:kdd-susy-1}. For a better comparison, we also provide plots for the \textit{cumulative runtime} (done for 9 values of $k \in \{ 2,..,10\}$ ), which also includes the pre-processing cost.

\begin{table}[h!]
\centering
\begin{tabular}{|c|c|c|c|c|c|c|c|c|c|c|}
\hline
Dataset & Algorithm &$k = $ 5 & $k = $10 & $k = $50 & $k = $100 & $k = $200\\
\hline 
KDD & \texttt{QI-k-means++} & $1.9 \times 10^{-4}$ & $7.6 \times 10^{-4}$& $1.9 \times 10^{-2}$ & $7.8 \times 10^{-2}$ & $3.2 \times 10^{-1}$\\
\hline 
KDD & \texttt{k-means++} & $4.6 \times 10^{-1}$ & $9.3 \times 10^{-1}$& $4.7$ & $9.6$& $19.4$\\
\hline 
KDD & $\mathtt{AFKMC^2}$ & $3.2 \times 10^{-3}$ & $ 1.2 \times 10^{-2}$& $2.7 \times 10^{-1}$ & $1.1$& $4.3$\\
\hline
SUSY & \texttt{QI-kmeans++} & $1.8 \times 10^{-4}$& $5.6 \times 10^{-4}$ &  $1.6 \times 10^{-2}$& $8.7 \times 10^{-2}$ & $4.3 \times 10^{-1}$\\
\hline
SUSY & \texttt{k-means++} & $5.0$ & $9.7$ & $48$ & $95$ &  $195$\\
\hline
SUSY & {$\mathtt{AFKMC^2}$} & $2.4 \times 10^{-3}$ & $6.5 \times 10^{-3}$& $8.8 \times 10^{-2}$ & $3.2 \times 10^{-1}$& $1.2$\\
\hline
\end{tabular}
\caption{ Runtime dependence (in seconds) for performing a single clustering on $k$}
\label{table:kdd-susy-1}
\end{table}

\begin{figure}[ht] 
    \centering
    \begin{minipage}[b]{0.45\textwidth}
        \centering
        \includegraphics[scale=0.3]{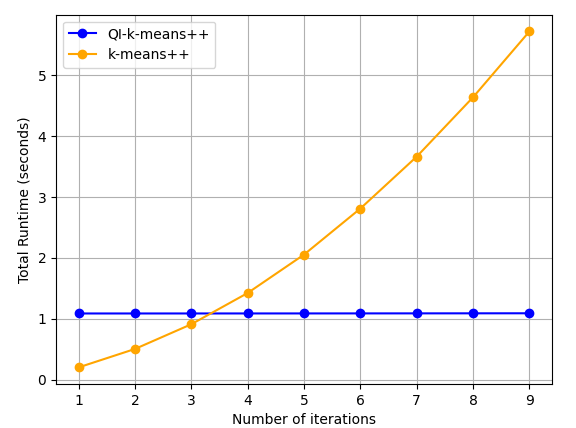}
        \caption{\small Cumulative runtime plot for KDD}
        \label{fig:kdd}
    \end{minipage}
    \hfill
    \begin{minipage}[b]{0.45\textwidth}
        \centering
        \includegraphics[scale=0.3]{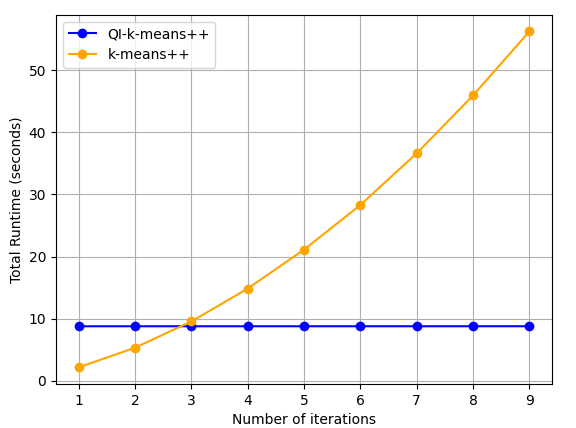}
        \caption{\small Cumulative runtime plot for SUSY}
        \label{fig:susy}
    \end{minipage}
\end{figure}

\textbf{Large Datasets :} For large datasets, the total runtime is dominated by the pre-processing cost.  After the pre-processing, sampling the cluster centers becomes extremely fast and is several orders of magnitude faster than \texttt{k-means++}. This can also be seen by the cumulative runtime plots in Figures~\ref{fig:kdd} and \ref{fig:susy} - the plot for \texttt{QI-k-means++} is almost constant. This shows that \texttt{QI-k-means++} scales well for large datasets since the gain from logarithmic dependence on $N$ outweighs the poor dependence on $\zeta, k$, and $d$. As for the dependence on $k$, we can see from Table~\ref{table:kdd-susy-1} that the quadratic dependence is immediately visible, but the runtime is still significantly less than that of \texttt{k-means++} even for intermediate values of $k$. Hence, such datasets lie in the advantageous regime of \texttt{QI-k-means++}. This contrasts small datasets with larger aspect ratios, as elucidated in Section~\ref{sec:experiments}.

\textbf{Comparison with $\mathtt{AFKMC^2}$ :} The $\mathtt{AFKMC^2}$ algorithm also consists of a pre-processing step which takes $O(Nd)$ time, and a main sampling step, which takes $O(mk^2d)$ number of samples from the initial probability distribution, where $m$ is the \textit{markov chain length}. We note two major tradeoffs between $\mathtt{AFKMC^2}$ and \texttt{QI-k-means++}. The first is that the sampling step itself is empirically faster for \texttt{QI-k-means++}, while the pre-processing step is more costly for \texttt{QI-k-means++} since it requires the setup of \textit{sample and query access} data structures.  But the \textit{SQ} data structure allows the modification/addition of data points in $O(\log Nd)$ time without repeating the pre-processing step.  In contrast,  $\mathtt{AFKMC^2}$'s pre-processing step calculates the initial probability distribution for the Markov chain, which requires calculating the distance of each point from an initial randomly chosen point. Hence, updating or adding a data point will require repeating the pre-processing step.





\subsection{Additional Details}

We report the variance of the clustering costs mentioned in Section~\ref{sec:experiments} (recall that the experiments were conducted over five runs). The seeds were chosen randomly for each run.

\begin{table}[h!]
\centering
\begin{tabular}{|c|c|c|c|c|c|c|c|c|c|c|}
\hline
k & 2 & 3 & 4 & 5 & 6 & 7 & 8 & 9 & 10 \\
\hline
QI-k-means++ & 1.89 & 3.19 & 0.97 & 1.36 & 1.44  & 1.24  & 0.17 & 0.20  & 0.47 \\
\hline
k-means++ & 1.93  & 3.29 & 1.90 & 1.44 & 1.57 & 1.09 & 0.73 & 0.32 & 0.52  \\
\hline
\end{tabular}
\caption{\small Variance of Clustering cost for binarized MNIST (values are scaled down by a factor of $10^{11}$)}
\label{table:kmeans}
\end{table}

\begin{table}[h!]
\centering
\begin{tabular}{|c|c|c|c|c|c|c|c|c|c|c|}
\hline
k & 2 & 3 & 4 & 5 & 6 & 7 & 8 & 9 & 10 \\
\hline
QI-k-means++ & 246523 & 48805 & 335 &  830 & 236 & 99  & 72 & 55 & 51  \\
\hline
k-means++ & 423718 & 27544  & 420 & 561 & 765 &54  &  93 & 35 & 47  \\
\hline
\end{tabular}
\caption{\small Variance of Clustering cost for IRIS (values are rounded to 2 decimal places)}
\label{table:kmeans}
\end{table}
\begin{table}[h!]
\centering

\begin{tabular}{|c|c|c|c|c|c|c|c|c|c|c|}
\hline
k & 2 & 3 & 4 & 5 & 6 & 7 & 8 & 9 & 10 \\
\hline
QI-k-means++ & 3.77 & 4.05 & 1.44 & 5.96 & 1.10 & 2.37  & 1.10 & 0.57 & 1.07  \\
\hline
k-means++ & 1.91 & 2.00  & 2.24  & 3.92 & 2.28 & 0.38 &  1.34 & 1.51 & 1.06   \\
\hline
\end{tabular}
\caption{\small Variance of Clustering cost for DIGITS (values are scaled down by a factor of $10^{10}$)}
\label{table:kmeans}
\end{table}

\end{document}